\title{Free phases of Majorana fermions: 
\\ Tenfold ways compared}
\author{Luuk Stehouwer}
\date{\today}
\newcommand{\Z}{\mathbb{Z}}
\newcommand{\C}{\mathbb{C}}
\newcommand{\R}{\mathbb{R}}
\newcommand{\Spec}{\operatorname{Spec}}
\newcommand{\Aut}{\operatorname{Aut}}
\newcommand{\Hom}{\operatorname{Hom}}
\newcommand{\End}{\operatorname{End}}
\newcommand{\pt}{\operatorname{pt}}
\newcommand{\Mod}{\operatorname{Mod}}
\newcommand{\Spin}{\operatorname{Spin}}
\newcommand{\Pin}{\operatorname{Pin}}
\newcommand{\id}{\operatorname{id}}
\newcommand{\op}{op}
\DeclareMathOperator{\colim}{\operatorname{colim}}
\newcommand{\SPT}{\operatorname{SPT}}
\newcommand{\Pol}{\operatorname{Pol}}
\newcommand{\Grad}{\operatorname{Grad}}
\newcommand{\Diff}{\operatorname{Diff}}
\newcommand{\GrMod}{\mathbf{GrMod}}
\newcommand{\ol}{\overline}
\newtheorem{theorem}{Theorem}[section]
\newtheorem{proposition}[theorem]{Proposition}
\newtheorem{lemma}[theorem]{Lemma}
\newtheorem{corollary}[theorem]{Corollary}
\newtheorem{conclusion}[theorem]{Conclusion}
\theoremstyle{definition}
\newtheorem{definition}[theorem]{Definition}
\theoremstyle{remark}
\newtheorem{remark}[theorem]{Remark}
\newtheorem{example}[theorem]{Example}
\newtheorem{warning}[theorem]{Warning}
\begin{document}

\maketitle

\begin{abstract}
We provide a mathematically rigorous classification of symmetry-protected topological (SPT) phases of neutral free fermions. \sloppy
Our approach utilizes Karoubi triples with negative squares, thought of as polarizations.
We prove that neutral free fermion SPT phases protected by a symmetry algebra $A$ are classified by the real $K$-theory group $K_2(A^{\op})$, and demonstrate how our classification reproduces known results in the presence of charge.
Our formalism also allows for symmetries described by groups, potentially with time-reversal, using the formalism of fermionic groups and their fermionic group $C^*$-algebras.
Our classification extends to positive spatial dimensions and includes weak phases using the crystalline equivalence principle.
Our approach clarifies and unifies various existing tenfold way classifications by establishing their equivalence through Morita equivalences of symmetry algebras.
We expect our classification to be the natural domain for the free-to-interacting map proposed by Freed and Hopkins.
\end{abstract}

\tableofcontents
\newpage


\section{Introduction}

Topological phases of matter have attracted considerable attention due to their novel physical properties and potential applications, particularly in topological quantum computation using Majorana zero modes~\cite{nayak2008non}. 
From a mathematical perspective, free fermionic Symmetry-protected topological (SPT) phases are mostly well-understood, classified rigorously by $K$-theory through a framework developed by Freed--Moore~\cite{freedmoore}, Thiang~\cite{thiangk-theoretic}, Gomi~\cite{gomi2017freed} and others~\cite{kubota2016notes, kellendonk2017c}, following work of Kitaev~\cite{kitaev2009periodic} and Altland--Zirnbauer~\cite{altlandzirnbauer}. 

Meanwhile, one prominent approach to topological phases with strong interactions has been to explore the low-energy behavior through the associated gapped quantum field theory (QFT) in the continuum.
What makes the assumption that the low-energy effective theory is topological appealing for SPT phases specifically, is a computable classification of potential partition functions using bordism invariants~\cite{kapustin2014symmetry,kapustinfermionicSPT}.
The seminal work of Freed and Hopkins~\cite{freedhopkins} further clarified this approach through the observation that the nondegenerate ground state of the SPT phase requires the zero temperature limit to be an invertible topological QFT.
Moreover, they show that unitary\footnote{The classification of non-unitary invertible field theories is more subtle~\cite{KST, reneesimona, Carmenitft}.} invertible topological QFTs are classified by the associated bordism invariant, thus making a close connection with stable homotopy theory. 
Additionally, a significant advance towards bridging free and interacting phases has been achieved by Freed--Hopkins, who proposed a map between free fermionic and interacting topological phases using the mathematical framework of Thom spectra~\cite[(9.75)]{freedhopkins}.

Despite these advances, several important mathematical aspects concerning \emph{neutral} free fermionic phases have not yet been fully addressed. 
Neutral fermions are relevant for describing topological superconductors such as the Kitaev chain~\cite{kitaev2001unpaired}.
Moreover, interacting SPT classifications using bordism don't necessitate charge assumptions; it is customary to formulate class D using $\Spin$ bordism, as opposed to twisted $\Spin^c$-bordism.
This contrasts sharply with the charge-based formulations common in existing free fermion classifications (such as Freed--Moore and Thiang). 
Although some references have tackled neutral fermions \cite{heinznerzirnbauer, alldridgemaxzirnbauer,ogataCRT, bourne2021locally}, also see \cite[5.1,5.2]{bourne2020}, a systematic, mathematically rigorous classification that clearly compares with the more common charge-conserved setting has been lacking.

In this paper, we fill this gap by developing a rigorous $K$-theoretic classification of free fermionic SPT phases without charge assumptions from first principles. 
 To allow for non-invertible symmetries~\cite{MR4703407}, we assume our free fermions to be protected by a symmetry algebra $A$ as opposed to a group.
 We will always require (even particle-hole) symmetries to commute with the Hamiltonian.
 Mathematically, we allow $A$ to be an arbitrary real $C^*$-algebra with a $\Z_2$-grading encoding potential time-reversing symmetries.
Specifically, we modify the approach of Thiang, utilizing Karoubi triples with negative squares rather than gradations.
More specifically, neutral free fermion systems are typically implemented using Bogoliubov--de-Gennes Hamiltonians, which correspond to Hamiltonians on Fock space which are quadratic in Majorana fields, and so are the correct non-charge-conserving analogue of free Hamiltonians acting on one particle Hilbert space.
Karoubi triples with negative squares are suited to neutral fermions because our flattened gapped Bogoliubov--de-Gennes Hamiltonians will square to $-1$. 
This reasoning results in the group of SPT phases protected by $A$ as proposed in Definition \ref{def:SPT phase}.

\begin{theorem}[Theorem \ref{th:IQPVK-theory}]
    The group of SPT phases protected by $A$ is classified by $K_2(A^{op})$.
\end{theorem}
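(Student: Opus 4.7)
My plan is to identify the group of SPT phases from Definition~\ref{def:SPT phase} with Karoubi's Clifford-algebraic presentation of $K_2(A^{\op})$, and then invoke Karoubi's isomorphism theorem to conclude.

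First I unpack the definition: an SPT phase is represented by a triple $(H, J_0, J_1)$, where $H$ is a $\Z_2$-graded separable Hilbert space carrying a $\ast$-representation of $A$ (encoding the symmetry action), and $J_0, J_1$ are odd skew-adjoint flattened Bogoliubov--de Gennes Hamiltonians that commute with the $A$-action and satisfy $J_0^2 = J_1^2 = -1$, agreeing on a trivially gapped complementary part. Triples are equivalent under stabilization by trivial pairs $(H, J, J)$ and continuous homotopies through such data. Because a commuting symmetry action in a representation is most naturally encoded as a right $A$-module structure, equivalently a left $A^{\op}$-module structure, the classification is governed by $A^{\op}$ rather than $A$.

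Second, I reinterpret each odd skew-adjoint $J_i$ with $J_i^2 = -1$ as providing one extra negative Clifford generator, so that $(H, J_i)$ becomes a module over the graded tensor product $A^{\op} \,\widehat{\otimes}\, \mathrm{Cl}_{0,1}$. A pair of such generators then equips $H$ with the data underlying Karoubi's relative group $K^{0,2}(A^{\op})$. The Karoubi difference construction $(H, J_0, J_1) \mapsto [H, J_0] - [H, J_1]$ descends to equivalence classes of SPT phases and yields a homomorphism from the SPT group to $K^{0,2}(A^{\op})$, whose inverse is supplied by Karoubi's standard representative model.

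Third, I invoke Karoubi's theorem $K^{p,q}(B) \cong K_{q-p}(B)$, which for $p = 0$, $q = 2$ and $B = A^{\op}$ gives the desired identification $K^{0,2}(A^{\op}) \cong K_2(A^{\op})$; this is the algebraic engine of the argument and requires no physical input.

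The crux, as I see it, is not a single deep argument but the bookkeeping in the first two steps: matching the physical equivalence relation on Bogoliubov--de Gennes Hamiltonians (stable homotopy modulo trivially gapped pieces) with Karoubi's homotopy relation on triples; justifying that the appearance of $A^{\op}$ comes precisely from the direction of the commuting symmetry action together with the sign choice $J_i^2 = -1$ (distinguishing it from the gradation case $J_i^2 = +1$ used in Thiang's charge-conserving formulation); and handling the analytic issues (essential self-adjointness, spectral flattening, compact perturbations) that Karoubi's purely algebraic setup suppresses. Once these identifications are pinned down, the theorem follows formally from the Karoubi isomorphism.
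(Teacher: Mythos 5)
Your overall strategy---recognize $\SPT^A$ as a Karoubi-style relative group built from pairs of square-$(-1)$ operators and then appeal to Clifford periodicity---matches the general philosophy of the paper, but as written it has a genuine gap at exactly the point you defer to ``bookkeeping.'' Karoubi's isomorphism $K^{p,q}(B)\cong K_{q-p}(B)$ is a statement about an \emph{ungraded} Banach category with auxiliary Clifford actions implemented by genuine morphisms. Here the symmetry algebra $A$ is itself $\Z_2$-graded and a polarization satisfies $Ja=(-1)^{|a|}aJ$ (Definition \ref{def:freefermion}): it is $A$-\emph{skew}linear, not a morphism of $A$-modules, so $(M,J)$ is a module over the graded tensor product $A\otimes Cl_{-1}$ rather than a $Cl_{-1}$-object in $\Mod_A$, and the classical theorem does not apply off the shelf. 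Your indexing is also off: a Karoubi triple consists of a module together with \emph{two extensions by a single extra generator}, so a pair of polarizations on an $A$-module naively indexes a degree-$(\pm 1)$ group, not a ``$K^{0,2}$''; and deriving the opposite algebra from ``commuting actions are right modules'' does not by itself produce the graded opposite with its twisted involution $(a^{op})^*=(-1)^{|a|}(a^*)^{op}$, which is what actually enters. That the answer is $K_{2}$ of precisely $A^{op}$ is the content of the theorem (the paper flags this shift by two as possibly of independent interest), not a formal consequence of $K^{p,q}\cong K_{q-p}$; indeed the paper only \emph{conjectures} the general spectrum-level statement about relative Karoubi $K$-theory of a functor.

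The paper's proof supplies exactly the two ingredients your proposal is missing. First, the conjugate-module functor $M\mapsto\ol{M}$ of Lemma \ref{lem:barequiv} converts graded $A$-modules into graded $A^{op}$-modules and sends an odd operator $T$ to $\ol{T}=T\circ\epsilon_M$ with $\ol{T}^2=-T^2$; this is the actual source of both the opposite algebra and the conversion of polarizations (square $-1$) into gradations (square $+1$). Second, Lemma \ref{lem:morita} establishes $\SPT^A\cong\SPT^{A\otimes Cl_{1,1}}$ by an explicit ad hoc construction, needed because $A\otimes M_{1|1}(\R)$ is \emph{not} isomorphic to $M_2(A)$ as an ungraded algebra when $A$ has odd elements, so ordinary Morita invariance cannot be quoted. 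Combining the two gives
\[
\SPT^A\cong\SPT^{(A\otimes Cl_{-1})\otimes Cl_{+1}}\cong K_0\bigl((A\otimes Cl_{-1})^{op}\otimes Cl_{+1}\bigr)=K_0(A^{op}\otimes Cl_{+2})=K_2(A^{op}).
\]
To rescue your argument you would need either a graded analogue of Karoubi's $K^{p,q}$ theorem in which the Clifford generators skew-commute with the odd part of $A$, or a reduction to the ungraded statement; either way you end up proving the two lemmas above, so the reduction is not the formality your last paragraph suggests.
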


Here $K_n(B)$ denotes the Karoubi $K$-theory of a real $\Z_2$-graded $C^*$-algebra $B$.\footnote{The reason for the opposite algebra in the theorem is that we use Karoubi's convention for the definition of $\Z_2$-graded $C^*$-algebras, which differs from other conventions, in particular Kasparov's, see Remark \ref{rem:Kasparovconvention}.}
Mathematically, we prove a relationship between Karoubi $K$-theory with negative and positive squares involving a shift by two, which might be of independent mathematical interest, compare \cite{masstermgomiyamashita}.
At first sight, Theorem \ref{th:IQPVK-theory} classifies SPT phases in zero spatial dimensions, but it generalizes to positive spatial dimensions using the crystalline equivalence principle~\cite{crystallineequivalence}, as we discuss in Section \ref{sec:posdim}.

Our framework also allows for symmetry groups as opposed to symmetry algebras through the formalism of fermionic groups $G$ and their associated fermionic group $C^*$-algebras (Definition \ref{def:twisted gp alg}). This allows us to connect directly to twisted $G$-equivariant $K$-theory. 
Moreover, this construction aligns more closely with common interacting phase classifications via twisted spin bordism of classifying spaces $BG$.
We also show how a twisted Peter--Weyl theorem \ref{th: twisted peterweyl} allows one to compute the group of SPT phases from the (real and projective) representation theory of $G$.

As mentioned before, we can then further extend the applicability of our classification framework to spatially extended systems by incorporating translation symmetries into the symmetry group. 
This method also includes the classification of weak topological phases.

In the presence of an additional $U(1)$ symmetry, we demonstrate how our framework recovers known results about the classification of SPT phases for charged fermions:

\begin{theorem}[Theorem \ref{th: main theorem}]
    Let $(G,\phi, \tau,c)$ be an extended QM symmetry class in the sense of Freed--Moore.
    Let $C^*_{uc}(G^\tau)$ be the `unit charge' twisted group $C^*$-algebra (Definition \ref{def:unitcharge}).
    The group of SPT phases protected by $C^*_{uc}(G^\tau)$ graded by $\phi+c$ is isomorphic to the Freed--Moore $K$-theory (Definition \ref{def: Freed--Moore K-theory}), i.e.
    \[
    K_0^c(C^*_{uc}(G^\tau)^{op}) \cong K_2^{\phi + c}(C^*_{uc}(G^\tau)^{op}),
    \]
    where the superscript denotes the $\Z_2$-grading used on the $C^*$-algebra $C^*_{uc}(G^\tau)$.
\end{theorem}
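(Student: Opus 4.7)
The plan is to reduce the theorem to Theorem \ref{th:IQPVK-theory} together with a $K$-theoretic identification realizing a two-fold degree shift. Applying Theorem \ref{th:IQPVK-theory} to the real $\Z_2$-graded $C^*$-algebra $A = C^*_{uc}(G^\tau)$ with its $(\phi+c)$-grading immediately identifies the group of SPT phases protected by this data with $K_2^{\phi+c}(C^*_{uc}(G^\tau)^{op})$, so that the content of the statement lies entirely in producing the natural $K$-theory isomorphism
\[
K_0^c(C^*_{uc}(G^\tau)^{op}) \;\cong\; K_2^{\phi+c}(C^*_{uc}(G^\tau)^{op}).
\]
By Definition \ref{def: Freed--Moore K-theory} the left-hand side is the Freed--Moore $K$-theory, so proving the theorem amounts to showing that the antilinear (time-reversing) part $\phi$ of the grading can be absorbed into a two-fold Bott shift.

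The mechanism is the unit-charge $U(1)$-generator $u$ canonically present in $C^*_{uc}(G^\tau)$ by Definition \ref{def:unitcharge}. This generator is $c$-odd (carrying unit charge), $\phi$-even (not reversing time), and unitary, commuting with $G^\tau$ up to its defining phases. Combined with the antilinear data encoded by $\phi$, one produces two anti-commuting odd self-adjoint elements in the algebra with prescribed squares, witnessing a graded Morita equivalence of real $\Z_2$-graded $C^*$-algebras of the form
\[
\bigl(C^*_{uc}(G^\tau)^{op},\,\phi+c\bigr) \;\sim_{\mathrm{Mor}}\; \bigl(C^*_{uc}(G^\tau)^{op},\,c\bigr)\,\hat\otimes\, \mathrm{Cl}_{p,q},
\]
with $|p-q|=2$ and signature fixed by sign conventions. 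The Karoubi Bott-type formula $K_n(B\,\hat\otimes\, \mathrm{Cl}_{p,q}) \cong K_{n+p-q}(B)$ then converts $K_2^{\phi+c}$ into $K_0^c$, which is the desired identification.

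The main obstacle will be tracking conventions carefully. As flagged in the footnote after Theorem \ref{th:IQPVK-theory}, Karoubi's and Kasparov's conventions for $\Z_2$-graded $C^*$-algebras differ, which is the origin of the opposite algebra appearing throughout; one must verify that this opposite does not introduce spurious signs in the Morita equivalence above. In addition, pinning down the precise signature of the Clifford factor — and hence whether the $\phi$-twist corresponds to $\mathrm{Cl}_{0,2}$ or $\mathrm{Cl}_{2,0}$ — requires a careful audit of how the Freed--Moore data $(G,\phi,\tau,c)$ assembles into $(C^*_{uc}(G^\tau),\,\phi+c)$, and how the Karoubi $K_0$ of $(C^*_{uc}(G^\tau)^{op},\,c)$ matches Definition \ref{def: Freed--Moore K-theory}. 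Once this bookkeeping is in place, the proof reduces to the Morita-theoretic degree shift above.
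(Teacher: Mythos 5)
Your reduction of the statement to Theorem \ref{th:IQPVK-theory} plus an isomorphism $K_0^c \cong K_2^{\phi+c}$ is the right framing, and you correctly identify the unit-charge $U(1)$ as the engine of the degree-two shift. But the mechanism you propose for that shift --- which you yourself flag as the entire content of the theorem --- does not work as described. First, the generator $\pi(i)=e^{i\pi Q/2}$ is \emph{not} $c$-odd: $U(1)_Q$ is connected, so it lies in the kernel of both homomorphisms $\theta,\phi\colon G^\tau\to\Z_2$ and is therefore even for every grading in play ($c$ records charge \emph{reversal}, i.e.\ particle--hole character, not charge). Second, there are in general no ``two anti-commuting odd self-adjoint elements'' in the algebra: for class A ($G^\tau=U(1)_Q$, $\phi$ and $\theta$ trivial) the algebra $C^*_{uc}(G^\tau)=\C$ is purely even, yet the theorem still holds. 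So the asserted graded Morita equivalence $\bigl(C^*_{uc}(G^\tau)^{op},\phi+c\bigr)\sim\bigl(C^*_{uc}(G^\tau)^{op},c\bigr)\,\hat{\otimes}\,Cl_{p,q}$ is not witnessed by anything you construct, and as stated the step fails.

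The paper's actual mechanism is more direct and uses no Clifford factor at all. The key observation (Lemma \ref{lem:polsvsgradings}) is that $\pi(i)$ is an \emph{even} unitary multiplier with $\pi(i)^2=-1$ satisfying $\pi(i)a=(-1)^{\phi(a)}a\pi(i)$; consequently, multiplying an $A$-symmetric polarization $J$ (odd, square $-1$, intertwining for the $\theta=\phi+c$ grading) by $\pi(i)$ produces a Karoubi grading $\epsilon=J\pi(i)$ (square $+1$, intertwining for the $c$ grading) on the \emph{same} module. This gives a direct-sum-preserving homeomorphism $\Pol_A\to\Grad_A$ and hence $\SPT^A\cong K_0^{Kar,c}(A)$ (Corollary \ref{cor:FMvsSPTalgebrasunital}, extended to the nonunital case in Proposition \ref{prop:FMvsSPTalgebrasunital}); combined with Theorem \ref{th:IQPVK-theory} this is the whole proof. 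If you want to rescue your route, the statement to prove is not a Morita equivalence with an external $Cl_{p,q}$, but that two gradings on the same algebra differing by the inner automorphism $\mathrm{Ad}_{\pi(i)}$ with $\pi(i)^2=-1$ have Karoubi $K$-theories differing by a shift of two --- and the bijection $J\mapsto J\pi(i)$ is precisely the proof of that.
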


It is well-known that the tenfold way symmetry classes can be represented by $\Z_2$-graded division algebras $A$ over $\R$~\cite{quantumsyms, geikomoore, baez2020tenfold}.
The importance of the tenfold way in condensed matter is a consequence of the mathematical fact that there are exactly ten irreducible building blocks in the representation theory of semisimple $\Z_2$-graded algebras, so that every well-behaved symmetry splits up into a sum of tenfold way types.
A direct computation shows that the group $K_2(A^{op})$ (and its generalization in higher spatial dimensions) recovers the famous classification tables~\cite{kitaev2009periodic}.
In particular, $K_2(A^{op})$ is the natural domain for the free-to-interacting map proposed by Freed--Hopkins~\cite{freedhopkins} for those ten algebras $A$.
More broadly, we conjecture that this $K$-theoretic group structure is indeed the correct mathematical setting for the comparison with interacting SPT phases protected by general symmetry algebras. 
We provide more evidence of this claim by sketching an explicit construction of the free-to-interacting map in $(0+1)$d in Section \ref{sec: interacting}.

Finally, since our approach encompasses both charged and neutral fermions, it provides a robust platform for comparing and clarifying existing tenfold way classification schemes in the literature.
We demonstrate explicitly that different tenfold way approaches yield equivalent $K$-theoretic classifications because they correspond to Morita-equivalent symmetry algebras, thus unifying the at first sight disparate classification frameworks found in the literature such as \cite{schnyder2008classification, altlandzirnbauer} and \cite{freedhopkins}.

The paper is structured as follows. In Section 2, we establish the physical background and motivation for neutral fermion systems. 
Section 3 rigorously defines free fermionic SPT phases within our $K$-theoretic framework, emphasizing the use of Karoubi triples with negative squares. 
It also reviews the approach to the tenfold way using the classification of $\Z_2$-graded division algebras.
In Section 4, we apply the formalism to fermionic group symmetries by constructing twisted group $C^*$-algebras, in particular resulting in a definition of SPT phase in positive spatial dimension. 
Section 5 explicitly compares our neutral fermion framework with the established charged approaches, specifically those by Freed--Moore~\cite{freedmoore} and Thiang~\cite{thiangk-theoretic}. Section 6 compares the tenfold way approach using $\Z_2$-graded division algebras with the more common approach of \cite{schnyder2008classification}.
Section 7 is an outlook section which discusses consequences for comparing free and interacting phases.

\subsubsection*{Conventions}

If $A$ is a $\Z_2$-graded real or complex $C^*$-algebra, $K_n(A)$ denotes its Karoubi $K$-theory in degree $n \in \Z$.
We allow our SPT phases to be a non-trivial invertible phase after forgetting the symmetry.
With the exception of Section \ref{sec: interacting}, we will only consider SPT phases of free fermions, even when we don't mention this explicitly.

\subsubsection*{Acknowledgements}

I am grateful to Cameron Krulewski for numerous enlightening discussions on the topic of this paper.
I would like to thank my PhD advisor Peter Teichner for many useful discussions and giving opportunities to share my work with other researchers in his group. 
I also thank Guo Chuan Thiang for sharing his understanding of phases of Majoranas and answering several elementary questions on $C^*$-algebras.
Finally I wanted to thank Arun Debray, Theo Johnson-Freyd, Kyle Kawagoe, Lukas M\"uller, Natalia Pacheco-Tallaj, David Penneys, David Reutter and Stephan Stolz for useful input.

\section{Physical motivation}
\label{sec: physical motivation}

\subsection{From one particle to Nambu space}
\label{sec: nambu}

To reveal the mathematical structure behind uncharged quantum systems involving free fermions, we start with usual complex quantum mechanics.
So consider one-particle Hilbert space $V$, a complex separable Hilbert space.
In the Nambu formalism~\cite{MR122485}, one then considers the orthogonal direct sum $W = V \oplus V^*$ called Nambu space, where $V^*$ is the continuous dual.
This space can be viewed as the collection of particle annihilation and creation operators in the (Hilbert completed) Fock space $\bigwedge V = \bigoplus_i \bigwedge^i V$ by mapping $v \in V$ to the creation operator
\[
a_v^\dagger := v \wedge ( - )\colon \bigwedge V \to \bigwedge V.
\]
To a covector $f \in V^*$ we assign the annihilation operator $a_f = \iota_{f}$ given by contracting with $f$.

Nambu space has a canonical nondegenerate symmetric complex bilinear pairing
\[
\{(v_1,f_1), (v_2,f_2)\} = f_1(v_2) + f_2(v_1),
\]
known in mathematics as the hyperbolic form.
This bilinear form corresponds to a restriction of the anticommutator on $B(\bigwedge V)$, so that the operators $a_f^\dagger, a_v$ satisfy the Clifford algebra relations with respect to this bilinear form.
These relations are better known as the canonical anticommutation relations.
Explicitly, in an orthonormal basis $\{e_i\}$ of $V$ with dual basis $\{\epsilon^i\}$, write $a_i^\dagger$ for $a_{e_i}^\dagger$ and $a_i$ for $a_{\epsilon_i}$; then the $a_i, a_i^\dagger$ satisfy the anticommutation relations 
\begin{align*}
    \{a_i,a_j\} = \{a_i^\dagger,a_j^\dagger\} = 0, 
    \quad
    \{a_i,a_j^\dagger\} &= \delta_{ij}. 
\end{align*}
Note that $a_i^\dagger$ is indeed the Hermitian adjoint of $a_i$ as an operator on Fock space, as the notation suggests.

Nambu space also has a canonical real structure (i.e. anti-unitary operator of square one)
\[
\gamma :=
\begin{pmatrix}
0 & RF^{-1} \\
RF & 0
\end{pmatrix}
\]
where $RF\colon V \to V^*$ is the (anti-unitary) Riesz-Fr\'echet isomorphism $v \mapsto \langle v, - \rangle$.
From the perspective of annihilation and creation operators, the real structure maps $a_v$ to $a_v^\dagger$ and vice-versa.
Since in the Fermi sea picture, $a_v$ can be reinterpreted as creating a hole in $V^*$, the operator $\gamma$ is called particle-hole conjugation in \cite{zirnbauerparticlehole}.
There is a real Hilbert space $M = \{ w \in W: \gamma(w) = w \}$ of real vectors in Nambu space.
In the annihilation-creation operator perspective, these are the Majorana operators.
In terms of our chosen basis of $V$, one orthonormal basis of $M$ is
\begin{align*}
    \gamma^+_i = \frac{a_i^\dagger + a_i}{\sqrt{2}} \quad \gamma^-_i = i \frac{a_i^\dagger - a_i}{\sqrt{2}}.
\end{align*}
They are normalized to satisfy the Clifford relations
\begin{align*}
    \{\gamma^+_i,\gamma^+_j\} = \{\gamma^-_i,\gamma^-_j\} = \delta_{ij}
    \\
    \{\gamma^+_i,\gamma^-_j\} = 0.
\end{align*}
There is a relation between the Hilbert space structure $\langle .,. \rangle$, the bilinear form and real structure given by
\[
\langle w_1,w_2 \rangle = \{\gamma(w_1), w_2 \}
\]
for $w_1,w_2 \in W$

A charge-conserving free Hamiltonian is a self-adjoint operator $h\colon V \to V$.\footnote{This operator is possibly unbounded, which can give troubles in defining for example commutators. Since in the end we will mainly be interested in flattened Hamiltonians, we will ignore these difficulties.}
A Hamiltonian is defined only up to shifting the spectrum by an arbitrary real scalar, and we exploit this freedom to set the Fermi energy to zero, simplifying the forthcoming formulas.
The Nambu Hamiltonian $H\colon W \to W$ associated to the one particle Hamiltonian $h$ is then defined to be
\[
H = 
\begin{pmatrix}
h & 0 \\
0 & - RF \circ h \circ RF^{-1}
\end{pmatrix}
\]
where $RF \circ h \circ RF^{-1} = (h^\dagger)^*\colon V^* \to V^*= h^*\colon V^* \to V^*$. 
Notice that the spectrum of $h$ has been doubled by a reflection over the Fermi energy $\Spec H = \Spec h \cup \Spec -h$, so $H$ is typically unbounded both above and below.
A Bogoliubov--de-Gennes (BdG) Hamiltonian is a Nambu Hamiltonian $H\colon W \to W$ as above in which we also allow pair-condensing terms of the following form
\[
H = 
\begin{pmatrix}
h & \Delta \circ RF^{-1} \\
- RF \circ \Delta & - RF \circ h \circ RF^{-1}
\end{pmatrix},
\]
where we need the \emph{gap function} $\Delta\colon V \to V$ to satisfy $\Delta^\dagger = - \Delta$ for $H$ to be self-adjoint.
Moreover, $H$ is imaginary with respect to $\gamma$ in the sense that $\gamma H \gamma = -H$.
Conversely, any imaginary self-adjoint $H \colon W \to W$ is a BdG Hamiltonian.\footnote{A useful fact to verify this and other statements in this section is that for a (not necessarily complex-linear) $U\colon W \to W$ decomposed as
\[
\begin{pmatrix}
A & B \circ RF^{-1} \\
RF \circ C & RF \circ D \circ RF^{-1} 
\end{pmatrix},
\] $U$ is real if and only if $A = D$ and $B = C $.}

Define the charge operator $Q\colon W \to W$ to be $+1$ on $V$ and $-1$ on $V^*$. 
Then a BdG Hamiltonian comes from a charge-conserving Hamiltonian $h\colon V \to V$ (i.e. $\Delta = 0$) if and only if $[H,Q] = 0$.
In other words, $Q$ is a symmetry for $H$ if and only if it is charge-conserving.
BdG Hamiltonians are the correct generalization of free charge-conserving Hamiltonians to the neutral setting.
Indeed, they are in bijection with quadratic Hamiltonians on Fock space, also see Example \ref{ex:secondQfreeham}.

According to Wigner's theorem, a symmetry of a one-particle space $U \colon V \to V$ is either unitary or anti-unitary.
We will extend $U$ to a real operator on $W$ and so we have no choice but to let $U$ act by $RF \circ U \circ RF^{-1}$ on $V^*$:
\[
O = 
\begin{pmatrix}
U & 0 \\
0 & RF \circ U \circ RF^{-1}
\end{pmatrix}
\]
Note that $U$ commutes with charge $Q$.
Conversely, every real unitary or anti-unitary $O\colon W \to W$ that commutes with $Q$ comes from such a $U$.

Particle-hole symmetries on the other hand can be modeled by real linear maps $K\colon W \to W$ that map $V$ to $V^*$ and therefore instead anticommute with $Q$ \cite[Definition 2.1]{zirnbauerparticlehole}.
Such particle-hole symmetries are of the form
\[
K = 
\begin{pmatrix}
0 & \Gamma \circ RF^{-1} \\
 RF \circ \Gamma & 0
\end{pmatrix}.
\]
One source of many confusions is that $\Gamma\colon V \to V$ is unitary if $K\colon W \to W$ is anti-unitary and vice-versa.
It can therefore be unclear what an author means when they speak about a unitary particle-hole reversing symmetry; are they talking about the single-particle operator $\Gamma\colon V \to V$ or about $K\colon W \to W$?

In condensed matter, $K$ is often complex antilinear and sometimes referred to as a `chiral symmetry'.
This is the case for a sublattice symmetry such as in the SSH model, see Example \ref{ex: sublattice}.
It is also sometimes assumed that a particle-hole symmetry is a complex antilinear $\Gamma$ corresponding to a complex linear $K$.
One physical example of such is the charge conjugation operator of a massive relativistic Dirac fermion.
For that reason $K$ is still called `charge conjugation' in the literature on topological phases by analogy, see \cite[Example 3]{zirnbauerparticlehole}.

Another source of confusions is that $\Gamma$ anticommutes with a one particle Hamiltonian $h$ if and only if $K$ commutes with the induced BdG Hamiltonian $H$.
In the literature physicists often work on the single-particle Hilbert space $V$ with a single-particle Hamiltonian $h$ and study particle-hole reversing symmetries through $\Gamma$ and therefore they get $\Gamma h = - h \Gamma$ instead of the more reasonable $K H = H K$.
In this case $\Gamma$ is oftentimes called a \emph{pseudosymmetry} (or \emph{antisymmetry}) of $h$ with corresponding physical symmetry $K$.
Note how $\Gamma$ maps states of energy $E$ to states with energy $-E$, simply because $\gamma$ does.
To allow for this situation, we will define symmetries in Definition \ref{def:freefermion} as real operators $O\colon W \to W$, either unitary or anti-unitary, which commute with $H$.

\begin{example}[sublattice symmetry]
\label{ex: sublattice}
Let $V$ be the one particle space of a tight binding model on a lattice $L$ in space.\footnote{For our purposes $L$ is simply a discrete set without accumulation points.}
For example, $L = \Z^d$ is a square lattice in $d$-dimensional space and $V = \ell^2(\Z^d,U)$, where $U$ is a finite-dimensional complex vector space of local degrees of freedom.
Now assume that this lattice $L$ splits up into two subsets $L_A$ and $L_B$ so that $V = V_A \oplus V_B$ is a sum of Hilbert spaces.
Suppose we are given a Hamiltonian $h$ that only has hopping terms between $A$ and $B$, so $h$ is odd with respect to the grading operator $S\colon V \to V$ given by $1$ on $V_A$ and $-1$ on $V_B$.
We can then form a complex antilinear particle-hole reversing symmetry $K$ by taking $S$ to be our $\Gamma$ as above.
We see that this symmetry is antilinear after second quantization, even though it acts complex linearly on the one particle Hilbert space.
A symmetry of this form is usually called a \emph{sublattice symmetry}.
This might be confusing, since naively a sublattice operation definitely preserves the direction of time, but we remind the reader that the naive sublattice operation $S$ itself is just a pseudosymmetry of $h$.
To get the actual symmetry we had to compose with the particle-hole conjugation $\gamma\colon W \to W$, giving an anti-unitary symmetry of $H$.
\end{example}

\begin{table}[h!]
    \centering
    \begin{tabular}{c|c|c}
        $V$ & $W = V \oplus V^*$ & $M$ 
        \\
        charged one particle space & Nambu space & Majorana space 
        \\ \hline
        $h \colon V \to V$ & $H \colon W \to W$ & $\mathfrak{h} \colon M \to M$
        \\
         one particle Hamiltonian & BdG Hamiltonian & skew-adjoint Hamiltonian $iH$ 
         \\ 
    \end{tabular}
\end{table}

\begin{table}[h!]
    \centering
    \begin{tabular}{c|c}
        $\gamma\colon W \to W$ & $J = \mathfrak{h}/|\mathfrak{h}| = -i H/|H|$ 
        \\ \hline 
        particle hole conjugation (real structure) & Polarization associated to $H$
    \end{tabular}
\end{table}

\begin{table}[h!]
    \centering
    \begin{tabular}{c|c}
        $\Gamma\colon V \to V$ & $K \colon W \to W$ 
        \\ \hline 
        Pseudosymmetry, $\Gamma h = - h \Gamma$ & Associated particle hole symmetry, $KH = HK$
    \end{tabular}
    \caption{Overview of physics notation.}
    \label{tab:notation}
\end{table}

\subsection{Neutral free fermions}
\label{sec:neutralfermion}

In order to get a quantum mechanical picture independent of the single particle space $V$, we turn the logic from the last section upside-down and define uncharged free fermion systems only referring to $W$.
The main idea is that since we are not interested in arbitrary interacting Hamiltonians, we don't need to work on all of Fock space, see Example \ref{ex:secondQfreeham} for further discussion.
We will reintroduce charge in Section \ref{sec:charged}.

The material in this section contrasts with the more common charged situation and has appeared in the literature in various forms under various names such as `neutral', `quasifree', `Majorana' and `self-dual'.
The following definitions are motivated by the exposition in the last section.

\begin{definition}
A \emph{Nambu space} $(W, \gamma)$ consists of a complex Hilbert space $W$ and a real structure $\gamma\colon W \to W$ called \emph{particle-hole conjugation}.
The real separable Hilbert space 
\[
M := \{ w \in W: \gamma(w) = w\}
\]
is the space of \emph{Majorana fields}.
A \emph{free fermion (BdG) Hamiltonian} is an imaginary self-adjoint operator $H\colon W \to W$, so $H^\dagger = H$ and $\gamma H \gamma = -H$.
We say $H$ is \emph{gapped} if $0$ is not in the spectrum of $H$.
\end{definition}

\begin{remark}
\label{rem:majoranavsnambu}
Equivalently, we could have started with a real separable Hilbert space $M$ (interpreted as the classical fermionic mode space, i.e. an odd degree symplectic vector space) and define Nambu space as $W = M \otimes_\R \C$. 
It is a complex Hilbert space and $\gamma(zm) = \bar{z} m$ for $m \in M$ and $z \in \C$ defines a canonical real structure on $W$.
Indeed, the category of vector spaces equipped with a real structure is equivalent to the category of real vector spaces, and this generalizes to Hilbert spaces with anti-unitary real structure~\cite{huybrechtscomplex, complexification}.

From a mathematical point of view it is then natural to define a free fermion Hamiltonian as a skew-adjoint operator $iH\colon M \to M$, an approach we will use in the rest of the paper.
One disadvantage of this perspective is that it only makes sense to talk about eigenstates of fixed real-valued energy inside the complexification $W$.
\end{remark}

Given a free fermion Hamiltonian $H$, we can use the spectral theorem for self-adjoint unbounded operators~\cite{rudin} to pick a projection-valued measure $\pi$ such that
\[
H = \int_{\R} E \, d\pi(E).
\]
If $H$ is gapped, define $W_+ := \pi((0,\infty))$ to be the space of \emph{fields above the gap} and $W_- := \pi((-\infty,0))$ the \emph{fields below the gap}, so that there is an orthogonal splitting $W = W_+ \oplus W_-$. 
Define the \emph{spectrally flattened Hamiltonian} as
\[
\frac{H}{|H|} = \int_{\R} \frac{E}{|E|} \, d\pi(E).
\]
Because $iH$ is real and skew-adjoint, the operator $J := -i \frac{H}{|H|}$ restricts to an orthogonal complex structure on $M$.
Note that if $[O,H] = 0$ is a symmetry, then $O J = \pm JO$, where the sign is negative if $O$ is anti-unitary.
Following common language in geometric quantization of fermions~\cite{MR1183739}, we will call $J$ the \emph{polarization} associated to $H$.

In the study of topological phases of matter, we are only interested in deformation classes of Hamiltonians.
In an appropriate topology, every Hamiltonian can be connected with the corresponding polarization by a continuous path in a similar fashion to \cite[Lemma 5.14]{freedmoore}.
Therefore we might as well flatten our Hamiltonians and study (abstract) polarizations in their own right:

\begin{definition}
A \emph{polarization} is a real orthogonal complex structure on $W$.
\end{definition}

\begin{remark}
This notion of a polarization is directly related to the notions of Lagrangians inside Hilbert spaces, compare \cite[Section 2.2]{whatisanellipticobject}, \cite{ludewiglangrangian}.
    The Lagrangian subspace associated to the complex structure $J$ is $W_+ \subseteq W$. 
\end{remark}

\begin{remark}
In \cite{kennedyzirnbauer} and \cite{alldridgemaxzirnbauer}, the polarization associated to $H$ is called its \emph{quasi-particle vacuum}.
The physical reason is that the data of $J$ is equivalent to a many-body ground state in the sense of Hartee-Fock-Bogoliubov mean field theory (which are also known as quasifree states \cite{araki1971quasifree}), see Section \ref{sec: intermezzo}.
\end{remark}

\subsection{Intermezzo: Hartree-Fock-Bogoliubov second quantization}
\label{sec: intermezzo}

We briefly discuss positive energy second quantization to verify that symmetries of BdG Hamiltonians yield symmetries of second quantized Hamiltonians on many-body Hilbert space.
This discussion is also relevant for the generalization to interacting SPT phases.

Recall that a Bogoliubov--de-Gennes Hamiltonian $H$ is typically unbounded below.
At least when $H$ is gapped however, we can still second quantize it in a many-body Fock space in which the vacuum line is the filled sea of negative energy states:
\[
\mathcal{H} = \bigwedge W_+.
\]
Here $\hat{H}:\bigwedge W_+ \to \bigwedge W_+$ is called a second quantization of $H$ if $[\hat{H}, a_w^\dagger] = a_{Hw}^\dagger$ for all $w \in W_+$. 
In the mathematical literature, $\hat{H}$ is called an \emph{implementer} of $H$.
Up to analytic difficulties when $H$ is unbounded\footnote{It might be better at this point to work with the one-parameter unitary group $e^{iHt}$, since quantizing unitary operators is straightforward. 
Some references that handle the unbounded situation are \cite{wurzbacher2001fermionic}, \cite[Section 3.2]{ottesen2008infinite}, and \cite{kristel2020spinor}.}, we have
\[
\hat{H} (w_1 \wedge \dots \wedge w_N) = \sum_{i = 1}^N w_1 \wedge \dots \wedge w_{i-1} \wedge H w_i \wedge w_{i+1} \wedge \dots \wedge w_N
\] 
if $w_1, \dots w_N \in W_+$.

\begin{example}
\label{ex:secondQfreeham}
    Let $\{e_i\}$ be an orthonormal eigenbasis of a one particle Hamiltonian $h\colon V \to V$ with only strictly positive eigenvalues $E_i$.
    Let $H$ be the charge conserving BdG Hamiltonian associated to $h$ so that $W_+ = V \subseteq V \oplus V^* = W$.
    Then up to analytic difficulties we obtain a collection of decoupled (fermionic) harmonic oscillators of frequency $E_i$
    \begin{equation}
    \label{eq:2ndqham}
    \hat{H} = \sum_i E_i a_i^\dagger a_i.
    \end{equation}
For a general BdG Hamiltonian, formula \eqref{eq:2ndqham} then has to be modified with appropriate particle-hole creation and particle-hole annihilation terms.
In fact, a BdG Hamiltonian is the same as an appropriately Weyl-ordered Hamiltonian which is quadratic in the fields, see \cite[Section 2.1]{heinznerzirnbauer} for details.
\end{example}

Suppose $G$ is a finite group of symmetries of $H$.
By this we mean that $W$ is a representation $R$ of $G$, $\gamma R(g) = R(g) \gamma$ and $R(g)$ commutes with the Hamiltonian.\footnote{In particular, we restrict ourselves to symmetries acting on Nambu space, thereby excluding more interesting symmetries acting only on Fock space. Note that we are not assuming the symmetry acts on one particle space, which would exclude particle-hole symmetries.}
This definition also makes sense for time-reversal symmetries which are required to be complex antilinear.
Then $R(g)$ maps $W_+$ to itself and so we can second quantize\footnote{We use the group-theoretic instead of Lie algebra-theoretic variation of second quantization here. So $\widehat{R}(g):\bigwedge W_+ \to \bigwedge W_+$ is called a second quantization of $R(g)$ if $\widehat{R}(g) a_w^\dagger \widehat{R}(g)^{-1} = a_{R(g)w}^\dagger$ for all $w \in W_+$.} it as
\[
\widehat{R}(g)(w_1 \wedge \dots \wedge w_N) = R(g) w_1 \wedge \dots \wedge R(g) w_N.
\]
Since $R(g)$ is unitary (or anti-unitary), it has operator norm $1$, and so $\widehat{R}(g)$ extends to a bounded operator on $\bigwedge V$.
Note that $\widehat{R}(g)$ is (anti-)unitary if $R(g)$ is (anti-)unitary.
An important observation is that $[\widehat{R}(g), \hat{H}] = 0$, so $\widehat{R}(g)$ is indeed a symmetry of the actual physical many-body Hamiltonian $\hat{H}$.

\begin{remark}
\label{rem:-1vs-1F}
Suppose that in the above setting $g \in G$ satisfies $R(g) = -\id_W$.
We then have that 
\[
\widehat{R}((-1)^F)(w_1 \wedge \dots \wedge w_N) = (-1)^N
\]
counts the number of fermions modulo two, which is the natural supergrading $(-1)^F$ on $\mathcal{H}$.
The conclusion of this discussion is that the second quantization $\widehat{R}$ of a free fermion symmetry $R$ satisfies mathematically similar properties as $R$.
However, one difference is that equations of the form $T^2 = (-1)^F$ in second quantization should be replaced by equations of the form $T^2 = -1$ on $W$.
Since we will not consider interactions in this paper, we will refrain from working on $\mathcal{H}$ in this document.
\end{remark}

If we are given a charge-conserving gapped Hamiltonian $h \colon V \to V$ (or just its polarization), we can decompose the unit charge Nambu space $V \oplus V^*$ further into positive-energy fields and negative energy fields as
\[
V_+ \oplus V_- \oplus V_+^* \oplus V_-^*.
\]
In a condensed matter situation, we might interpret $V$ as the one-particle space of electrons. 
Then $V_+$ is the space of (creation operators of) conduction electrons and $V_-$ the space of valence electrons, i.e. it prescribes the decomposition of the one particle Hamiltonian $h \colon V \to V$ into positive and negative eigenstates.
Note that $W_+ = V_+ \oplus V_-^*$.
In the Hartree-Fock-Bogoliubov mean field theory picture with respect to our preferred quasi-particle vacuum however, we re-interpret the space of valence electron annihilation operators $V_-^*$ as the space of hole excitations (conducting holes in the valence band). 
The act of removing a valence electron from the filled Fermi sea in the original picture is equivalent to adding the quasiparticle excitation in the new  picture. 
Second quantizing $e^{itQ}$ on the positive energy Fock space (Section \ref{sec: intermezzo}) yields 
\[
\widehat{e^{itQ}}(v_1 \wedge \dots \wedge v_p \wedge \phi_1 \dots \wedge \phi_q) = e^{i(p-q)t}
\]
if $v_1 ,\dots, v_p \in V_+$ are conducting electrons and $\phi_1, \dots, \phi_q \in (V_-)^*$ are conducting holes.
So holes have the opposite charge of electrons as expected.
Also note that a second quantized particle-hole symmetry (i.e. a symmetry that anticommutes with $Q$) will map electrons to holes and vice-versa.

The discussion above contrasts with the situation in which we start with a one-particle space $V$ and create the Fock space $\bigwedge V$ in which the vacuum line is the true vacuum
\[
\wedge^0 V \subseteq \bigwedge V.
\]
Namely, here we can second quantize a BdG Hamiltonian if and only if the off-diagonal term $\Delta\colon V \to V^*$ is Hilbert-Schmidt, for example if $H$ is charge-conserving.
Similarly, we can second quantize unitary and anti-unitary real $S\colon W \to W$ on $\bigwedge V$ if and only if $[T, Q]$ is Hilbert-Schmidt.
For example, $\gamma$ can never be second quantized if $V$ is infinite-dimensional because it would have to map the vacuum to a multiple of $e_1 \wedge e_2 \wedge \dots$ for some orthonormal basis $\{e_i\}$ of $V$.
This cannot happen as $e_1 \wedge e_2 \wedge \dots \notin \bigwedge V$ cannot be written as a sum over many-body states with only finitely many particles.

For a more physically relevant example, particle-hole reversing symmetries $K$ anticommute with $Q$ by definition and so can only be second quantized on $\bigwedge V$ if $V$ is finite-dimensional.
Concretely, in case $\dim V = N < \infty$ this is done as follows.
Write $K = \gamma \Gamma$ for some unitary $\Gamma$ with $[\Gamma, Q] = 0$ as in Section \ref{sec: nambu}, then $K$ can be second-quantized as the composition
\[
\bigwedge^n V \xrightarrow{\hat{\Gamma}} \bigwedge^n V \xrightarrow{\hat{\gamma}} \bigwedge^{N-n} V,
\]
where $\hat{\gamma}$ can be shown to be a complex antilinear version of the Hodge star after a (noncanonical) choice of orientation on $V$, see \cite[Sections 3.2,3.3]{zirnbauerparticlehole}.
Note how this symmetry maps a many-body state consisting of $n$ electrons to a many-body state consisting of $n$ holes.

We also briefly connect with the viewpoint of algebraic quantum mechanics.
In the current setting we can form the Clifford algebra $Cl_{alg}(W)$ on the vector space $W$ using the symmetric bilinear form $\{.,.\}$.
Extend the real structure $\gamma$ on $W$ uniquely to a $*$-structure on $Cl_{alg}(W)$.
Because $\langle .,. \rangle = \{ \gamma(.), . \}$ is positive definite, every element of the form $a^* a$ has positive spectrum.
We can define $Cl(W)$ to be its completion to a $C^*$-algebra.
Given a gapped BdG Hamiltonian, the positive energy Fock space on $W_+$ is an irreducible $\Z_2$-graded representation of $Cl(W)$ and every representation is of this form.
See \cite[Section 3]{kristel2020spinor} for mathematical details.

Note that if $W$ is odd-dimensional, $Cl(W)$ has no Fock modules, in other words there exists no gapped BdG Hamiltonian.
This is a mathematical interpretation of the fact that an odd number of Majoranas in one spacetime dimension have an anomaly, see \cite[Section 3.1]{delmastro2021global} and \cite{freed2024odd}.
Of course if we constructed $W$ from a one particle Hilbert space as $V \oplus V^*$, this situation can never occur.

\section{Neutral free fermion SPT phases and \texorpdfstring{$K$}{K}-theory}

\subsection{Symmetry-protected free fermions}

We now turn to the formal implementation of  symmetry algebras $A$ protecting free fermion systems.
As the main example the reader should have the group algebra $\R[G]$ of a finite group in mind, which will recover a system protected by the symmetry $G$.
In Section \ref{sec:groups}, we will study phases protected by symmetry groups in detail.

We will assume $A$ is a $C^*$-algebra to get the appropriate topologies needed to define topological phases.
This requirement is also reasonable from the perspective of unitarity in algebraic quantum mechanics.
We will for now also assume that $A$ is unital and that all homomorphisms preserve the unit, leaving a more detailed discussion to Remark \ref{rem:nonunital}.

The difference between time-preserving and time-reversing symmetries will be encoded by a $\Z/2$-grading on $A$, and the degree of $a \in A$ will be denoted by $|a| \in \{0,1\}$.
We emphasize that physically this $\Z/2$-grading has no relation with the distinction between bosons and fermions.
In this section, we will assume readers are familiar with the theory of $\Z/2$-graded real $C^*$-algebras, see Appendix \ref{sec: C*} for the aspects we need here.
Physically, we need real as opposed to complex $C^*$-algebras because we work without $U(1)$-charge.

Recall from Section \ref{sec:neutralfermion} that in a quantum mechanical system protected by the symmetry algebra $A$, we can obtain a polarization by multiplying a flattened Hamiltonian by $i$.
Symmetries $a \in A$ should commute with this Hamiltonian for them to second quantize to symmetries that commute with the second quantized Hamiltonian as discussed in Section \ref{sec: intermezzo}.
    Since odd elements of $A$ model time-reversing symmetries which anticommute with $i$, this motivates us to require these to anticommute with $J$ in the following definition.

\begin{definition}
\label{def:freefermion}
Let $A$ be a $\Z/2$-graded $C^*$-algebra and $M$ an $A$-module (without a $\Z/2$-grading).
    An \emph{$A$-symmetric polarization} is a linear map $J\colon M \to M$ such that
    \[
    J^2 = - \id_{M} \quad Ja = (-1)^{|a|} aJ.
    \]
\end{definition}

In physical contexts, we will refer to $A$ as the \emph{symmetry algebra} and $M$ as the \emph{Majorana space protected by $A$}.

\begin{remark}
    Physically, it is reasonable to require $M$ to have some Hilbert space-like structure.
    For the most general connection with $K$-theory, this inner product needs to be an $A$-valued Hilbert module structure on $M$.
    This notion is reviewed in Appendix \ref{sec: C*} and is in general not equivalent to a $\C$-valued inner product, see Remark \ref{rem:repsvsmods}. 
\end{remark}


\begin{remark}
\label{rem:hamsvspol}
    Similarly to the case without the symmetry, $A$-symmetric polarizations are in practice usually flattened gapped $A$-symmetric BdG Hamiltonian $\mathfrak{h}$ on an $A$-module $M$.
    Let $\mathfrak{h}$ be an invertible linear map $\mathfrak{h}\colon M \to M$ such that $\mathfrak{h}a = (-1)^{|a|} a\mathfrak{h}$ for all $a \in A$.
    Suppose $\mathfrak{h}$ is skew-adjoint with respect to a given Hilbert $A_{ev}$-module structure $\langle .,. \rangle$.\footnote{An adjointable operator for an $A$-valued inner product on $M$ would automatically be $A$-linear~\cite[Lemma 15.2.3]{weggeolsen}, so in that setting we would not be able to talk about skew-adjointness of $\mathfrak{h}$.}
    We can then define its flattening as $\mathfrak{h}/|\mathfrak{h}|$.
    Indeed, adjointable operators form a $C^*$-algebra and the function $\Spec \mathfrak{h} \to \{-i,i\}$ given by $x \mapsto x/|x|$ is continuous, so we can use functional calculus.
    Note that $\mathfrak{h}/|\mathfrak{h}|$ is an $A$-symmetric polarization.
    By the argument presented in Section \ref{sec:neutralfermion}, we expect the inclusion of $A$-symmetric polarizations into the space of gapped BdG Hamiltonians to be a deformation retract.
    For topological properties of the space of Hamiltonians, it therefore suffices to study $A$-symmetric polarizations instead.
    We also expect the above considerations to generalize to unbounded Hamiltonians.
\end{remark}

\begin{remark}
\label{rem:symmetricmajoranavsnambu}
    Only after complexifying to $W := M \otimes_\R \C$ can one consider the operator $H:=-i\mathfrak{h}$.
    Then $H$ will be self-adjoint and imaginary in the canonical real structure, compare Remark \ref{rem:majoranavsnambu}.
    If we extend the action of $a \in A$ to $W$ complex-antilinearly for odd elements as motivated by the discussion in Section \ref{sec: nambu}, then $H$ will be genuinely $A$-linear as desired from physics.
\end{remark}

\subsection{The group of free SPT phases}
\label{sec:the group of free SPT}

To mathematically define topological phases of matter, we need to consider deformations of physical systems subordinate to specified symmetry constraints.
In the formalism of this paper, this would mean SPT phases should be defined as path components of the space of certain gapped Hamiltonians.
In other words, we will consider homotopy classes of gapped BdG Hamiltonians commuting with this symmetry.
Note that since we are working with free fermions, stacking is defined by direct sum as opposed to tensor product.

\begin{definition}
Let $(M_1, \mathfrak{h}_1)$ and $(M_2, \mathfrak{h}_2)$ be $A$-symmetric free fermion systems.
Then their \emph{stacking} is the $A$-symmetric free fermion system $(M_1 \oplus M_2, \mathfrak{h}_1 \oplus \mathfrak{h}_2)$.
\end{definition}

Even though the space of $A$-symmetric free fermion systems is a commutative monoid under stacking, it is clearly not a group.
To get the desired connection with $K$-theory, it will be necessary to quotient out by a submonoid of `trivial Hamiltonians'.\footnote{We are not aware of a good physical argument to require this stabilization maneuver. In fact, nontrivial phases that become trivial after stabilization are being studied under the name `fragile' topological phases~\cite{fragile}.}
To do this rigorously, we will consider pairs of Hamiltonians as in \cite{thiangk-theoretic}, which we think of as formal differences.\footnote{The reader should be warned that we do not simply Grothendieck complete the monoid of free fermion systems as this will not give the correct $K$-theory group.}
For technical reasons, we will restrict to finitely generated modules, see Remark \ref{rem:infinitedimKaroubi} for further discussion.
The resulting definition of SPT phase is as follows.

\begin{definition}
    Let $A$ be a $\Z_2$-graded $C^*$-algebra and $M$ a finitely generated\footnote{Whenever we say finitely generated, we mean in the algebraic sense, see \cite[Section 15.4]{weggeolsen} for a discussion.}  projective $A$-module.
    Let $\Pol_A(M)$ be the space\footnote{For details on how to obtain the topology on this space, see Section \ref{sec:Karoubi}.} of $A$-symmetric polarizations on $M$.
    \begin{enumerate}
        \item Let $\Diff_A(M) = \Pol_A(M) \times \Pol_A(M)$, which we think of as formal differences of polarizations;
        \item A formal difference $(J_1,J_2) \in \Diff_A(M)$ of $A$-symmetric polarizations on $M$ is called \emph{elementary} if $J_1$ and $J_2$ are homotopic in the space $\Pol_A(M)$;
    \item Two formal differences of $A$-symmetric polarizations $(J_1,J_2) \in \Diff_A(M)$ and $(J_1',J_2') \in \Diff_A(M')$ are \emph{isomorphic} if there is an $A$-module isomorphism  $M \to M'$ which intertwines $J_1$ with $J_1'$ and $J_2$ with $J_2'$;
    \item The sum of $T = (J_1,J_2) \in \Diff_A(M)$ and $T' = (J_1',J_2') \in \Diff_A(M')$ is defined as 
    \[
    T \oplus T' := (J_1 \oplus J_1', J_2 \oplus J_2') \in \Diff_A(M \oplus M').
    \]
    \end{enumerate}
\end{definition}

\begin{definition}
    \label{def:SPT phase}
    The set $\SPT^A$ of \emph{(free fermionic) SPT phases protected by $A$} is the quotient of the monoid of differences of $A$-symmetric polarizations quotiented by the equivalence relation
    \[
T_1 \sim  T_2 \iff \exists T_1',T_2' \text{ elementary s.t. } T_1 \oplus T_1' \cong T_2 \oplus T_2'.
\]
\end{definition}

In Section \ref{sec: K-theory for free}, we will see how this definition is related to $K$-theory.

\begin{remark}
\label{rem:infinitedimKaroubi}
    It is possible to allow infinite-dimensional separable Hilbert modules in the above definition. 
    This is physically well-motivated, because one particle Hilbert space is typically not finite rank over the symmetry algebra.
    However, in that case it is necessary to put a finiteness condition on the difference between the two polarizations to avoid the Eilenberg swindle.
    For example, one could work with compact perturbations of a fixed Hamiltonian.
    Since we expect the result to be isomorphic\footnote{See \cite[Section 4.3]{gomi2017freed} for a result of this type in the setting of $K$-theory of topological stacks.}, we will not follow this direction here for simplicity.
\end{remark}

\begin{remark}
    Note that for an element $[T] = [(J_1, J_2)] \in \SPT^A$, we do not require any compatibility conditions between $J_1$ and $J_2$.
\end{remark}

\begin{remark} 
The group $\SPT^A$ is Karoubi's relative $K$-theory of the functor from the Banach category of (ungraded) $(Cl_{-1} \otimes A)$-modules to $A$-modules.
    We conjecture that the relative Karoubi $K$-theory of a functor $\phi \colon \mathcal{C} \to \mathcal{C}'$ is given by the homotopy groups of the fiber of an induced map $K(\phi) \colon K(\mathcal{C}) \to K(\mathcal{C}')$. Here $K(\mathcal{C})$ is the periodic $K$-theory spectrum of a Banach category.
    Karoubi $K$-theory will be reviewed in more detail in the next section.
\end{remark}

We take a look at the case without symmetries, giving another motivation for the approach using pairs of Hamiltonians. 
See \cite{thiang2015topological} for a similar argument using class AIII topological insulators.

\begin{example}[class D]
Consider a BdG system in spatial dimension zero with $A = \R$.
According to standard tenfold way classification tables the group of SPT phases in this setting is $KO^{-2}(\pt) \cong \Z_2$.

A finitely generated Majorana space (protected by $A = \R$) is the same as a real vector space $M$, which we imagine to sit at the only site in space.
A Hilbert module structure on $M$ is a real inner product.
The space of polarizations is the space of complex structures on $M$.
This space is only nonempty if $\dim M$ is even, which we will assume from now on.
In that case the space is homotopy equivalent to $O(\dim M)/U(\dim M/2)$, which in the limit as $\dim M$ goes to infinity forms a classifying space for $KO^{-2}$.
For $\dim M$ finite, this space also has two path components, labeled by the sign of the Pfaffian of $iH$. 
However, the important subtlety is that the Pfaffian depends on a fixed orientation on $M$.
An equivalent way to arrive at the same conclusion is to realize that there are two path components of complex structures on $M$, but no canonical one.


We conclude that there are two path components of gapped Hamiltonians, but in the current formulation there is no well-defined notion of `trivial phase' on a fixed $A$-module unless one makes arbitrary choices.
In more mathematical language, if we stick to defining SPT phases as path components of gapped BdG Hamiltonians, we will obtain that class D topological phases form a $\Z_2$-torsor.

If we insist on making equivalence classes of BdG Hamiltonians into a group, we thus have to fix an arbitrary reference polarization. 
In practice, one takes this reference polarization to be the bare vacuum $J = iQ$ associated to some $U(1)$-charge $Q$ in case $M \otimes_\R \C = V \oplus V^*$ comes from a one particle space $V$ as in Section \ref{sec: nambu}.
However, this approach will not work if $A$ contains particle-hole symmetries as the bare vacuum is not invariant under them, also see the explanation of charge symmetries at the end of Section \ref{sec: Freed--Moore symmetry}.
\end{example}

\subsection{Karoubi \texorpdfstring{$K$}{K}-theory}
\label{sec:Karoubi}

In this more technical subsection, we will review Karoubi $K$-theory.
Another common model of $K$-theory that is closely related to topological phases of matter is van Daele $K$-theory~\cite{vandaele1, vandaele2, kellendonk2017c, alldridgemaxzirnbauer, lorenzo2024comparison}.
We start by introducing the classical notions of Karoubi triples following~\cite{thiangk-theoretic}.

Let $A$ be a $C^*$-algebra and $M$ a finitely generated projective module over $A$.
The fact that all finite-dimensional real vector spaces admit the structure of a real Hilbert space generalizes to the fact that all finitely generated projective $A$-modules $M$ admit the structure of a Hilbert module ~\cite[Theorem 15.4.2]{weggeolsen}.
Indeed, if we pick a complement $A$-module $N$ so that $M \oplus N \cong A^n$, then we can restrict the canonical Hilbert module structure on $A^n$ to $M$.
This gives a topology on $M$ and an operator norm topology on $B(M)$, the continuous linear endomorphisms of $M$. 
Just as for vector spaces, the choice of a Hilbert module structure on $M$ is noncanonical and can lead to different norms.
However, the induced topologies on $M$ and hence on $B(M)$ are independent of the chosen embedding in $A^n$~\cite[Section 11.2]{blackadar}.

As a special case, we can look at the $A$-module endomorphisms of the finitely generated module $M$.
It is well-known that such maps are automatically adjointable and hence continuous. 
In particular, they form a $C^*$-algebra and the $C^*$-topology agrees with the topology on $B(M)$ as discussed above. 
In particular, adjointable endomorphisms of $A^n$ form the $C^*$-algebra $M_n(A)$~\cite[15.2.12]{weggeolsen}.
In this way the category $\Mod_A$ of finitely generated projective $A$-modules becomes a topological category.

Now suppose $A$ is additionally $\Z_2$-graded but $M$ is not necessarily $\Z_2$-graded.
Since polarizations and grading operators are not $A$-linear, we don't get a topology on these spaces by realizing them as morphisms in $\Mod_A$.

If $T:M \to N$ is a real-linear map between ungraded $A$-modules, we call it \emph{$A$-skewlinear} if $T(ma) = (-1)^{|a|} T(m) a$. 
Note that every $A$-skewlinear map is bounded as is it is $A_{ev}$-linear and every finitely generated projective $A$-module is finitely generated projective over $A_{ev}$.
Therefore, even though the set of $A$-skewlinear operators $M \to N$ is not contained in $\Mod_A(M,N)$, it still inherits a topology from the topology on continuous maps from $M$ to $N$.
Note that a $\Z_2$-grading on $M$ is equivalent to an $A$-skewlinear endomorphism with square $1$ and so we get:

\begin{definition}
Let $M$ be a Hilbert $A$-module.
Let $\Grad_A(M)$ be the set of $\Z_2$-gradings of $M$, i.e. skew-linear maps $\epsilon \colon M \to M$ such that $\epsilon^2 = \id_M$.
Since such operators will automatically be bounded, $\Grad_A(M)$ inherits an operator norm from $B(M)$.
    \end{definition}

Similarly note that an $A$-symmetric polarization is an $A$-skew-linear endomorphism with square $-1$, giving a topology on $\Pol_A(M)$.

\begin{remark}
Recall that if $M$ is a Hilbert $A$-module with grading operator $\epsilon$, then we can't ask $\epsilon$ to be self-adjoint unless $A = A_{ev}$.
On the other hand, $M$ is said to define a \emph{graded Hilbert module} if and only if 
\[
(-1)^{|\langle m_1, m_2 \rangle|} \langle m_1, m_2 \rangle = \langle \epsilon(m_1), \epsilon(m_2) \rangle
\]
for all $m_1, m_2 \in M$.
We therefore expect this to be the correct replacement of `self-adjoint grading'.
We also expect that the subset of gradings satisfying this condition is a deformation retract in the space of all gradings.
    Similar to Remark \ref{rem:hamsvspol}, we expect that we can weaken the assumption that $\epsilon$ is a self-adjoint grading to an arbitrary invertible operator $h$ satisfying $h a = (-1)^{|a|} a h$ which is self-adjoint in some appropriate sense.
\end{remark}

\begin{remark}
    Note that the space of invertible $A$ skew-linear automorphisms $T \colon M \to M$ is in general not homotopy equivalent to $\Grad_A(M)$.
    For example, for $A = \R$ and $M = \R^n$ we have that 
    \[
    \Grad_A(M) = \bigsqcup_{0 \leq k \leq n} Gr_k(\R^{n-k}),
    \]
    where $Gr_k(\R^{n-k})$ is the Grassmannian of $k$-dimensional subspaces of $\R^{n-k}$.
    We see that $\Grad_\R(\R^n)$ has $n+1$ components, while $GL_n(\R)$ has $2$ for all $n>0$.
\end{remark}

 The following definition due to Karoubi only differs from Definition \ref{def:SPT phase} by a sign.

\begin{definition}
Let $M$ be a finitely generated projective $A$-module.
\begin{itemize}
    \item The space of \emph{Karoubi triples} is the space of pairs $\epsilon_1,\epsilon_2 \in \Grad_A(M)$.
\item A Karoubi triple $(M,\epsilon_1,\epsilon_2)$ is called \emph{elementary} if $\epsilon_1$ and $\epsilon_2$ are homotopic inside the space $\Grad_A(M)$. 
\item The sum of two Karoubi triples $(M, \epsilon_1, \epsilon_2), (M',\epsilon_1', \epsilon_2')$ is $(M \oplus M', \epsilon_1 \oplus \epsilon_1', \epsilon_2 \oplus \epsilon_2')$.
\item An \emph{isomorphism} of Karoubi triples $(M, \epsilon_1, \epsilon_2) \cong (M',\epsilon_1', \epsilon_2')$ consists of an isomorphism $M \to M'$ of Hilbert $A$-modules which intertwines the $\epsilon_i'$.
\item The \emph{Karoubi $K$-theory} $K_0(A)$ of $A$ is defined as the collection of Karoubi triples modulo the equivalence relation that  
\[
T_1 \sim  T_2 \iff \exists T_1',T_2' \text{ elementary s.t. } T_1 \oplus T_1' \cong T_2 \oplus T_2'.
\]
It is a group under $\oplus$.
\end{itemize}
\end{definition}

If $\phi \colon A \to B$ is a graded $*$-homomorphism, there is an induced map $K_0(A) \to K_0(B)$.
Indeed, there is a direct sum preserving functor from finitely generated projective $A$-modules to finitely generated projective $B$-modules given by $M \mapsto B \otimes_A M$.

Let $Cl_n = Cl_{+n}$ denote the real Clifford algebra on $n$ generators with positive squares, which we view as a $\Z_2$-graded $C^*$-algebra.
Similarly, $Cl_{-n}$ denotes the Clifford algebra with negative squares and $Cl_{p,q}$ the mixed signature Clifford algebra with $p$ positive and $q$ negative squares.

\begin{example}
\label{ex:Karoubicomputation}
Let us compute $K_0(Cl_{+1})$. 
Note that $Cl_{+1} \cong \R \oplus \R$ as an ungraded algebra and so a finitely generated projective $Cl_{+1}$-module is just a $\Z_2$-graded vector space.
We will denote the $\Z_2$-graded vector space $\R^{p+q} = \R^p \oplus \R^q$ by $\R^{p|q}$.
This graded $\R$-module admits a further $\Z_2$-grading making $Cl_{+1}$ act in a graded way if and only if $p = q$.
A grading on $\R^{p|p}$ can then be written
\[
\begin{pmatrix}
    0 & A \\
    A^{-1} & 0
\end{pmatrix}
\]
for some $A \in GL_p(\R)$.
There are two homotopy classes of such $A$ if $p \neq 0$.
It is straightforward to verify that there is a group isomorphism $K_0(Cl_{+1}) \to \Z_2$ given by mapping a Karoubi triple $[\R^{p|p}, A_1, A_2]$ to zero if $A_1$ and $A_2$ are in the same path component and $1$ otherwise.
\end{example}

More generally, we have that $K_0(Cl_{+n}) \cong KO_n(\pt)$ recovers the Bott song, motivating the following definition.

\begin{definition}
\label{def:higherkarkth}
The \emph{higher Karoubi $K$-theory} of $A$ is
    \[
    K_n(A) := K_0(Cl_{+n} \otimes A).
    \]
\end{definition}

\begin{remark}\label{rem:nonunital}
    Up until this point we have assumed that all $C^*$-algebras are unital and $*$-homomorphisms preserve the unit.
    However, many relevant examples of symmetry algebras---such as the group $C^*$-algebras of infinite compact groups of Section \ref{sec:infinitegroups}---are non-unital.
    The treatment of non-unital algebras in $C^*$-algebra textbooks is standard, and we summarize the content most relevant for this work here.
    Not necessarily unital $C^*$-algebras admit several unitizations. 
Geometrically, one should think of unitizations as compactifications, as this is what they correspond to for commutative $C^*$-algebras under Gelfand's representation theorem.
The smallest unitization $A_+ := A \oplus \C$ is called the \emph{Dorroh extension}~\cite{MR1562332}.
It can be shown to have a $C^*$-norm and corresponds under Gelfand duality to the one-point compactification in the commutative case.
The multiplier algebra on the other hand is in some sense the largest unitization, see Definition \ref{def:multiplieralg} for details.

    If $A$ is non-unital, there is an equivalence of categories between $A$-modules $M$ and $A_+$-modules satisfying $1 \cdot m = m$ for all $m \in M$.
    We call an $A$-module finitely generated projective if it is finitely generated and projective as an $A_+$-module, see \cite{1540627} for a related discussion.
    If $X$ is a locally compact Hausdorff space which is noncompact, then functions $C_0(X)$ on $X$ vanishing at infinity are a non-unital $C^*$-algebra.
    We have that $C_0(X)_+ = C(X_+)$ is the $C^*$-algebra of functions on the one point compactification $X_+$ of $X$.
    In particular, finitely generated modules over $C_0(X)$ are vector bundles over $X_+$, not over $X$, compare \cite{rennie2001poincare}.
    
    As is common both in the $C^*$-algebra and topological $K$-theory literature, we define the $K$-theory of a non-unital $\Z_2$-graded $C^*$-algebra $A$ as
\[
K_n(A) := \ker \left( K_n(A_+) \to K_n(\mathbb{F})\right),
\]
where the map is induced by the obvious (split) $*$-homomorphism $A_+ \to \mathbb{F}$.
Here we used the grading on $A_+$ induced by the grading on $A$.
Note that from a cohomological perspective, this defines $K$-theory with compact supports.
We make the analogous definition for $\SPT^A$ when $A$ is nonunital.
\end{remark}



\subsection{Free SPT phases and K-theory}
\label{sec: K-theory for free}

The goal of this section is to show that the version of Karoubi $K$-theory using polarizations we introduced in Section \ref{sec:the group of free SPT} is related to the usual notion of Karoubi $K$-theory up to a shift, compare \cite{masstermgomiyamashita}.
For the proof, we will need the notion of an opposite $C^*$-algebra.

\begin{definition}
    Let $A$ be a $\Z_2$-graded $C^*$-algebra.
The opposite $\Z_2$-graded $C^*$-algebra $A^{op}$ is defined to be equal to $A$ as a vector space and
\[
a^{op} b^{op} := (-1)^{|a||b|} (b a)^{op} \quad (a^{op})^* = (-1)^{|a|} (a^*)^{op}.
\]
It can be shown that $A^{op}$ is a $\Z_2$-graded $C^*$-algebra.   
\end{definition}

If $A,B$ are two $\Z_2$-graded $C^*$-algebras, we have $(A \otimes B)^{op} \cong B^{op} \otimes A^{op}$.\footnote{Recall that the tensor product between $\Z_2$-graded algebras has its product defined with the appropriate Koszul sign $(a_1 \otimes b_1) (a_2 \otimes b_2) = (-1)^{|b_1||a_2|} a_1 a_2 \otimes b_1 b_2$, see Appendix \ref{sec: C*} for details.}
Note that $(Cl_{p,q})^{op} \cong Cl_{q,p}$.

There is an equivalence of categories between $\Z_2$-graded right $A$-modules and $\Z_2$-graded left $A^{op}$-modules given by the formula
\[
a^{op} m = (-1)^{|a| |m|} m a.
\]
The equivalence restricts to the subcategory of finitely generated projective modules.

\begin{definition}
Let $A$ be a $\Z_2$-graded $*$-algebra and $M$ a $\Z_2$-graded left $A$-module.
Define the \emph{conjugate of $M$} as the $A^{op}$-module $\overline{M}$ which is equal to $M$ as a $\Z_2$-graded vector space with module structure
\[
a^{op} \cdot m  := (-1)^{|a||m|} a^* m
\]
If $T\colon M \to N$ is a graded $A$-module map, define $\overline{T}\colon \overline{M} \to \overline{N}$ by $T$ when $T$ is even and $T \circ \epsilon_M$ when $T$ is odd, where $\epsilon_M\colon M \to M$ is the grading operator $m \mapsto (-1)^{|m|} m$.
\end{definition}

\begin{warning}
Note that $\overline{\overline{M}} \neq M$ in general.
Instead, it is the $A$-module in which the action of $a \in A$ is twisted by $(-1)^{|a|}$.
\end{warning}

\begin{remark}
We could have also defined $\overline{T}$ by $\epsilon_N \circ T$ for odd $T$.
This definition would differ from the above by $(-1)^{|T|}$. 
We have no argument to prefer our definition over the other.
\end{remark}

\begin{remark}
\label{rem:opvstau}
    If $A$ is a $\Z_2$-graded algebra, let $A^\tau$ denote the $\Z_2$-graded algebra that is equal to $A$ as a graded vector space but has multiplication
    \[
    a^\tau \cdot b^\tau := (-1)^{|a||b|} (a b)^\tau,
    \]
    where we denoted the element of $A^\tau$ corresponding to the element $a \in A$ by $a^\tau$.
    If $A$ is additionally a $\Z_2$-graded $*$-algebra, then $*$ gives an isomorphism of $\Z_2$-graded algebras $A^{op} \cong A^\tau$ and thus a correspondence $M \to \widetilde{M}$ between $A^{op}$-modules and $A^\tau$-modules.
    Note that if $*$ is complex antilinear, then so is this algebra isomorphism.
    If we define $(a^\tau)^* := (-1)^{|a|} (a^*)^\tau$, then this becomes a $*$-algebra isomorphism.
    If $A$ is $C^*$, then so is $A^\tau$.
    For $M$ a $\Z_2$-graded $A$-module the $A^\tau$-module $M^\tau$ defined by
    \[
    a^\tau \cdot m^\tau = (-1)^{|a||m|} (am)^\tau
    \]
    corresponds to $\ol{M}$ under the isomorphism $A^\tau \cong A^{op}$.
    Note that we do have $M^{\tau \tau} = M$.
    This does not contradict the fact that $\ol{\ol{M}} \ncong M$ because in general $\widetilde{M^\tau} \ncong (\widetilde{M})^{\tau}$ under the obvious identification $(A^\tau)^{op} = (A^{op})^\tau$ with the ungraded opposite.
\end{remark}

Let $\mathbf{GrMod}_A$ be the category of finitely generated projective $\Z_2$-graded $A$-modules and graded-linear $A$-module maps.
In other words, morphisms are either even and $A$-linear or odd and $T(am)=(-1)^{|a|}aT(m)$.
Let $\mathbf{GrMod}^{ev}_A$ denote that category where we restricted to the even degree maps.
Note that a finitely generated projective $\Z_2$-graded $A$-module is equivalent to a finitely generated projective ungraded $A \otimes Cl_{+1}$-module by making the grading operator $\epsilon$ the action of the preferred odd generator of $Cl_{+1}$.
Under that correspondence we have $\mathbf{GrMod}^{ev}_A \cong \mathbf{Mod}_{A \otimes Cl_{+1}}$.
We thus get that $\mathbf{GrMod}^{ev}_A$ is a topological category.
Moreover, odd graded $A$-linear maps correspond to skew-$A$-linear maps, which also have a topology.
Therefore all of $\mathbf{GrMod}_A$ also defines a topological category.

\begin{remark}
    Recall that there is also a notion of $\Z_2$-graded Hilbert module, but we prefer to work with $A \otimes Cl_{+1}$-valued inner products here, also see Remark \ref{rem:gradedvsClvalued}.
\end{remark}

\begin{lemma}
\label{lem:barequiv}
The conjugate defines an equivalence of topological categories
\[
\overline{(.)}\colon \mathbf{GrMod}^{ev}_A \cong \mathbf{GrMod}^{ev}_{A^{op}},
\]
which preserves direct sums.
It also defines homeomorphisms $\mathbf{GrMod}_A(M_1, M_2) \to \mathbf{GrMod}_{A^{op}}(\ol{M_1}, \ol{M_2})$ preserving direct sums, but is not a functor $\mathbf{GrMod}_A \to \mathbf{GrMod}_{A^{op}}$.
 Instead we have
\[
\overline{T}_1 \overline{T}_2 = (-1)^{|T_1| |T_2|} \overline{T_1 T_2}.
\]
\end{lemma}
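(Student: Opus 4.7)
The plan is to verify each piece of the lemma by direct computation with Koszul signs, treating conjugation as an operation that is essentially the identity on underlying real vector spaces but twists the module structure, so that all topological statements come for free once the algebra is sorted out.

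First I would check that $\overline{M}$ is a well-defined $\Z_2$-graded $A^{\op}$-module. Associativity amounts to the computation
\[
(a^{\op} b^{\op})\cdot m = (-1)^{|a||b|}(ba)^{\op}\cdot m = (-1)^{|a||b|+(|a|+|b|)|m|}(ba)^* m = (-1)^{|a||b|+(|a|+|b|)|m|+|a||b|}a^* b^* m,
\]
which matches $a^{\op}\cdot(b^{\op}\cdot m)$ after expanding using the graded $*$-relation $(ba)^* = (-1)^{|a||b|}a^*b^*$. Finite generation and projectivity are preserved because the underlying real vector space and Hilbert module structure are unchanged. Then I would verify that $\overline{T}$ lands in the right place: for even $T$ one just checks $\overline{T}(a^{\op}\cdot m) = (-1)^{|a||m|}T(a^* m) = (-1)^{|a||m|}a^* T(m) = a^{\op}\cdot\overline{T}(m)$; for odd $T$, the same calculation produces an extra $(-1)^{|a|}$ because $T(a^* m) = (-1)^{|a|}a^* T(m)$, and this sign is precisely what is needed for $\overline{T}$ to be odd $A^{\op}$-linear in the sense of $\mathbf{GrMod}_{A^{\op}}$. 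Pre-composing with $\epsilon_M$ reshuffles the $(-1)^{|m|}$ factors to make the computation come out cleanly.

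Next I would establish the composition identity. If both $T_1, T_2$ are even, then $\overline{T_1}\,\overline{T_2} = T_1 T_2 = \overline{T_1 T_2}$ on the nose, giving functoriality on the even subcategory. If exactly one is odd, one instance of $\epsilon$ slides through and the identity $\overline{T_1 T_2} = (T_1 T_2)\circ\epsilon_M$ matches. If both are odd, $\overline{T_1}\,\overline{T_2} = T_1\epsilon_N T_2\epsilon_M$; using that $T_2$ is odd so $\epsilon_N T_2 = -T_2\epsilon_M$, this equals $-T_1 T_2$, while $\overline{T_1 T_2} = T_1 T_2$ since $T_1 T_2$ is even. This gives the claimed sign $(-1)^{|T_1||T_2|}$. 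Direct sum preservation is tautological: $\overline{M_1\oplus M_2}$ and $\overline{M_1}\oplus\overline{M_2}$ agree as underlying graded spaces and the twisted actions decompose componentwise.

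For the equivalence of topological categories, I would write down a quasi-inverse. Double conjugation $\overline{\overline{M}}$ equals $M$ as a graded vector space but with $a$ acting as $(-1)^{|a|}a$; the graded algebra automorphism $\alpha\colon A\to A$, $a\mapsto (-1)^{|a|}a$ is a $*$-isomorphism, and pulling back along $\alpha$ provides a natural isomorphism $\overline{\overline{(-)}}\Rightarrow \id$ on the even subcategory, establishing that conjugation is an equivalence. Continuity of $\overline{(-)}$ on hom spaces is immediate: on morphisms it is either the identity map or post-composition with the fixed bounded operator $\epsilon_M$, hence a homeomorphism from $\mathbf{GrMod}_A(M_1,M_2)$ to $\mathbf{GrMod}_{A^{\op}}(\overline{M_1},\overline{M_2})$. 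The main bookkeeping obstacle is keeping the Koszul signs consistent across three different conventions appearing in the paper, namely the graded opposite, the graded $*$, and the signed commutation with $\epsilon$; once these are nailed down in a single computation for odd $T$, everything else follows.
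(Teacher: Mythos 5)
Your overall route is the same as the paper's: define $\overline{M}$ by $a^{op}\cdot m = (-1)^{|a||m|}a^*m$, define $\overline{T}$ with an $\epsilon_M$ inserted for odd $T$, and verify everything by Koszul-sign bookkeeping. Your case analysis for the composition identity is correct, and the quasi-inverse via $\overline{\overline{M}}=\alpha^*M$ with $\alpha(a)=(-1)^{|a|}a$ is a nice explicit argument for essential surjectivity that the paper leaves implicit. Two points, however, need repair.

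First, your verification of the $A^{op}$-module axiom invokes the graded relation $(ba)^* = (-1)^{|a||b|}a^*b^*$. The paper's convention (Appendix A, with an explicit remark that it violates the Koszul rule) is that $*$ is an ungraded anti-involution, so $(ba)^* = a^*b^*$ with no sign. With your signed convention the check does not close: your display ends with $(-1)^{(|a|+|b|)|m|}a^*b^*m$, whereas $a^{op}\cdot(b^{op}\cdot m) = (-1)^{|a||b|+(|a|+|b|)|m|}a^*b^*m$, so the two sides differ by $(-1)^{|a||b|}$. The associativity computation only works with the unsigned convention.

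Second, ``continuity is immediate'' is too quick. On underlying sets the map is indeed the identity or composition with the bounded operator $\epsilon_M$, but the two hom-spaces are topologized by different recipes: $\mathbf{GrMod}_A(M_1,M_2)$ via an embedding $M_i\hookrightarrow A^{n}$, and $\mathbf{GrMod}_{A^{op}}(\overline{M_1},\overline{M_2})$ via an embedding $\overline{M_i}\hookrightarrow (A^{op})^{n}$; the paper explicitly remarks that $\overline{M}$ does not inherit an obvious Hilbert $A^{op}$-module structure from one on $M$. So one must check that these two topologies on the same underlying space coincide. The paper does this by passing to $A^\tau\cong A^{op}$, reducing to free modules $(A\otimes Cl_{+1})^{n}$, and using that the identity $A\to A^\tau$ is a homeomorphism; your argument needs this step (or an equivalent one) to be complete.
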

\begin{proof}
The proof is a straightforward sign computation.
For bookkeeping, we will write the element of $\ol{M}$ corresponding to $m \in M$ as $\ol{m}$.
The fact that $\overline{M}$ is an $A^{op}$-module follows from
\begin{align*}
    (a^{op} b^{op}) \cdot \ol{m} &= (-1)^{|a||b|} (ba)^{op} \cdot \ol{m} 
    \\
    &= (-1)^{|a||b| + |ba| |m| }  \ol{(ba)^* m}
    \\
    &= (-1)^{|a||b| +  (|a| + |b|) |m|}   \ol{a^* b^* m}
    \\
    &= (-1)^{|a||b| + |a||m| + |b||m| +|a||b^* m|} a^{op} \ol{b^* m}
    \\
    &=  a^{op} (b^{op} \ol{m }).
\end{align*}
Clearly $\ol{M}$ is a $\Z_2$-graded $A^{\op}$-module.
We obviously have an isomorphism $\ol{M_1} \oplus \ol{M_2} \cong \ol{M_1 \oplus M_2}$ for all $\Z_2$-graded $A$-modules $M_1,M_2$.
If $T\colon M \to N$ is graded $A$-linear we defined $\ol{T}(m) = (-1)^{|T||m|} \ol{T(m)}$ so that
\begin{align*}
    \ol{T}(a^{op} \ol{m}) &= (-1)^{|a||m|} \ol{T}(\ol{a^* m} )
    \\
    &= (-1)^{|a||m| + |T||a^* m|} \ol{T(a^* m)}
    \\
    &= (-1)^{|a| |m| + |T| |m|} \ol{a^* T(m)} 
    \\
    &= (-1)^{|a| |m| +|T| |m| + |a| |T(m)|} a^{op} \ol{T(m)} 
    \\
    &= (-1)^{|a||T|} a^{op} \ol{T}(\ol{m}).
\end{align*}
Finally, we have that
\begin{align*}
\ol{T}_1 \ol{T}_2(\ol{m}) &= (-1)^{|T_2||m|} \ol{T_1} (\ol{T_2(m)})
\\
&= (-1)^{|T_2| |m| + |T_1| |T_2(m)|} \ol{T_1 T_2(m)} 
\\
&= (-1)^{|T_2| |m| + |T_1| |T_2| + |T_1| |m| + |T_1 T_2||m|} \ol{T_1 T_2}(\ol{m})
\\
&= (-1)^{|T_1| |T_2|} \ol{T_1 T_2}(\ol{m})
\end{align*}
To finish the proof, we have to show that  the map $\mathbf{GrMod}^{ev}_A(M_1,M_2) \to \mathbf{GrMod}^{ev}_{A^{\op}}(\ol{M_1},\ol{M_2})$ is continuous for all finitely generated projective $\Z_2$-graded $A$-modules $M_1,M_2$.
Applying the comments in Remark \ref{rem:opvstau}, we can equivalently work with $A^\tau$ and $M^\tau$ instead of $A^{\op}$ and $\ol{M}$.
Given how we constructed the topologies on operator spaces, it suffices to assume that as an ungraded $A \otimes Cl_{+1}$, we have $M_1 = M_2 = (A \otimes Cl_{+1})^n$ for some $n$.
Translating this back to the graded formalism, we realize that as a graded $A$-module $A \otimes Cl_{+1}$ is two copies of the $A$-module $A$, but one with flipped grading.
If we apply $M \mapsto M^\tau$ to the $\Z_2$-graded $A$-module $A$, we clearly obtain $A^\tau$ as a module over itself.
Since $M \mapsto M^\tau$ preserves direct sums and gradings, we thus get that $((A \otimes Cl_{+1})^n)^\tau \cong (A^\tau \otimes Cl_{+1})^n$.
Because the identity map from $A$ to $A^\tau$ is a homeomorphism, so is the identification $((A \otimes Cl_{+1})^n)^\tau \cong (A^\tau \otimes Cl_{+1})^n$.
Clearly the map $\GrMod_A(A^n,A^n) \to \GrMod_A(A^n,A^n)$ given by the identity on degree-preserving maps and postcomposing by $\epsilon_{A^n}$ on degree-reversing maps is continuous.
We see that $T \mapsto \ol{T}$ is a homeomorphism $\GrMod_A(A^n,A^n) \to \GrMod_{A^{op}}((A^{op})^n,(A^{op})^n)$.
\end{proof}

\begin{remark}
Suppose $M$ is a $\Z_2$-graded $A$-module and $T \colon M \to M$ is graded $A$-linear. 
If $T$ is odd and $T^2 = \pm \id_M$, then $\overline{T}^2 = \mp \id_M$.
\end{remark}

\begin{remark} 
Suppose $A$ is not purely even.
If $M$ is a Hilbert $A$-module, then $\overline{M}$ does not seem to have an obvious Hilbert $A^{op}$-module structure.
Working with $A^\tau$ for simplicity, a short computation shows that there is no $A^\tau$-valued inner product on $M^\tau$ of the form
\[
\langle m_1^\tau, m_2^\tau \rangle = \omega(m_1,m_2) \langle m_1, m_2 \rangle^\tau
\]
for scalars $\omega(m_1,m_2)$ that only depend on $|m_1|$ and $|m_2|$.
To obtain a topology on $\overline{M}$ for $M$ finitely generated projective, we will therefore resort to embedding in a large enough $(A^{op})^n$.

\end{remark}

\begin{remark}
    The $K$-theory of a real $C^*$-algebra is typically denoted $KO_0(A)$ for real and $K_0(A)$ for complex $C^*$-algebras.
    We will denote both by $K_0(A)$ and only use $KO$ for the real $K$-theory of topological spaces.
\end{remark}

\begin{definition}
\label{def:finite-dim K-theory}
Let $A$ be a real $\Z_2$-graded algebra.
Let $\pi_0 \Mod_A$ be the set of isomorphism classes of finitely generated projective $\Z_2$-graded modules over $A$.
The \emph{Atiyah-Bott-Shapiro $K$-theory of $A$} is the quotient all modules by those that extend to $A \otimes_\R Cl_{-1}$ under the inclusion of the first tensor factor
\[
K^{ABS}_0(A) = \frac{\pi_0 \Mod_A}{\pi_0 \Mod_{A \otimes_\R Cl_{-1}} }.
\]
More generally, we define $K^{ABS}_n(A):=K^{ABS}_0(A \otimes Cl_{-n})$.
\end{definition}

It can be shown that $K^{ABS}_0(A)$ is an abelian group under $\oplus$ with the inverse of a $\Z_2$-graded module $M$ being the same representation with opposite grading. 

\begin{remark}
\label{rem:ABSKth}
Let $A$ be a real $\Z_2$-graded $C^*$-algebra.
    There is a group homomorphism
    \[
K^{ABS}_0(A^{op}) \to K_0(A) \quad [M,\epsilon] \mapsto [M, \epsilon, -\epsilon].
    \]
    It can be shown that this map is an isomorphism if $A$ is finite-dimensional or purely even.
    Note that this is rather surprising as $K^{ABS}$ did not use any information about the topology on $A$ or its modules.
    However, the map is not an isomorphism in general.
    For example, if $A = C(S^1) \otimes \C l_1$ one can show that $K^{ABS}_0(A) = 0$ and $K_0(A) = \Z$~\cite{andrektheory}, see \cite[Example 3.61]{luukmasters} for details.
\end{remark}

Given Definition \ref{def:higherkarkth}, we also have $K^{ABS}_n(A^{op}) \cong K_n(A)$ for finite-dimensional $A$.

\begin{remark}
    \label{rem:ABSconvention}
    We defined $K^{ABS}$ as in the original reference \cite{abs} by quotienting by modules that extend to $Cl_{-1}$.
    If we wanted to agree with the conventions natural from the perspective of Karoubi $K$-theory, we should have used $Cl_{+1}$ instead.
    These two conventions differ by an opposite since $M \mapsto \ol{M}$ induces an isomorphism
    \[
    \frac{\pi_0 \Mod_{A}}{\pi_0 \Mod_{A \otimes_\R Cl_{+1}} } \cong \frac{\pi_0 \Mod_{A^{op}}}{\pi_0 \Mod_{A^{op} \otimes_\R Cl_{-1}} } = K^{ABS}_0(A^{op})
    \]
    by Lemma \ref{lem:barequiv}.
    For example, we have $K_0(Cl_{+1}) \cong \Z_2$ and $K_0(Cl_{-1}) = 0$ while $K^{ABS}_0(Cl_{+1}) = 0$ and $K^{ABS}_0(Cl_{-1}) = \Z_2$, see Example \ref{ex:Karoubicomputation}.
\end{remark}

\begin{remark}
    \label{rem:Kasparovconvention}
The convention used in Definition \ref{def:finite-dim K-theory} also agrees with a Kasparov style definition of $K$-theory of $\Z_2$-graded $C^*$-algebras, where we instead have $KK_0(\R , Cl_{+n} \otimes A) \cong KK_{-n}(\R, A)$.
In general we expect that $KK_0(\R,A) \cong K_0(A^{op})$, compare \cite[Section 4.2]{gomi2017freed}.
\end{remark}

\begin{lemma}
\label{lem:morita}
    There is an isomorphism of groups
    \[
    \SPT^A \cong \SPT^{A \otimes Cl_{1,1}}.
    \]
\end{lemma}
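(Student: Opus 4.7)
The plan is to realize this as a Morita invariance statement. The key observation is that $Cl_{1,1} \cong \End_\R(\R^{1|1})$ as $\Z_2$-graded algebras, so $Cl_{1,1}$ is Morita-trivial in the $\Z_2$-graded category and $A \otimes Cl_{1,1}$ is graded Morita equivalent to $A$. Concretely, the element $p := (1 + e_1 e_2)/2 \in Cl_{1,1}$ is an even idempotent: evenness holds because $e_1 e_2$ is a product of two odd generators, and $p^2 = p$ follows from $(e_1 e_2)^2 = 1$. A short computation shows $p\,Cl_{1,1}\,p = \R\,p$, so that $p(A \otimes Cl_{1,1})p \cong A$ as $\Z_2$-graded $C^*$-algebras via the grading-preserving map $a \mapsto a \otimes p$. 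Since $Cl_{1,1} \cong M_2(\R)$ is simple, $p$ is a full idempotent, hence $1 \otimes p$ is a full projection in $A \otimes Cl_{1,1}$.

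Next I would define $\Psi: \SPT^{A \otimes Cl_{1,1}} \to \SPT^A$ by
\[
\Psi(N, J_1', J_2') := (pN, J_1'|_{pN}, J_2'|_{pN}).
\]
Because $1 \otimes p$ is even, the skew-linearity of $J_i'$ gives $J_i'(pn) = pJ_i'(n)$, so $J_i'$ preserves $pN$. The restriction squares to $-\id$ and is $A$-skew-linear under $A \cong p(A \otimes Cl_{1,1})p$, since this isomorphism respects the grading. Finite generation and projectivity of $pN$ over $A$ follow from standard $C^*$-Morita theory applied to the full projection $1 \otimes p$. The inverse map $\Phi$ is defined by extending scalars along the Morita bimodule $(A \otimes Cl_{1,1})p \otimes_A (-)$; equivalently, one can describe it as $\Phi(M, J_1, J_2) = (M \otimes_\R \R^{1|1}, \widetilde{J_1}, \widetilde{J_2})$, with the $A \otimes Cl_{1,1}$-action encoded by the Morita equivalence and $\widetilde{J_i}$ the unique extension of $J_i$ arising from $J_i \otimes \epsilon$, where $\epsilon$ is the grading operator on $\R^{1|1}$.

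That $\Psi$ and $\Phi$ are mutual inverses is a direct consequence of Morita theory: $p((A \otimes Cl_{1,1})p \otimes_A M) \cong M$ canonically, while $(A \otimes Cl_{1,1})p \otimes_A pN \cong N$ using that $p$ is full. Both maps preserve direct sums and isomorphisms, and they are continuous in the operator norm topology on the spaces of polarizations (restriction to the image of an idempotent and tensoring with a fixed finitely generated bimodule are continuous operations). Hence they take elementary triples to elementary triples and descend to mutually inverse homomorphisms $\SPT^A \cong \SPT^{A \otimes Cl_{1,1}}$. The hardest point to verify carefully is the compatibility between the Morita construction and the skew-linear polarization condition, which rests on the fact that the isomorphism $p(A \otimes Cl_{1,1})p \cong A$ is graded; once that is in hand the rest is bookkeeping.
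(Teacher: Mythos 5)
Your proof is correct and is essentially the paper's argument in different clothing: both rest on the graded Morita equivalence between $A$ and $A\otimes Cl_{1,1}$, with your second description of $\Phi$ (namely $M\mapsto M\otimes_\R\R^{1|1}$ with $\widetilde{J}=J\otimes\epsilon$, i.e.\ $J\oplus(-J)$) being literally the paper's explicit doubling construction, and your corner map $\Psi$ along the even idempotent $p=(1+e_1e_2)/2$ being its inverse. You correctly identify the one point where off-the-shelf Morita theory does not suffice --- skew-linear polarizations are not morphisms in the module category, so their transport must be checked by hand --- and your resolution matches the paper's computation that every $(A\otimes Cl_{1,1})$-skew-linear endomorphism of the doubled module has the form $\operatorname{diag}(T,-T)$.
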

\begin{proof}
Recall that $Cl_{1,1} \cong M_{1|1}(\R)$ is the $C^*$-algebra of endomorphisms of $\R^{1|1}$, see Remark \ref{rem:morita}.
The Morita equivalence of algebras between $A$ and $M_2(A)$ induces an equivalence of categories
    \[
    \Mod_A \to \Mod_{M_2(A)} \quad M \mapsto M \otimes \R^{2}
    \]
    of ungraded modules.
    However, $A \otimes M_{1|1}(\R)$ is not isomorphic to $M_2(A)$ as an ungraded algebra, since off-diagonal matrices anticommute with odd elements of $A$.
    
    We solve this by an ad-hoc construction
    \[
    \Mod_A \to \Mod_{A \otimes M_{1|1}(\R)} \quad M \mapsto M \oplus M,
    \]
where we give $M \oplus M$ the $A \otimes M_{1|1}(\R)$-module structure
\[
(a \otimes 1)(m_1,m_2) = (am_1, (-1)^{|a|} am_2)
\]
and matrices act in the obvious way.
The fact that this is a module can be checked directly on homogeneous elements.
 This is an equivalence of categories and by restricting to free modules, we see that it is a homeomorphism on hom-spaces.
 
    A straightforward computation shows that every $(A \otimes M_{1|1}(\R))$-skew linear endomorphism is of the form
    \[
    \begin{pmatrix}
        T & 0
        \\
        0 &  -T
    \end{pmatrix}
    \]
    for an $A$-skewlinear morphism $T \colon M \to M$.
    We in particular obtain a homeomorphism $\Pol_A \to \Pol_{A \otimes M_{1|1}(\R)}$ of topological monoids, inducing the desired group isomorphism.
\end{proof}

\begin{remark}
    The above lemma generalizes to the fact that $\SPT^A$ is Morita-invariant, also see Remark \ref{rem:morita}.
    Even though we won't prove this directly, this is known for Karoubi $K$-theory and so it will follow from the next theorem.
\end{remark}

\begin{theorem}
\label{th:IQPVK-theory}
There is an isomorphism
$\SPT^A \cong K_2(A^{op})$
\end{theorem}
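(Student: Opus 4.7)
The plan is to construct an explicit isomorphism by tensoring with a fixed $2$-dimensional Clifford module that acts as a ``shifter''. Let $V := \R^2$ carry the standard $Cl_{+2}$-action with generators $e_1, e_2 \in \End_\R(V)$ satisfying $e_i^2 = \id$ and $\{e_1, e_2\} = 0$, and set $\tau := e_1 e_2$, so that $\tau^2 = -\id$ and $\tau$ anticommutes with $e_1$ and $e_2$. The key observation motivating the approach is that tensoring with $(V, \tau)$ converts an odd skew-linear operator of square $-\id$ (a polarization) into an odd skew-linear operator of square $+\id$ (a Karoubi gradation), because the square of $\tau$ supplies the missing sign. This should realize the ``shift by two'' between Karoubi $K$-theory with negative and positive squares.

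The forward map $\Phi \colon \SPT^A \to K_0(Cl_{+2} \otimes A^{\op}) = K_2(A^{\op})$ will be defined on representatives by
\[
(M, J_1, J_2) \longmapsto (M \otimes_\R V,\, J_1 \otimes \tau,\, J_2 \otimes \tau).
\]
Here $M \otimes_\R V$ is made into an ungraded module over the $\Z_2$-graded algebra $Cl_{+2} \otimes A^{\op}$ by letting $Cl_{+2}$ act on the $V$-factor in the standard way, letting even $a^{\op}$ act as $a^* \otimes \id_V$, and letting odd $a^{\op}$ act as $a^* \otimes \tau$; the Koszul sign $(-1)^{|a||b|}$ appearing in $a^{\op} b^{\op} = (-1)^{|a||b|}(ba)^{\op}$ for two odd $a,b$ is then automatically reproduced from $\tau^2 = -\id$. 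The Karoubi triple conditions reduce to short sign calculations: $(J_i \otimes \tau)^2 = J_i^2 \otimes \tau^2 = +\id$; the operator $J_i \otimes \tau$ anticommutes with each Clifford generator $e_k \otimes 1$ because $\tau$ does; and it anticommutes with every odd $1 \otimes a^{\op}$ by combining $J_i a^* = -a^* J_i$ with $\tau^2 = -\id$. Additivity, continuity and preservation of elementary triples follow directly from the tensor-product form of $\Phi$.

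For the inverse, given a Karoubi triple $(N, \epsilon_1, \epsilon_2)$ for $Cl_{+2} \otimes A^{\op}$, I would use the ungraded Morita equivalence $Cl_{+2} \cong M_2(\R)$ to decompose $N \cong M' \otimes_\R V$ for a finitely generated projective $A^{\op}$-module $M'$. The requirement that $\epsilon_i$ anticommute with both Clifford generators then forces $\epsilon_i = J_i \otimes \tau$ uniquely, because the subspace of $\End_\R(V)$ anticommuting with $e_1$ and $e_2$ is the one-dimensional span of $\tau$. Converting $M'$ back to an $A$-module $M$ via the standard $*$-correspondence between $A$- and $A^{\op}$-modules, the extracted $J_i$ become $A$-symmetric polarizations, and direct computation should show that the two constructions are mutually inverse up to the defining equivalence relation.

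The main obstacle will be ensuring that the inverse is well-defined: the Morita splitting $N \cong M' \otimes V$ is canonical only up to a choice, and one has to verify that the extracted triple $(M, J_1, J_2)$ descends unambiguously to a class in $\SPT^A$. I expect this to follow from the contractibility of the space of Morita trivializations of $N$, or, more conceptually, by upgrading $\Phi$ to an equivalence of the Banach topological categories underlying both sides and invoking Lemma \ref{lem:morita} together with the sign-flip of Lemma \ref{lem:barequiv} to mediate between the polarization and Karoubi-gradation conventions and between $A$ and $A^{\op}$.
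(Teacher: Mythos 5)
Your construction is correct in its essentials, and it realizes the same underlying mechanism as the paper's proof --- an auxiliary odd element of square $-1$ converts polarizations into Karoubi gradations, while the passage to $A^{\op}$ absorbs the $*$ and the Koszul signs --- but you package it differently. You tensor once with the explicit rank-two shifter $(V,\tau)$, $\tau = e_1e_2$, and your sign checks for the forward map $(M,J_1,J_2)\mapsto(M\otimes V, J_1\otimes\tau, J_2\otimes\tau)$ all go through (in particular $J a^* = -a^*J$ for odd $a$ is exactly what makes $J\otimes\tau$ a gradation for $Cl_{+2}\otimes A^{\op}$). The paper instead factors the degree-$2$ shift into two steps: Lemma \ref{lem:morita} gives $\SPT^A\cong\SPT^{A\otimes Cl_{1,1}}$ via $M\mapsto M\oplus M$ (the ungraded $M_2$ Morita equivalence, where all skew-linear operators are forced to be $\operatorname{diag}(T,-T)$), and then the identification $\SPT^{A\otimes Cl_{+1}}\cong K_0(A^{\op}\otimes Cl_{+1})$ is obtained from the conjugate-module functor of Lemma \ref{lem:barequiv}, under which $J\mapsto\ol{J}=J\circ\epsilon_M$ is literally a homeomorphism $\Diff_{A\otimes Cl_{+1}}(M)\cong\Grad_{A^{\op}\otimes Cl_{+1}}(\ol{M})$ with an evident inverse. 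This factorization is precisely what neutralizes the one genuine issue you flag: in your one-shot version the inverse requires decomposing an arbitrary f.g.\ projective $Cl_{+2}\otimes A^{\op}$-module as $M'\otimes V$ and checking that the extracted triple is independent of the splitting, whereas in the paper the only Morita-descent input is the standard ungraded one isolated in Lemma \ref{lem:morita}, and the sign-sensitive step is a bijection on the nose. So your route buys a more concrete, spinor-module-flavoured picture of the shift by two, at the cost of having to run the Morita descent and the $*$-conjugation simultaneously; to close your gap you should either prove that $M\mapsto M\otimes V$ is an equivalence of topological categories onto $\Mod_{Cl_{+2}\otimes A^{\op}}$ (essential surjectivity plus the observation that the operators anticommuting with $e_1\otimes 1$ and $e_2\otimes 1$ are exactly those of the form $T\otimes\tau$), or simply factor your shifter as the paper does.
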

\begin{proof}
Note that it suffices to show that
\[
\SPT^{A \otimes Cl_{+1}} \cong K_0(A^{op} \otimes Cl_{+1})
\]
Indeed, we can replace $A$ with $A \otimes Cl_{-1}$ and apply the lemma to deduce that \begin{align*}
\SPT^A &\cong \SPT^{A \otimes M_{1|1}(\R)} 
\\
&\cong K_0((A \otimes Cl_{-1})^{op} \otimes Cl_{+1}) = K_0(A^{op} \otimes Cl_{+2}) 
\\
&\cong K_{2}(A^{op}).
\end{align*}
We proceed to prove $\SPT^{A \otimes Cl_{+1}} \cong K_0(A^{op} \otimes Cl_{+1})$ by first providing a relation on the level of modules.
Recall that there is a one-to-one correspondence between ungraded $A \otimes Cl_{+1}$-modules $M$ and $\Z_2$-graded $A$-modules $M$, where a map $T\colon M \to M$ of ungraded $A \otimes Cl_{+1}$-modules corresponds to an even map of $\Z_2$-graded $A$-modules, while a skew-map corresponds to an odd graded-skew map.
In particular if $T \in \Diff_{A\otimes Cl_{+1}}(M)$, then $\overline{T} = T \circ \epsilon_M\colon \overline{M} \to \overline{M}$ is invertible, graded skew-$A^{op}$-linear, odd and $\ol{T}^2 = -1$.
Therefore we get that $\overline{T} \in \Grad_{A^{op} \otimes Cl_{+1}}(\overline{M})$.
This defines a homeomorphism
\[
\Diff_{A\otimes Cl_{+1}}(M) \cong \Grad_{A^{op} \otimes Cl_{+1}}(\overline{M}).
\]
It is also clear that an isomorphism of Karoubi skew-triples is mapped to an isomorphism of Karoubi triples.
The correspondence preserves sums of Karoubi triples.
We thus obtain a map
\[
\SPT^{A \otimes Cl_{+1}} \to K_0(A^{op} \otimes Cl_{+1}).
\]
Using the inverse of the bijection above we also obtain an inverse of this map.
\end{proof}

By replacing $A$ with $A \otimes Cl_{+n}$ we also obtain
$\SPT^{A \otimes Cl_{+n}} \cong K_{2-n}(A^{op})$.

\subsection{The tenfold way and \texorpdfstring{$\Z_2$}{Z2}-graded division algebras}
\label{sec: ten division algebras}

The tenfold way is typically phrased as providing a collection of ten symmetry algebras we can use to protect our system.\footnote{It is more common to use symmetry groups rather than algebras, but for our setting algebras are more convenient. We will discuss tenfold symmetry groups in Section \ref{sec: ten fermionic groups}.}
Unfortunately, the exact list of ten algebras can differ between references, also see \cite{stehouwer2022interacting}.
In this subsection, we will present one list of ten that is mathematically especially natural~\cite{geikomoore, freedhopkins, baez2020tenfold}. 
This tenfold way bears close resemblance to the original \cite{altlandzirnbauer} (the only exceptions to equality being class AI and class BDI),
but mostly deviates from \cite{schnyder2008classification, freedmoore}. 
In Section \ref{sec:tenfold}, we will compare with other lists of ten symmetry algebras in the literature.

A $\Z_2$-graded algebra is a $\Z_2$-graded division algebra if every nonzero homogeneous element is invertible.

\begin{theorem}\cite{wall1964graded, deligne1999quantum}
There are ten $\Z_2$-graded division algebras over $\R$:
\begin{align*}
&\C, \quad \C l_1, 
\\
&\R, \quad Cl_{-1}, \quad Cl_{-2}, \quad Cl_{-3}, \quad \mathbb{H}, \quad Cl_{3}, \quad Cl_{2}, \quad Cl_{1}
\end{align*}
\end{theorem}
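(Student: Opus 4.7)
The plan is to reduce the classification to a case analysis driven by the classical Frobenius theorem. First I would observe that if $A = A_0 \oplus A_1$ is a $\Z_2$-graded division algebra over $\R$, then $A_0$ is itself an ungraded division algebra over $\R$: every nonzero even element has a homogeneous inverse, and the inverse is forced to be even. By Frobenius, $A_0 \in \{\R, \C, \mathbb{H}\}$, and the case $A_1 = 0$ immediately yields the three trivially graded algebras $\R, \C, \mathbb{H}$.

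Next I would parametrize the case $A_1 \ne 0$. Pick any nonzero $x \in A_1$; then $x$ is invertible and left-multiplication by $x$ is an $A_0$-module isomorphism $A_0 \xrightarrow{\sim} A_1$, so $A_1 = A_0 \cdot x$. Conjugation by $x$ preserves the grading, so $\alpha(a) := xax^{-1}$ is an algebra automorphism of $A_0$. From $x^2 a = \alpha^2(a) x^2$ and the fact that $x^2 \in A_0$ must actually be central (it commutes with $x$ and with every element of $A_0$), I deduce $\alpha^2 = \id$ and $x^2 \in Z(A_0)^\times$. The whole graded algebra is determined by the triple $(A_0, \alpha, x^2)$, and the freedom to rescale $x \mapsto cx$ for $c \in A_0^\times$ replaces $x^2$ by $c \, \alpha(c) \, x^2$, so the real invariant is the class of $x^2$ in $Z(A_0)^\times / \{c\,\alpha(c) : c \in A_0^\times\}$.

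Then I would run through each choice of $A_0$. For $A_0 = \R$: necessarily $\alpha = \id$ and $x^2$ lies in $\R^\times/(\R^\times)^2 = \{\pm 1\}$, giving $Cl_{+1}$ and $Cl_{-1}$. For $A_0 = \C$: either $\alpha = \id$, in which case $x^2 \in \C^\times/(\C^\times)^2$ is a single class, producing $\C l_1$; or $\alpha$ is complex conjugation, forcing $x^2 \in \R^\times$ with norms $c\bar c \in \R_{>0}$, so $x^2 \in \{\pm 1\}$, yielding $Cl_{+2}$ and $Cl_{-2}$. For $A_0 = \mathbb{H}$: by Skolem--Noether every automorphism is inner, $\alpha(a) = qaq^{-1}$, and $\alpha^2 = \id$ forces $q^2 \in \R$, hence $q$ is either real (so $\alpha = \id$) or purely imaginary. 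In the latter case, the odd element $y := q^{-1} x$ satisfies $ya = ay$ for all $a \in A_0$, reducing to the $\alpha = \id$ situation; then $x^2 \in Z(\mathbb{H}) = \R^\times$ modulo squares, giving $Cl_{+3}$ and $Cl_{-3}$. Summing yields $3 + 2 + (1 + 2) + 2 = 10$ algebras, matching the stated list.

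The main obstacle is the $A_0 = \mathbb{H}$ case: it requires the Skolem--Noether reduction, the verification that the resulting central odd generator indeed yields no extra isomorphism classes beyond $Cl_{\pm 3}$, and a careful identification of each abstractly described algebra $(A_0, \alpha, x^2)$ with the named Clifford algebra via a direct choice of Clifford generators (e.g.\ taking $e_1 = x$ and $e_2 = ix$ in the $A_0 = \C$ conjugation cases). The remaining bookkeeping --- checking that rescalings of $x$ must be by homogeneous (even) elements of $A_0^\times$ so that parity is preserved, and that homogeneous rescalings indeed act on $x^2$ by the twisted norm $c \,\alpha(c)$ --- is straightforward but should be done explicitly to justify the quotient that appears above.
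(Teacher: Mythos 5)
The paper itself offers no proof of this theorem --- it is quoted from Wall and Deligne --- so there is nothing internal to compare against. Your strategy is the standard ``graded Frobenius'' argument: restrict to the even part, invoke Frobenius (which, note, requires the implicit finite-dimensionality assumption), and then classify the extensions by the pair $(\alpha, x^2)$ modulo rescaling of an odd generator. The architecture is right and the final list and count of ten are correct.

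There is, however, one genuinely unjustified step. You assert that for an \emph{arbitrary} nonzero odd $x$ the element $x^2$ is central in $A_0$, and you justify this by saying it ``commutes with $x$ and with every element of $A_0$'' --- but commuting with every element of $A_0$ is precisely the claim at issue, and it is false in general. Concretely, in $Cl_{-3} \cong \mathbb{H} \oplus \mathbb{H}\omega$ with $\omega$ the central odd volume element, take $x = (1+i)\omega$: then $x^2 = (1+i)^2 = 2i$ is not central in $\mathbb{H}$, and $\alpha = \mathrm{conj}_{1+i}$ satisfies $\alpha^2 = \mathrm{conj}_i \neq \id$. So the deduction ``$\alpha^2 = \id$'' cannot be made before the case analysis, and your $\mathbb{H}$ case explicitly uses it as a premise. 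The damage is contained: when $A_0 = \R$ or $\C$, commutativity of $A_0$ makes $x^2$ central for free (it commutes with $A_0$ because $A_0$ is commutative, and with $x$ trivially), so $\alpha^2 = \mathrm{conj}_{x^2}|_{A_0} = \id$ does hold there; and for $A_0 = \mathbb{H}$ your own Skolem--Noether step repairs everything once you reorder it. Namely, for \emph{any} $q$ with $\alpha = \mathrm{conj}_q$ (no hypothesis on $\alpha^2$ needed), the element $y := q^{-1}x$ satisfies $ya = q^{-1}\alpha(a)x = ay$ for all $a \in \mathbb{H}$, hence $y^2 \in Z(A) \cap A_0 = \R^\times$, and real rescaling leaves only the sign of $y^2$, giving $Cl_{\mp 3}$. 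Two smaller points you should make explicit: rescaling $x \mapsto cx$ replaces $\alpha$ by $\mathrm{conj}_c \circ \alpha$, so $\alpha$ is an invariant only up to inner automorphisms of $A_0$ --- harmless for $\R$ and $\C$, and exactly why $\alpha$ may be normalized to $\id$ over $\mathbb{H}$; and pairwise non-isomorphism of the ten is most cheaply seen from the underlying ungraded algebras together with whether the grading is trivial. With the $\mathbb{H}$ case restructured as above, the proof is complete.
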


Every $\Z_2$-graded division algebra $D$ admits the structure of a graded $C^*$-algebra.
Recall from Remark \ref{rem:ABSKth} that $K_0(D^{op}) \cong K^{ABS}_0(D)$.
It follows by direct computation that as $D$ runs through the ten $\Z_2$-graded division algebras, $K^{ABS}_0(D)$ runs through the complex and real Bott song: 
\begin{align*}
\label{song}
K^{ABS}_0(\C) &= \Z, \quad K^{ABS}_0(\C l_1) = 0, 
\\
K^{ABS}_0(\R) &= \Z, \quad K^{ABS}_0(Cl_{-1}) = \Z_2, \quad K^{ABS}_0(Cl_{-2}) = \Z_2, \quad K^{ABS}_0(Cl_{-3}) = 0, 
\\
K^{ABS}_0(\mathbb{H}) &= \Z, \quad K^{ABS}_0(Cl_{3}) = 0, \qquad K^{ABS}_0(Cl_{2}) = 0, \qquad K^{ABS}_0(Cl_{1}) = 0.
\end{align*}
This follows by a direct computation using Definition \ref{def:finite-dim K-theory}.
As a consequence we have

\begin{conclusion}
\label{con:zerodimfree}
The group 
\[
\SPT^D \cong K_{2}(D^{\op}) \cong K^{ABS}_2(D)
\]
of SPT phases in spatial dimension zero with symmetry algebra $D$ recovers the usual periodic table in the condensed matter literature in spatial dimension zero, also see Table \ref{tab:tenfold}.
\end{conclusion}

We consider a generalization to positive spatial dimension in Section \ref{sec:posdim}.

\begin{remark}
\label{rem:morita}
The $\Z_2$-graded matrix algebra $M_{p|q}(\R) = \End \R^{p|q}$ is defined to be the algebra of all linear maps $\R^{p|q} \to \R^{p|q}$ with its grading given by grading-preserving versus grading-reversing maps.
Explicitly, it is the algebra of matrices of size $p+q$ by $p+q$ in which block diagonal matrices 
\[
\begin{pmatrix}
A_1 & 0 \\
0 & A_2
\end{pmatrix}
\quad A_1 \in M_{p}(\R), A_2 \in M_q(\R)
\]
are even and block off-diagonal matrices are odd.
A fact that will be crucial for us, is that $K$-theory is \emph{Morita-invariant}.
In other words, there is an isomorphism $K_0(M_{p|q}(\R) \otimes A) \cong K_0(A)$ for all $A$ and $p,q$ such that $p+q >0$, compare Lemma \ref{lem:morita}.
\end{remark}

We now clarify the mathematical reason the tenfold way arises.
Let $A$ be a finite-dimensional semisimple $\Z_2$-graded algebra.
A $\Z_2$-graded version of the Artin-Wedderburn theorem tells us that $A$ is the direct sum of graded matrix algebras over $\Z_2$-graded division rings $D$ over $\R$.
The fact that $K$-theory is additive under direct sums together with Morita-invariance allows us to determine the $K$-theory of $A$ if we know the decomposition of $A$ into matrix algebras over $\Z_2$-graded division rings.
In other words, the ten $\Z_2$-graded division algebras are the irreducible building blocks through which we can understand the representation theory of $A$ and through that free fermion SPT phases via the $K$-theory of the symmetry algebra.
Summarizing, we have

\begin{conclusion}
\label{con:finitedimsplit}
    Let $A$ be a finite-dimensional semisimple $\Z_2$-graded algebra over the reals.
    It admits a direct sum into matrix algebras over one of the 10 $\Z_2$-graded division algebras.
    Then $\SPT^A$ is the sum over $\SPT^D \cong K_2^{ABS}(D)$ for the division algebras $D$ appearing in this direct sum.
\end{conclusion}

Conclusion \ref{con:finitedimsplit} allows for straightforward computation of $\SPT^A$ for $A$ finite-dimensional semisimple, as the $K$-theory of division algebras is provided by the Bott song.
For example, if $A = \R G$ is the real group algebra of a finite group (seen as an algebra with trivial grading) its decomposition into matrix algebras over $\R,\C$ and $\mathbb{H}$ can be found by studying the usual complex representation theory of $G$ and its Frobenius-Schur indicators.
This approach has been generalized further to certain $\Z_2$-graded groups \cite{superfrobeniusschur}.

\section{Phases protected by a group}
\label{sec:groups}

In previous sections, we considered $\Z_2$-graded $C^*$-algebras with odd elements encoding time-reversing symmetries.
In the physics literature, it is more common to present symmetries by groups.
In this section, we study phases protected by symmetry groups and the relationship with phases protected by symmetry algebras.
This in particular allows us to define phases in positive spatial dimension in Section \ref{sec:posdim}.

\subsection{Fermionic groups}

We will encode symmetries of fermionic systems as follows, see \cite[Chapter 3]{luukthesis} for a more comprehensive discussion.

\begin{definition}
\label{def: fermionic group}
A \emph{fermionic group} is a second countable\footnote{This is a technical assumption which allows us to use the existing theory of twisted group $C^*$-algebras.} Hausdorff topological group $G$ together with a central element $(-1)^F$ of square one and a continuous homomorphism $\theta\colon G \to \Z_2^T = \{0,1\}$ such that $\theta((-1)^F) = 0$. 
\end{definition}

Here the group element $(-1)^F$ plays the role of the fermion parity operator.
We again record the possibility of time-reversing symmetries that act anti-unitarily on the Hilbert space through a $\Z_2$-grading, so for time-reversing symmetries $T$ we have $\theta(T) = 1$.
We require fermion parity to be time-preserving.
The underlying \emph{bosonic group} of a fermionic group $(G,(-1)^F,\theta)$ is $G_b := G/ (-1)^F $.
We say that a fermionic group is \emph{bosonic} if $(-1)^F = 1$.

\begin{definition}
\label{def:oppfermgrp} 
Define the \emph{opposite} $G^{op}$ of a fermionic group $G$ as the same set but with multiplication $*$ given by
\[
g_1 * g_2 := 
\begin{cases}
(-1)^F g_1 g_2  &\text{ both odd,}
\\
g_1 g_2 &\text{otherwise.}
\end{cases}
\]
\end{definition}
This is again a fermionic group.
The logic behind this definition is that the obvious definition of `opposite multiplication with Koszul sign' is 
\[
g_1 *' g_2 := 
\begin{cases}
(-1)^F g_2 g_1 & \text{ both odd,}
\\
g_2 g_1 &\text{otherwise,}
\end{cases}
\]
and we can make this multiplication isomorphic to the simpler one in the definition above by applying the anti-homomorphism $g \mapsto g^{-1}$.

Note that unlike for bosonic or purely even groups, $G \ncong G^{op}$. 
For example, give the Pin groups $\Pin^{\pm}_d$ the fermionic group structure in which $(-1)^F = -1 \in \Pin^\pm_d$ is the nontrivial element in the kernel of $\Pin^\pm_d \to O_d$ and $\theta$ is the determinant $\Pin^+_d \to O_d \xrightarrow{\det} \Z_2$.
Then $(\Pin^+_d)^{op} = \Pin^-_d$.

\subsection{Free SPT phases for finite symmetries}
\label{sec:finiteSPT}

Before discussing more complicated cases in the rest of this section, we consider for simplicity the case where $G$ is a finite fermionic group.
Our main goal is to answer the question: which algebra $A$ can we assign to a fermionic group $G$ so that the $K$-theory of $A$ classifies SPT phases protected by $G$?
The answer we propose will be a kind of twisted group algebra of $G$ we call the fermionic group algebra.

\begin{definition}
The \emph{fermionic group algebra} is the real $\Z_2$-graded algebra
\[
C^*_f(G):= \frac{\R[G]}{((-1)^F + 1)},
\]
where $\R[G]$ is the real group algebra, $((-1)^F + 1)$ is the ideal enforcing $(-1)^F = -1$ and the grading is given by $\theta$.
\end{definition}

We make $C^*_f(G)$ into a $C^*$-algebra by $g^* := g^{-1}$.
Note that this is well-defined since $((-1)^F g)^{-1} = (-1)^F g^{-1}$ in $\R[G]$.
Note that $C^*_f(G^{op}) \cong C^*_f(G)^{op}$, also see Remark \ref{rem:opvstau}.

\begin{example}[class DIII]
\label{example: class DIII}
The fermionic group denoted $\Z_4^{TF}$ is the group $\Z_4$ generated by an element denoted $T$ (time-reversal).
Both $\theta$ and $(-1)^F$ are nontrivial, so that $\theta(T) = 1$ is time-reversing and $T^2 = (-1)^F$.
This is symmetry class DIII in the tenfold way, see Section \ref{sec: ten fermionic groups}.
The fermionic group algebra is isomorphic to $Cl_{-1}$.
\end{example}

\begin{example}
\label{ex:bosonicgroupalg}
    If $G$ is bosonic, then $C^*_f(G) = 0$.
\end{example}

A finitely generated projective module over $C^*_f(G)$ is a finite-dimensional real representation $(M,R)$ of $G$ such that $R((-1)^F) = -1$, compare Remark \ref{rem:-1vs-1F}.
A Hilbert $C^*_f(G)$-module is not immediately the same as a unitary representation of $G$ such that $R((-1)^F) = -1$ because the inner product is valued in the group algebra.
However, we can use the trace 
\[
tr(g) = 
\begin{cases}
 1 &     g=1
 \\
 -1 & g = (-1)^F
 \\
 0 & \text{else}
\end{cases}
\]
on $C^*_f(G)$ to define a Hilbert space in this finite-dimensional setting, also see Remark \ref{rem:repsvsmods}.
Since this makes $C^*_f(G)$ into an $H^*$-algebra, there is also a converse construction: if $\mathcal{H}$ is a unitary representation of $G$ such that $R((-1)^F) = -1$, then we can define a Hilbert $C^*_f(G)$-module by 
\[
\langle m_1, m_2 \rangle = act_{m_1}^\dagger act_{m_2}(1),
\]
where $act_m$ for $m \in M$ denotes the linear map $C^*_f(G) \to M$ given by $g \mapsto gm$.
We refer to \cite[Construction 3.4.4]{UQSL} for details.

As a special case of Definition \ref{def:SPT phase} we obtain:

\begin{definition}
\label{def:zerodimSPT}
The group of SPT phases protected by the finite fermionic symmetry $(G,\theta,(-1)^F)$ in spatial dimension zero is the group of SPT phases protected by the symmetry algebra $C^*_f(G)$.
\end{definition}

The reason that we refer to this group as the phases in the spatial dimension zero, is that from a condensed matter perspective we think of the Majorana space as the finite-dimensional vector space localized to a single atomic site.

\begin{proposition}
The group of SPT phases protected by the $G$ in spatial dimension zero is
\[
K^{ABS}_2(C^*_f(G)),
\]
where we used the definition of $K$-theory of finite-dimensional $\Z_2$-graded algebras as in Definition \ref{def:finite-dim K-theory}.
\end{proposition}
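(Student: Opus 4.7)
The plan is to combine Theorem \ref{th:IQPVK-theory} with the finite-dimensional comparison between Karoubi and Atiyah-Bott-Shapiro $K$-theory recorded in Remark \ref{rem:ABSKth}. First I would observe that since $G$ is finite, the quotient $C^*_f(G) = \R[G]/((-1)^F + 1)$ is a finite-dimensional $\Z_2$-graded real $C^*$-algebra. In particular so is its opposite $C^*_f(G)^{op}$ (compare Remark \ref{rem:opvstau}), and all Karoubi $K$-theory statements apply.

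Next I would apply Theorem \ref{th:IQPVK-theory} directly to $A = C^*_f(G)$: this already gives
\[
\SPT^{C^*_f(G)} \;\cong\; K_2(C^*_f(G)^{op}).
\]
The final step is to identify the right-hand side with $K^{ABS}_2(C^*_f(G))$. For this I would invoke Remark \ref{rem:ABSKth}, which states that for a finite-dimensional real $\Z_2$-graded $C^*$-algebra $B$, the natural map $K^{ABS}_0(B^{op}) \to K_0(B)$, $[M,\epsilon] \mapsto [M,\epsilon,-\epsilon]$, is an isomorphism; applying this with $B = C^*_f(G)^{op}$ (so that $B^{op} = C^*_f(G)$) and then tensoring through with $Cl_{-n}$ (using Definition \ref{def:finite-dim K-theory} and Definition \ref{def:higherkarkth}) yields the higher-degree version
\[
K^{ABS}_n(C^*_f(G)) \;\cong\; K_n(C^*_f(G)^{op}).
\]
Specializing to $n=2$ and composing with the previous isomorphism gives the claim.

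The only conceptual subtlety is keeping the opposites and signs straight: Karoubi $K$-theory of an algebra $A$ corresponds, via the conjugate-module equivalence $\overline{(\cdot)}$ of Lemma \ref{lem:barequiv}, to the ABS-style description on $A^{op}$, as explicitly noted in Remark \ref{rem:ABSconvention}. I would therefore include a brief sentence verifying that the composite isomorphism is the one sending a pair $[(M,J_1,J_2)]$ of $C^*_f(G)$-symmetric polarizations to the class of the underlying $\Z_2$-graded $(C^*_f(G) \otimes Cl_{-2})$-module. The main obstacle is conceptual rather than technical: one must check that the finite-dimensional hypothesis really is satisfied here, which is immediate from $\dim_\R C^*_f(G) \le |G|$, and that no delicate topological argument is needed beyond the standard Artin-Wedderburn style decomposition already invoked in Conclusion \ref{con:finitedimsplit}.
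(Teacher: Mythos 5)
Your proposal is correct and follows essentially the same route as the paper: the paper's own proof simply invokes Theorem \ref{th:IQPVK-theory} and then replaces $K$ by $K^{ABS}$ via Remark \ref{rem:ABSKth}, using that $C^*_f(G)$ is finite-dimensional (semisimple). Your additional bookkeeping of the opposites and the higher-degree version is exactly the content the paper leaves implicit.
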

\begin{proof}
Since $C^*_f(G)$ is semisimple~\cite[Example 3.5.1]{GK14}, we can replace $K$ by $K^{ABS}$, see Remark \ref{rem:ABSKth}.
This is now a consequence of Theorem \ref{th:IQPVK-theory}.
\end{proof}

\subsection{The tenfold way using fermionic groups}
\label{sec: ten fermionic groups}

It can be desirable to give the tenfold way in terms of symmetry groups as opposed to symmetry algebras as we did in Section \ref{sec: ten division algebras}.
We have already discussed how to obtain a $\Z_2$-graded $C^*$-algebra from a fermionic group.
Luckily, there is also a canonical way to pass from $\Z_2$-graded division algebras to fermionic groups, giving us a special family of ten fermionic groups as follows.
Note that any $\Z_2$-graded division algebra $D$ has the structure of a real $C^*$-algebra. Define the sphere of $D$ to be the group
\[
S(D) := \{d \in D^\times : \|d\| = 1\},
\]
where $D^\times \subseteq D$ is the group of nonzero homogeneneous (hence invertible) elements.
We make $S(D)$ into a (potentially infinite) fermionic group by using $(-1)^F = [-1] \in S(D)$ and $\theta[d] = |d|$ is the grading of $D$.
We obtain a list of ten fermionic groups which can be interpreted as the internal symmetry groups of the tenfold way.
In mathematical notation these groups are
\begin{align*}
 &\Spin^c(1), \Pin^c(1)
    \\
&\Spin(1), \Pin^-(1), \Pin^-(2), \Pin^-(3)
    \\
    &\operatorname{Sp}(1), \Pin^+(3), \Pin^+(2), \Pin^+(1).
\end{align*}
In notation closer to the physics we can write them as
\begin{align*}
&U(1)_Q, U(1)_Q \times \Z_2^T, 
\\
&\Z_2^F, \Z_4^{TF}, \frac{U(1)_Q \rtimes \Z_4^{TF}}{\Z_2^F}, SU(2) \times \Z_2^T, 
\\
&SU(2), \frac{SU(2) \times \Z_4^{TF}}{\Z_2^F}, U(1)_Q \rtimes \Z_2^T, \Z_2^F \times \Z_2^T,
\end{align*}
where we abused the superscript $T$ to denote a time-reversing symmetry no matter what commutation properties it satisfies with respect to the other symmetries. 
The rest of this section is devoted to interpreting several of these groups physically.

\begin{example}[class AIII]
\label{ex:class AIII}
The fermionic group $\Pin^c(1) = U(1)_Q \times \Z_2^T$ consists of charge-symmetry and a time-reversing symmetry $T$ satisfying $e^{iQt} T = T e^{iQt}$.
Since $T$ is time-reversing, it anticommutes with $i$.
Therefore $T$ anticommutes with charge, and so is a particle-hole symmetry, see Example \ref{ex: sublattice}.
This is the internal symmetry group of class AIII topological insulators, of which the Su-Schrieffer-Heeger model is an Example, see \cite[Section 4.3]{zirnbauerparticlehole} for further discussion.
We will study the effect of interactions on the SSH model in future work~\cite{camluuk}.
\end{example}

\begin{example}[Class CII]
\label{ex:class CII}
We can realize the symmetry group $\Pin^-(3) \cong SU(2) \times \Z_2^T$ by letting the $\Z_2^T$ be generated by an anti-unitary particle-hole symmetry $K$ such as a sublattice symmetry.
To recover the whole symmetry group, we include charge $Q$ and a time-reversal $T$ with $T^2 = (-1)^F$.
Then $iQ$ and $TK$ are unitary symmetries that anticommute and square to $(-1)^F$ and so generate a quaternion algebra yielding the $SU(2)$.
A translation back to Clifford algebra language $\Pin^-(3) \subseteq Cl_{-3}$ is 
\[
\gamma_1 \leftrightarrow T, \quad \gamma_1 \gamma_2 \leftrightarrow iQ, \quad \gamma_1 \gamma_2 \gamma_3 \leftrightarrow K,
\]
where $\gamma_1, \gamma_2, \gamma_3$ are generators of $Cl_{-3}$ with square $-1$.
\end{example}

\begin{example}
\label{ex:groupvsalgebratenfold}
Let $D$ be a $\Z_2$-graded division algebra with $D_{ev} = \R$, so $D = \R, Cl_{+1}$ or $C l_{-1}$, corresponding to classes D, BDI and DIII respectively.
Then $S(D)$ is a finite fermionic group and $C^*_f(S(D)) = D$ and so $S(D)$-protected SPT phases agree with $D$-protected SPT phases.
As explained in Section \ref{sec: ten division algebras}, the $K$-theory groups 
\[
 \SPT^{C^*_f(S(D))} \cong \SPT^D \cong K^{ABS}_2(D)
\]
agree with the usual classification for free fermion topological phases in spatial dimension zero.
\end{example}

Unfortunately, the operations $D \mapsto S(D)$ and $G \mapsto C^*_f(G)$ are not inverses to each other for other choices of $D$.
Namely, $S(D)$ is not even finite.
We will discuss the fermionic group algebra of infinite fermionic groups in the next section.
However, we postpone a deeper discussion of the difference between $C^*_f(S(D))$ and $D$ to Sections \ref{sec: charge 1} and \ref{sec: spin1/2}.

\subsection{Infinite groups and the fermionic group \texorpdfstring{$C^*$}{C*}-algebra}
\label{sec:infinitegroups}

The main task of this section is to answer the question: which $C^*$-algebra $A$ can we assign to a fermionic group $G$ so that the $K$-theory of $A$ classifies SPT phases protected by $G$?
The proposed answer will be a suitable generalization of the fermionic group algebra of $G$ (Definition \ref{def:twisted gp alg}), which in general will not be unital.
In case $G$ is a compact group, we explain how to use the Peter Weyl theorem to do computations (Theorem \ref{th: twisted peterweyl}).

To deal with infinite groups, we have to replace the group algebra $\R[G]$ by the real group $C^*$-algebra\footnote{Since we will only consider amenable groups, we will not distinguish between the group $C^*$-algebra and the reduced group $C^*$-algebra. In fact, the groups we have in mind for our condensed matter applications are extensions of compact Lie groups by abelian groups.} $C^*(G)$.
This is a $C^*$-algebra of which its representations are in one-to-one correspondence with unitary representations of $G$ on Hilbert spaces.
However, when defining the fermionic group $C^*$-algebra for infinite groups we face the problem that elements of $G$ are in general not elements of $C^*(G)$, but only of the multiplier algebra of $C^*(G)$ (Definition \ref{def:multiplieralg}).
We hope readers not familiar with group $C^*$-algebras will gain some intuition from the following example.

\begin{example}[class A]
\label{example: class A}
Consider the real division algebra $D = \C$ which has associated sphere $G := S(D) = U(1)$ as a fermionic symmetry group.
The real group $C^*$-algebra can be defined as a certain closure of the algebra of continuous periodic functions $f\colon [0,2\pi] \to \R$.
Taking Fourier series yields an isomorphism between the complex group $C^*$-algebra and the $C^*$-algebra $C_0(\Z, \C)$ of functions $\Z \to \C$ vanishing at infinity.
The function $e^{inx}/2\pi $ is mapped to the function $\delta_{n,\bullet}\colon \Z \to \C$. 

Note that this $C^*$-algebra is not unital because a nontrivial constant map $\Z \to \C$ does not vanish at infinity. 
More generally, $U(1)$ cannot be made into a subset of the group $C^*$-algebra in the expected way because $z \in U(1)$ would correspond to the delta function at $z$, but the group $C^*$-algebra is not large enough to contain the delta functions.
Namely, the delta function at $a \in [0,2\pi]$ corresponds to the map $n \mapsto e^{ina}$, which does not vanish at infinity.
In particular, the delta function at $-1 \in U(1)$ corresponding to the element $(-1)^F$ is mapped to $n \mapsto (-1)^n$.
However, $U(1)$ is inside the multiplier algebra of $C^*(U(1))$, which is isomorphic to the $C^*$-algebra of bounded functions on $\Z$, i.e. the $C^*$-algebra of all functions on the Stone-\v{C}ech compactification of $\Z$.

Because $\overline{e^{inx}} = e^{-inx}$, the real group $C^*$-algebra of $U(1)$ is isomorphic to the real $C^*$-algebra of functions $\alpha\colon \Z \to \C$ vanishing at infinity with the extra property that $\alpha(n) = \overline{\alpha(-n)}$.
In other words, since we are now interested in real representations of $U(1)$ and the nontrivial irreducible representations are of complex type, we have to pair charge $n$ and charge $-n$ into a single Weyl-type representation
\begin{align}
\label{eq: charge n rep}
\rho_n\colon U(1) \to O(2), \quad \rho_n(e^{i\alpha}) = 
\begin{pmatrix}
\cos n \alpha & - \sin n \alpha \\
\sin n \alpha & \cos n\alpha
\end{pmatrix}.
\end{align}
For example, the functions $\Z \to \C$ that map $m$ to
\[
\frac{\delta_{n,m} + \delta_{-n,m}}{2}, \quad \frac{\delta_{n,m} + \delta_{-n,m}}{2i} \quad n >0
\]
correspond with the real-valued $2\pi$-periodic functions $\sin(nx), \cos(nx)$.

The idea of the spin-charge relation is to identify the group element $-1 \in U(1)$ with the element $-1 \in \R$ in the ground field.
However, unlike for finite groups the expression
\[
\frac{C^*(G)}{(1+(-1)^F)}
\]
is not defined since $-1 \in U(1) \notin C^*(U(1))$.
Intuitively, $(-1)^F = -1$ would only leave irreducible (complex) representations of odd charge $n \in \Z$.
Therefore we might cut out the even charge representations as elements of the Fourier transformed space $\Z$ and define the real $C^*$-algebra
\[
C^*_f(U(1)) := \{f\colon \Z \to \C: f(-n) = \overline{f(n)}, f(2n) = 0\}
\]
\end{example}

The above argument generalizes to any compact group because a version the Peter Weyl theorem, which can be formulated as follows.
The direct sum of a family $\{A_i: i \in \mathbb{N}\}$ of $C^*$-algebras is defined as 
\[
\bigoplus_{n \in \mathbb{N}} A_i :=  \colim_n \left( \bigoplus_{i = 1}^n A_i \right)
\]
in the category of (not necessarily unital) $C^*$-algebras.
An explicit model is given by the subalgebra of $\prod_i A_i$ given by collections $(a_i)$ for which $\lim_{i \to \infty} \| a_i \| < \infty$, see \cite[Appendix L]{weggeolsen}.

\begin{theorem}
\label{th:PeterWeyl}
If $G$ is a compact group, there is an isomorphism of real $C^*$-algebras
\[
C^*(G) \cong \bigoplus_{\rho \text{ f.d. irrep}} M_{\dim_{D(\rho)} (\rho)}(\End_{D(\rho)} \rho)
\]
where $D(\rho) = \R, \C, \mathbb{H}$ is the division algebra of $G$-equivariant endomorphisms of $\rho$, $\dim_{D(\rho)} \rho$ is the rank of $\rho$ as a $D(\rho)$-module and $\End_{D(\rho)} \rho$ are the $D(\rho)$-linear endomorphisms of $\rho$. 
\end{theorem}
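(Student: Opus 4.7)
The plan is to derive the real Peter--Weyl decomposition from the classical complex version by taking fixed points of a natural involution. Concretely, one has $C^*(G) \otimes_\R \C \cong C^*(G;\C)$, and the classical Peter--Weyl theorem for compact groups yields
\[
C^*(G;\C) \cong \bigoplus_{\pi} M_{\dim_\C \pi}(\C),
\]
where $\pi$ ranges over finite-dimensional complex irreducible unitary representations of $G$, with the $C^*$-direct sum in the restricted sense of \cite[Appendix L]{weggeolsen}. The real subalgebra $C^*(G) \subseteq C^*(G;\C)$ is cut out by complex conjugation of coefficients, so the statement reduces to identifying the fixed points of this involution on the right-hand side with the claimed sum indexed by real irreps.

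For that identification, I would invoke the Frobenius--Schur trichotomy: each complex irrep $\pi$ is either self-conjugate with $G$-invariant real structure, self-conjugate with $G$-invariant quaternionic structure, or non-self-conjugate. Transported along Peter--Weyl, complex conjugation permutes the blocks $M_{\dim \pi}(\C) \leftrightarrow M_{\dim \bar{\pi}}(\C)$, so one analyzes the orbits separately. For a two-element orbit $\{\pi, \bar{\pi}\}$, the fixed points of swap-plus-conjugate on $M_{\dim_\C \pi}(\C) \oplus M_{\dim_\C \pi}(\C)$ form a diagonal copy of $M_{\dim_\C \pi}(\C)$ viewed as a real algebra, which matches $\End_{D(\rho)} \rho$ for the real irrep $\rho$ of complex type whose complexification is $\pi \oplus \bar{\pi}$. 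For a one-element orbit, the normalized $G$-invariant conjugate-linear intertwiner $J \colon V_\pi \to V_\pi$ satisfies $J^2 = \pm 1$, and the induced involution on $M_{\dim \pi}(\C)$ takes the form $A \mapsto J^{-1} \bar{A} J$; its fixed points are $M_{\dim_\C \pi}(\R)$ in the real case and $M_{\dim_\C \pi / 2}(\mathbb{H})$ in the quaternionic case. In each case these agree with $\End_{D(\rho)} \rho$ for the unique associated real irrep $\rho$, with $\dim_{D(\rho)} \rho$ equal to $\dim_\C \pi$, $\tfrac{1}{2}\dim_\C \pi$, and $\dim_\C \pi$ in the real, quaternionic, and complex cases respectively.

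The one step I expect to require real care is not the algebraic content, which is essentially linear-algebraic bookkeeping, but the verification that the isomorphism respects the $C^*$-topology and the restricted direct sum for noncompact index sets. For compact $G$, however, this is standard: the Haar measure provides a faithful tracial state, the matrix-coefficient subspaces for distinct irreps are mutually orthogonal closed two-sided ideals of $C^*(G;\C)$, and complex conjugation preserves each Frobenius--Schur orbit of blocks, so taking real fixed points commutes with the $C^*$-direct sum. As a sanity check, one verifies that $\End_{D(\rho)} \rho \otimes_\R \C$ recovers $M_{\dim_\C \pi}(\C)$ for real and quaternionic type and $M_{\dim_\C \pi}(\C) \oplus M_{\dim_\C \pi}(\C)$ for complex type, confirming that the real decomposition exhausts all of $C^*(G;\C)$ upon complexification.
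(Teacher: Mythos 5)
Your argument is correct, but it is worth noting that the paper does not actually prove Theorem \ref{th:PeterWeyl}: it presents the statement as a known ``version of the Peter--Weyl theorem'' and then \emph{uses} it as an input in the appendix, where the twisted analogues (Proposition \ref{prop: fermionic group alg} and Theorem \ref{th:unit charge fd}) are established by realizing $C^*(G)$ nondegenerately on $\mathcal{H} = \bigoplus_\rho V$ and verifying the universal property of the (twisted) crossed product. So your derivation --- complexify to $C^*(G;\C)$, apply the classical complex Peter--Weyl decomposition into $c_0$-summed matrix blocks, and identify $C^*(G)$ as the fixed points of the conjugation involution, analyzed orbit-by-orbit via the Frobenius--Schur trichotomy --- is a genuinely self-contained proof of something the paper takes as given, and it buys an explicit dictionary between the complex blocks and the real blocks (real type $\mapsto M_{\dim_\C\pi}(\R)$, quaternionic type $\mapsto M_{\dim_\C\pi/2}(\mathbb{H})$, conjugate pairs $\mapsto M_{\dim_\C\pi}(\C)$ as a real algebra) that the paper only illustrates through the $U(1)$ example. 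Your handling of the topology is also adequate: the involution is isometric and permutes the blocks within Frobenius--Schur orbits, so passing to fixed points commutes with the restricted direct sum. One small caveat on bookkeeping rather than on your proof: the displayed formula $M_{\dim_{D(\rho)}(\rho)}(\End_{D(\rho)}\rho)$, read literally with $\End_{D(\rho)}\rho$ the full algebra of $D(\rho)$-linear endomorphisms, double-counts the matrix size; the intended block is $\End_{D(\rho)}\rho \cong M_{\dim_{D(\rho)}\rho}(D(\rho))$, which is exactly what your three fixed-point computations produce and what the paper's own example $C^*(U(1)) \cong \R \oplus \bigoplus_{n>0}\C$ confirms.
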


\begin{example}
We have that
    \[
    C^*(U(1)) \cong \R \oplus \bigoplus_{i \in \mathbb{N}} \C \cong \{\alpha \in  C_0(\Z): \alpha(-n) = \overline{\alpha(n)}\},
    \]
    recovering Example \ref{example: class A}.
\end{example}

\begin{corollary}
\label{cor:split}
    If $G$ is a compact group, there is an isomorphism of groups
    \[
    \SPT^{C^*(G)} \cong \bigoplus_{\rho \text{ f.d. irrep}} K_{2}^{ABS}(D(\rho)).
    \]
\end{corollary}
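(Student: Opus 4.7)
The plan is to chain together the main classification result with the structural decomposition of $C^*(G)$ provided by Peter--Weyl. First, I would apply Theorem \ref{th:IQPVK-theory} to identify $\SPT^{C^*(G)} \cong K_2(C^*(G)^{op})$, so that the statement reduces to a purely $K$-theoretic computation. Note that because $G$ is an ordinary (bosonic) compact group, $C^*(G)$ carries the trivial $\Z_2$-grading, so no sign subtleties from the Koszul rule appear when passing to the opposite.

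Next, I would apply Theorem \ref{th:PeterWeyl} to decompose $C^*(G)$ as a countable direct sum of matrix algebras over the division algebras $D(\rho) \in \{\R,\C,\mathbb{H}\}$. Since the opposite algebra functor distributes over direct sums, we obtain
\[
C^*(G)^{op} \cong \bigoplus_{\rho} M_{\dim_{D(\rho)}(\rho)}\bigl(\End_{D(\rho)} \rho\bigr)^{op}.
\]
Applying Karoubi $K$-theory (in degree $2$) commutes with countable direct sums of $C^*$-algebras by continuity of $K$-theory, so
\[
K_2(C^*(G)^{op}) \cong \bigoplus_\rho K_2\!\left(M_{\dim_{D(\rho)}(\rho)}(\End_{D(\rho)}\rho)^{op}\right).
\]
By Morita invariance (Remark \ref{rem:morita}), each summand collapses to $K_2(D(\rho)^{op})$.

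Finally, since each $D(\rho)$ is finite-dimensional and purely even, Remark \ref{rem:ABSKth} (together with Definition \ref{def:higherkarkth}) yields $K_2(D(\rho)^{op}) \cong K^{ABS}_2(D(\rho))$, completing the identification. I expect the main technical obstacle to be the commutation of $K_2$ with the infinite direct sum in Peter--Weyl; this is where one must invoke the continuity of $C^*$-algebraic $K$-theory with respect to filtered colimits, presenting the infinite sum as the colimit of its finite truncations. Everything else is essentially formal given Theorem \ref{th:IQPVK-theory}, Morita invariance, and the elementary observation that $\R$, $\C$, and $\mathbb{H}$ are each canonically isomorphic to their opposites as $C^*$-algebras.
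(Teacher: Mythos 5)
Your proposal is correct and follows essentially the same route as the paper: the paper's (much terser) proof likewise combines Theorem \ref{th:IQPVK-theory}, the Peter--Weyl decomposition of Theorem \ref{th:PeterWeyl}, additivity of $K$-theory over infinite direct sums via compatibility with directed colimits, and Morita invariance, with the passage from $K_2$ of the (purely even, finite-dimensional) division algebras to $K_2^{ABS}$ handled exactly as you do via Remark \ref{rem:ABSKth}. The only difference is that you spell out the intermediate steps (distributing the opposite over the direct sum, the self-oppositeness of $\R$, $\C$, $\mathbb{H}$) that the paper leaves implicit.
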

\begin{proof}
    Note that $K$-theory is additive for infinite direct sums (which follows from compatibility with directed colimits \cite[Proposition 6.2.9]{weggeolsen}).
    The result follows from Theorem \ref{th:PeterWeyl} and Morita invariance of $K$-theory.
\end{proof}

The above corollary can be used to compute SPT phases protected by a compact group from its representation theory in a similar fashion to computing the $K$-theory of finite-dimensional semisimple algebras as in Conclusion \ref{con:finitedimsplit}.
Indeed, we can use Frobenius–Schur theory to determine which of the three division algebras $D(\rho)$ is.

In SPT phases, one is often also interested in groups that include translation symmetries of the spatial lattice of atoms $G = \Z^r$, which are not compact and so don't have discrete representation theory.
For this we will need other techniques to compute the group of SPT phases, see Section \ref{sec:posdim}.

We will define the fermionic group algebra as a twisted version of the group $C^*$-algebra of $G_b$, see \cite{echterhoff2008k} for an exposition of twisted group algebras in a $K$-theoretical setting.
In order to get a short exact sequence
\[
1 \to \Z_2^F \to G \to G_b \to 1
\]
of topological groups, we will assume $(-1)^F \neq 1$ (see Example \ref{ex:bosonicgroupalg}).
By analogy to how one can define twisted group algebras for discrete groups, we would like to pick a section $s\colon G_b \to G$ and define a 2-cocycle $\nu\colon G_b \times G_b \to \Z_2^F$ as $s(g) s(h) = (-1)^{\nu(g,h)} s(gh)$.
However, if the double cover $G \to G_b$ is topologically nontrivial, no continuous section exists.
Fortunately, a (Borel) measurable section exists since $G$ is assumed to be second countable~\cite{rieffel1966extensions}.\footnote{For Lie groups, locally continuous sections always exist which can be easier to work with for concrete computations. }
This is sufficient for defining the relevant twisted semidirect products. 
So assume we are given a measurable section $s\colon G_b \to G$ of the double cover such that $s(1) = 1$. 
This makes $\nu\colon G_b \times G_b \to \Z_2^F$ defined as $s(g) s(h) = (-1)^{\nu(g,h)} s(gh)$ measurable.
We obtain a concrete representative for the cohomology class in Moore's measurable group cohomology $H^2_m(G_b,\Z_2^F)$ corresponding to the extension.
A choice of such a representative cocycle, then allows us to define the fermionic group algebra as the $\nu$-twisted real group $C^*$-algebra of $G_b$.

\begin{definition}
\label{def:twisted gp alg}
Let $\nu$ be the twist coming from a cocycle in  $Z^2_m(G_b,\Z_2^F)$ representing the extension
\[
1 \to \Z_2^F \to G \to G_b \to 1
\] 
under the inclusion $\Z_2^F \subseteq \R$ by $\pm 1$.
The $\nu$\emph{-twisted fermionic group $C^*$-algebra} is the twisted semidirect product
\[
C^*_f(G) := \R \rtimes_\nu G_b,
\]
where we take the twisted $C^*$-dynamical system $(\nu,\alpha)$ of $G$ on $\R$ to have trivial action $\alpha$ of $G_b$ on $\R$.
\end{definition}

We refer to Appendix \ref{sec:twistedsemidirect} for relevant details, in particular the definitions of twisted $C^*$-dynamical systems and twisted semidirect product $C^*$-algebras.
The main idea is that the fermionic group algebra is the universal $C^*$-algebra of which real representations are suitably continuous $\nu$-twisted representations of $G$.
The homomorphism $\theta$ defines a $\Z_2$-grading on $\R \rtimes_\nu G_b$ by giving the involution which is $-1$ on the odd part.

\begin{example}
    If $G = \Z_2^F \times G_b$, then $C^*_f(G) \cong C^*(G_b)$ as $C^*$-algebras.
\end{example}

\begin{remark}
    It follows from Proposition \ref{prop: fermionic group alg} that for a compact fermionic group the fermionic group algebra is independent of the choice of the choice of section.
    We expect that this is true more generally, but we won't prove it here.
\end{remark}

\begin{remark}
\label{rem:segalcoh}
There is a whole zoo of possible cohomology theories for topological groups, some being better behaved on a larger class of groups, others being closer to a classical definition of group cohomology using group cocycles.
See \cite{wagemann2015cocycle} for a review of this topic.
One of the most general and best behaved cohomology theories of topological groups is Segal cohomology~\cite{segalgroupcohomology}, see \cite[Section 4.1]{schommerpries2group} for a brief review.
A modern alternative is the cohomology of condensed groups \cite{scholzecondensed}.
Since we will only be interested in the case where $G_b$ is a Lie group (possibly noncompact with countably infinitely many components), measurable cohomology is isomorphic to Segal cohomology which classifies extensions~\cite[Corollary 4.8, Remark 4.13]{wagemann2015cocycle}.
Since $\Z_2^F$ is a discrete, $G_b$-module Segal cohomology with values in it is isomorphic to $H^n(BG_b, \Z_2)$.
\end{remark}

Looking at Theorem \ref{th:PeterWeyl}, it would be reasonable to expect that for compact groups the fermionic group $C^*$-algebra is the subalgebra of the group algebra given by only allowing representations in which $(-1)^F$ acts as $-1$.
In the appendix, we will prove is indeed true, which will be useful for $K$-theory computations:

\begin{theorem}
\label{th: twisted peterweyl}
Let $(G,(-1)^F, \theta)$ be a compact fermionic group.
The fermionic group $C^*$-algebra is isomorphic to the real $C^*$-algebra
\[
\bigoplus_{\substack{\text{f.d. irrep }\rho \\\text{s.t. }\rho((-1)^F) = -1}}  M_{\dim_{D(\rho)} (\rho)}(\End_{D(\rho)} \rho)
\]
which is $\Z_2$-graded by requiring $\rho(g)$ to be even/odd depending on whether $\theta(g) = 0$ or $\theta(g) = 1$.
\end{theorem}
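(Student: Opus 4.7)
The plan is to reduce the statement to the ordinary real Peter--Weyl Theorem \ref{th:PeterWeyl} applied to the whole compact fermionic group $G$, and then extract the direct summand cut out by the spin--charge relation $(-1)^F = -1$.

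First, I would establish the key universal property: nondegenerate real $*$-representations of $C^*_f(G) = \R \rtimes_\nu G_b$ are in natural bijection with continuous orthogonal representations $\rho \colon G \to O(V)$ satisfying $\rho((-1)^F) = -1$. By the construction of the twisted crossed product (Appendix \ref{sec:twistedsemidirect}), such a representation is equivalent to a Borel $\nu$-cocycle representation $\pi \colon G_b \to O(V)$ with $\pi(g)\pi(h) = (-1)^{\nu(g,h)} \pi(gh)$. Since $\nu$ was chosen so that its class represents the extension $G \to G_b$, along the measurable section $s$ this $\pi$ lifts to a Borel homomorphism $\tilde\pi \colon G \to O(V)$ with $\tilde\pi((-1)^F) = -1$, and the second-countability assumption in Definition \ref{def: fermionic group} ensures that a Borel homomorphism into a Polish group is automatically continuous.

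Second, I would apply Theorem \ref{th:PeterWeyl} to the compact group $G$ to write
\[
C^*(G) \cong \bigoplus_{\rho \text{ f.d.\ irrep of } G} M_{\dim_{D(\rho)} \rho}(\End_{D(\rho)} \rho).
\]
Since $(-1)^F$ is central in $G$ and of order two, its image under any real irreducible $\rho$ lies in the centre of the commutant $D(\rho)$ and squares to the identity, forcing $\rho((-1)^F) = \pm 1$. This splits the family of irreducibles into a bosonic part and a fermionic part and induces a decomposition $C^*(G) \cong C^*(G)_{+} \oplus C^*(G)_{-}$ as a direct sum of two-sided ideals. By step one, both $C^*_f(G)$ and $C^*(G)_{-}$ are universal for exactly the fermionic continuous representations of $G$; since both are postliminary, they are determined by their families of irreducible representations, so the two agree canonically, yielding
\[
C^*_f(G) \cong \bigoplus_{\rho : \rho((-1)^F) = -1} M_{\dim_{D(\rho)} \rho}(\End_{D(\rho)} \rho).
\]

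Finally, I would verify the $\Z_2$-grading. On $C^*_f(G)$ the grading is the $*$-automorphism $\sigma$ determined by the real character $\chi_\theta(g) = (-1)^{\theta(g)}$ of $G_b$, which makes sense precisely because $\theta((-1)^F) = 0$. Transporting $\sigma$ through the isomorphism of step two, it acts on each summand as $\rho(g) \mapsto (-1)^{\theta(g)} \rho(g)$, matching the prescription in the statement. This assignment is automatically a well-defined $\R$-algebra involution of each $\End_{D(\rho)}\rho$: by the double-commutant theorem the $\R$-span of $\rho(G)$ is all of $\End_{D(\rho)}\rho$, and multiplicativity is preserved because $\theta$ is a homomorphism. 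The main obstacle is step one, where one must pass between $\nu$-cocycle representations of $G_b$, measurable representations of $G$, and genuinely continuous representations of $G$, in a setting where the cocycle $\nu$ need not come from a continuous section; I expect this to follow from Mackey-type theory combined with second countability, but the measure-theoretic bookkeeping is the delicate point.
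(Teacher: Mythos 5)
Your strategy is essentially the paper's: the key step is the bijection between genuine homomorphisms $u\colon G \to A$ with $u((-1)^F)=-1$ and $\nu$-twisted maps $v\colon G_b \to A$, which is exactly Lemma \ref{lemma: bijection}, and the paper's proof (Proposition \ref{prop: fermionic group alg}) likewise identifies $C^*_f(G)$ with the summand of the ordinary Peter--Weyl decomposition of $C^*(G)$ on which $(-1)^F$ acts by $-1$. Your observations that $\rho((-1)^F)=\pm 1$ for irreducible $\rho$ (since it lies in $D(\rho)$ and squares to $1$) and that the grading is the involution induced by the character $(-1)^{\theta}$ also match the paper, and the Pettis automatic-continuity argument is a legitimate way to pass from the Borel lift $\tilde\pi$ to a continuous representation of $G$ so that Theorem \ref{th:PeterWeyl} applies.

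The one step you need to tighten is the conclusion ``both are universal for the fermionic representations and both are postliminary, so they agree.'' A bijection between the Hilbert-space (or even just the irreducible) representations of two $C^*$-algebras does not by itself yield an isomorphism of algebras; postliminarity is not the relevant property. What does yield the isomorphism is that both algebras represent the same functor of \emph{covariant homomorphisms into multiplier algebras} $M(C)$ of arbitrary $C^*$-algebras $C$, i.e.\ the uniqueness clause of Theorem \ref{th:univprop}(2). This is why the paper states Lemma \ref{lemma: bijection} for maps into an arbitrary unital $C^*$-algebra $A$ rather than only for unitary representations, explicitly constructs the covariant homomorphism of $(\R, G_b, 1, \nu)$ into the multiplier algebra of the fermionic summand, and then verifies existence and uniqueness of the induced nondegenerate map (the small computation with the projection $P$ onto the bosonic part showing $\psi(P)=0$). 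Your argument becomes complete once you upgrade your step one from Hilbert-space representations to covariant homomorphisms valued in $M(C)$ and invoke uniqueness of representing objects; as written, the identification of $C^*_f(G)$ with $C^*(G)_-$ is asserted rather than proved.
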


\begin{remark}
\label{rem:compute Kth}
If $G$ is a compact fermionic group with trivial $\theta$, the above theorem readily allows one to compute its $K$-theory as
\begin{align*}
    K_n(C^*_f(G)) = \bigoplus_{\substack{\text{f.d. irrep }\rho \\\text{s.t. }\rho((-1)^F) = -1}}  K_n(D(\rho)),
\end{align*}
in a similar fashion to Corollary \ref{cor:split}.
So knowledge of the irreducible representations of $G$, in which one $(-1)^F$ acts as $-1$, and their type (real/complex/quaternionic) determines the $K$-theory.
\end{remark}

\begin{remark}
\label{rem:graded compute Kth}
If $\theta$ is nontrivial, then the above direct sum of real $C^*$-algebras is in general not a direct sum of $\Z_2$-graded $C^*$-algebras.

For example, consider the internal symmetry group $G = \Pin^+_1$ of class DIII (Example \ref{example: class DIII}).
It has two irreducible (real) representations for which $(-1)^F$ acts as $-1$ depending on whether $T$ acts as $\pm 1$ on the one-dimensional real line.
Hence $C^*_f(G) = \R \oplus \R$ as a real $C^*$-algebra.
However, since $T$ is odd, the element $(1,-1) \in \R \oplus \R$ is odd and so $C^*_f(G) = Cl_{+1}$ as a $\Z_2$-graded real $C^*$-algebra.

Therefore in practice we have to see how different ungraded irreducible representations fit together into $\Z_2$-graded irreducible representations to compute the $K$-theory.
It would be useful for computations to have a systematic way to find the decomposition of the fermionic group algebra into matrix algebras over the ten $\Z_2$-graded division algebras over $\R$, compare \cite{superfrobeniusschur}.
\end{remark}

We can now generalize Definition \ref{def:zerodimSPT} to infinite fermionic groups.

\begin{definition}
\label{def:GSPT}
The group of SPT phases protected by the fermionic group $(G,\theta,(-1)^F)$ is $\SPT^{C^*_f(G)}$.
\end{definition}

We obtain as an immediate consequence of Theorem \ref{th:IQPVK-theory}:

\begin{proposition}
The group of SPT phases protected by the $G$ is
\[
K_2(C^*_f(G)^{\op}).
\]
\end{proposition}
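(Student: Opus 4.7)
The proposition is an essentially immediate consequence of Definition \ref{def:GSPT} and Theorem \ref{th:IQPVK-theory}, so my plan is to spell out the reduction and then check the one non-trivial technical point that arises when $G$ is not discrete.

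First, I would unfold Definition \ref{def:GSPT}, which sets the group of $G$-protected SPT phases equal to $\SPT^{C^*_f(G)}$, where $C^*_f(G)$ is the fermionic group $C^*$-algebra from Definition \ref{def:twisted gp alg}, $\Z_2$-graded by the homomorphism $\theta\colon G\to \Z_2^T$. Then I would invoke Theorem \ref{th:IQPVK-theory} applied to $A := C^*_f(G)$, yielding $\SPT^{C^*_f(G)} \cong K_2(C^*_f(G)^{op})$, which is the desired identification.

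The only subtlety is that $C^*_f(G)$ is typically non-unital when $G_b$ is not discrete (for example, $G = U(1)_Q$ in Example \ref{example: class A}), whereas Theorem \ref{th:IQPVK-theory} and the setup of Section \ref{sec:the group of free SPT} are stated for unital $\Z_2$-graded $C^*$-algebras. To handle this, I would appeal to Remark \ref{rem:nonunital}, which extends both $\SPT^{(-)}$ and $K$-theory to the non-unital setting by passing to the Dorroh unitization $A_+$ and taking the kernel of the split augmentation $A_+ \to \mathbb{F}$. The step to verify is then that the isomorphism provided by Theorem \ref{th:IQPVK-theory} is natural with respect to graded $*$-homomorphisms, so that applying it to $A_+ \to \mathbb{F}$ and taking kernels yields $\SPT^A \cong K_2(A^{op})$ for non-unital $A$.

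Naturality is essentially automatic from the explicit construction in the proof of Theorem \ref{th:IQPVK-theory}: the correspondence $T \mapsto \overline{T}$ between $\Diff_{A\otimes Cl_{+1}}(M)$ and $\Grad_{A^{op}\otimes Cl_{+1}}(\overline{M})$, together with the Morita shift from Lemma \ref{lem:morita}, both commute with base change along graded $*$-homomorphisms $A \to B$, since they are defined pointwise on modules and skew-linear operators. Hence there is no real obstacle; the hardest step is simply bookkeeping the non-unital reduction, after which the statement follows by direct substitution $A = C^*_f(G)$ in Theorem \ref{th:IQPVK-theory}.
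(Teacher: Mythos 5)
Your proposal is correct and follows the same route as the paper, which presents this proposition as an immediate consequence of Definition \ref{def:GSPT} together with Theorem \ref{th:IQPVK-theory} applied to $A = C^*_f(G)$. Your additional care about the non-unital case via the Dorroh unitization and naturality of the isomorphism is consistent with how the paper sets things up in Remark \ref{rem:nonunital}, and fills in a detail the paper leaves implicit.
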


\begin{remark}
For a non-unital $C^*$-algebra $A$, the $K$-theory of $A$ as defined in \ref{rem:nonunital} is in general very different from the $K$-theory of the multiplier algebra of $A$.
For example, we can compute
\begin{align*}
K_0(M(C_\C^*(U(1)))) &\cong K_0(M(C^0(\Z,\C))) \cong K_0(C_b(\Z,\C)) 
\\
&\cong \{f \colon \Z \to \Z: f \text{ is bounded}\},
\end{align*}
where the subscript $\C$ means we are taking the complex group $C^*$-algebra for simplicity.
Note that 
\[
K_0(C_\C^*(U(1))) \subseteq K_0(M(C_\C^*(U(1))))
\]
is the strict subset of eventually zero sequences.
In other words, elements of this $K$-theory group are potentially infinite-dimensional representations in which every irreducible representation has finite multiplicity.
This group is too big for practical purposes to be used to define free fermion phases. 
For example, this group of $U(1)$-protected free SPT phases will then not map to the group of interacting phases, see Remark \ref{rem:continuousG}.
\end{remark}

\subsection{Free SPT phases in positive spatial dimension}
\label{sec:posdim}

Until now, we imagined the fermionic group $G$ to be an on-site symmetry, so that Definition \ref{def:GSPT} would only define SPT phases in spatial dimension $0$.
To upgrade the current situation to higher dimensions, a high-energy physicist would resort to continuum quantum field theory on spacetime.
Condensed matter theory deals with spatial dimension $d > 0$ by putting the system on a $d$-dimensional lattice of atoms in Euclidean space.
Suppose for now the fermionic symmetry group we considered in the past few sections is $\Z^{d} \times \Z_2^F$, which we think of as a collection of discrete spatial translations.\footnote{In other words, we are invoking a crystalline equivalence principle.}

For understanding the the applications in condensed matter of finitely generated and projective modules over the group $C^*$-algebra $C^*(\Z^{d}) \cong C^*(\Z^d \times \Z_2^F)$, we use a Fourier transform isomorphism to pass to momentum space.

\begin{definition}
The \emph{Brillouin zone} $BZ(d)$ is the torus obtained by taking the $d$-dimensional cube $[-\pi,\pi]^{d}$ in the dual of real space and identifying opposite edges.    
\end{definition}

Mathematically, it is more natural to view the Brillouin zone as the Pontryagin dual $\Hom(\Z^{d},U(1))$ of the lattice $\Z^{d}$ in real space.
These spaces are homeomorphic via
\[
[-\pi, \pi]^d \to \Hom(\Z^d,U(1)) \quad k \mapsto (n\mapsto e^{ikn}).
\]
Then the Fourier transform gives an isomorphism 
\[
C^*(\Z^{d}) \otimes_\R \C \cong C(BZ(d),\C)
\]
of complex $C^*$-algebras.
Under this isomorphism, finitely generated projective modules $M$ over $C^*(\Z^{d}) \otimes_\R \C$ correspond to complex vector bundles $V \to BZ(d)$ via $M = \Gamma_{cts}(BZ(d);V)$ by the Serre-Swan theorem. 
Moreover, translation invariant (i.e. $C^*(\Z^d)$-linear) operators $H \colon M \to M$ correspond to vector bundle maps.
Connecting with the language of band theory, this means that a Bloch Hamiltonian $H$ gives a collection of linear maps $H_k \colon V_k \to V_k$ on the Bloch waves of momentum $k$.\footnote{Bloch bundles are typically infinite-dimensional Hilbert bundles with unbounded Hamiltonians, but we will not go into this subtlety. 
See \cite[Section 10.3]{thiangk-theoretic} for a detailed discussion.}
We see that physically the assumption that our free fermion system has symmetry algebra $C^*(\Z^d) \otimes_\R \C$ is analogous to a tight binding model.
By our discussion in Section \ref{sec:the group of free SPT}, a pair of Bloch Hamiltonians defines a class in $\SPT^{C^*(\Z^d)\otimes_\R \C}$.

\begin{remark}
    By the universal property of $C^*(\Z^d)$, a representation of $C^*(\Z^d)$ is equivalent to an orthogonal representation of $\Z^d$ on a real Hilbert space.
    However, a representation of a $C^*$-algebra is in general different from a Hilbert module.
    For $C^*(\Z^d)$, there is an associated representation to a Hilbert $C^*(\Z^d)$-module $M$.
    In case $M = \Gamma_{cts}(BZ(d);V)$ for a vector bundle $V \to BZ(d)$ with a metric, this representation is given by $L^2$-sections of $V$, see Remark \ref{rem:repsvsmods}.
    Using the Fourier transform, we can get back to a spatial description such as Example \ref{ex: sublattice}, where the Hilbert space is $\ell^2(\Z^d)$.
\end{remark}

More generally, we could take the symmetry group in the past sections to be of the form $\Z^{d} \times G$ where $G$ is any fermionic group.
We could also consider continuous translation symmetries with the goal of defining SPT phases in the continuum.

\begin{definition}
    \label{def:posdimSPTs} 
The group of \emph{free continuum SPT phases} in spatial dimension $d$ protected by the symmetry algebra $A$ is $\SPT^{C^*(\R^d) \otimes A}$.
The group of \emph{free lattice SPT phases} in spatial dimension $d$ protected by the symmetry algebra $A$ is $\SPT^{C^*(\Z^d) \otimes A}$.
\end{definition}

In the above definition, we give $C^*(\Z^d)$ and $C^*(\R^d)$ the trivial grading.

We can compute the group of free SPT phases in positive dimensions using techniques mentioned above.
One caveat is that Fourier transform involves multiplication by $e^{ikx}$ and so does not immediately apply to the real $C^*$-algebra $C^*(\Z^{d})$.
To upgrade $C^*(\Z^{d}) \otimes_\R \C \cong C(BZ(d),\C)$ to an isomorphism of real $C^*$-algebras, we have to give complex-valued continuous functions on the Brillouin zone the real structure given by mapping $f(k)$ to $\overline{f(-k)}$.
In other words, we take the real structure $\tau$ on the space $BZ(d)$ to be $k \mapsto -k$.
There is an isomorphism of $C^*(\Z^d)$ with the real $C^*$-algebra
\[
C(BZ(d))_\tau := \{f \in C(BZ(d),\C): f(-k) = \overline{f(k)}\},
\]
also see the discussion in \cite[Section 9.1]{thiangk-theoretic}.
In particular modules over the real group $C^*$-algebra $C^*(\Z^{d})$ correspond to complex vector bundles $E \to BZ(d)$ equipped with an antilinear vector bundle automorphism $j\colon E \to E$ such that $j^2 = 1$ covering the map $k \mapsto -k$ on $BZ(d)$.
A sphere has simpler algebraic topology than a torus and so continuum SPT phases are often easier to compute:

\begin{proposition}
\label{prop:posdim}
    The group of free continuum SPT phases protected by $A$ in spatial dimension $d$ is given by
    \[
    \SPT_d^A \cong K_{2-d}(A^{op}).
    \]
\end{proposition}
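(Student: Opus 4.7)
The plan is to unwind the definition of $\SPT_d^A$ using Theorem \ref{th:IQPVK-theory} and then apply a Fourier--Pontryagin transform together with real Bott periodicity. Starting from
\[
\SPT_d^A := \SPT^{C^*(\R^d) \otimes A},
\]
Theorem \ref{th:IQPVK-theory} gives
\[
\SPT_d^A \cong K_2\bigl((C^*(\R^d) \otimes A)^{op}\bigr).
\]
Using $(X \otimes Y)^{op} \cong Y^{op} \otimes X^{op}$ together with the fact that $C^*(\R^d)$ is trivially $\Z_2$-graded and commutative (so $C^*(\R^d)^{op} \cong C^*(\R^d)$ via the identity map), I reduce to proving
\[
K_2\bigl(A^{op} \otimes C^*(\R^d)\bigr) \cong K_{2-d}(A^{op}).
\]
The claim I actually want to establish is the more general suspension isomorphism
\begin{equation}\label{eq:susp}
K_n(B \otimes C^*(\R^d)) \cong K_{n-d}(B) \qquad \text{for every real } \Z_2\text{-graded } C^*\text{-algebra } B,
\end{equation}
specialized to $B = A^{op}$ and $n = 2$.

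The bulk of the work is in \eqref{eq:susp}. My approach is to identify $C^*(\R^d)$ up to $KK$-equivalence with $Cl_{-d}$. The Fourier--Pontryagin transform, applied in the same spirit as the Brillouin-zone isomorphism $C^*(\Z^d)\cong C(BZ(d))_\tau$ reviewed in Section \ref{sec:posdim}, gives an isomorphism of real $C^*$-algebras
\[
C^*(\R^d) \cong C_0(\R^d)_\tau, \qquad \tau(f)(k) = \overline{f(-k)},
\]
so that $C^*(\R^d)$ corresponds to the space $\R^{0,d}$ (in the notation of Atiyah's Real $K$-theory) with the antipodal involution $k \mapsto -k$. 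The key input is then the $KK$-equivalence
\[
C_0(\R^d)_\tau \;\sim_{KK}\; Cl_{-d},
\]
which reduces by the tensor factorization $C_0(\R^d)_\tau \cong (C_0(\R)_\tau)^{\otimes d}$ to the single-variable case $d=1$. Granting this, combined with the Clifford shift identity $K_n(B \otimes Cl_{-d}) \cong K_{n-d}(B)$ (a consequence of the Morita equivalence $Cl_{+d}\otimes Cl_{-d} \cong Cl_{d,d}\cong M_{2^d|2^d}(\R)$ together with the definition $K_n(B)=K_0(Cl_{+n}\otimes B)$ extended to negative shifts via $8$-periodicity), \eqref{eq:susp} follows. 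Inserting $n=2$, $B=A^{op}$ yields the proposition.

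The main obstacle is constructing the $KK$-equivalence $C_0(\R)_\tau \sim_{KK} Cl_{-1}$ in real $KK$-theory. This is essentially the real/$KR$ version of Bott periodicity on the non-compact line, and can be proved by exhibiting an explicit Kasparov bimodule built from the one-dimensional Dirac operator $-i\partial_x$ acting on $L^2(\R)$, together with its dual constructed from the Bott element. Alternatively, one may argue more geometrically via the six-term exact sequence associated to the inclusion $C_0(\R)_\tau \hookrightarrow C(S^{0,1})_\tau$ of the suspension, where the one-point compactification $S^{0,1}$ is the circle with antipodal involution, and compare with Atiyah's identification $KR(\R^{p,q}) \cong KR^{p-q}(\pt) \cong KO_{q-p}(\pt)$. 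Either route implements, in $KK$-language, the same content as the classical Atiyah--Bott--Shapiro construction relating Clifford modules to vector bundles over spheres, and is the essential technical step; the remainder of the argument is a purely formal manipulation with opposites and Clifford tensor factors.
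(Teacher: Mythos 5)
Your proof follows essentially the same route as the paper: unwind the definition, apply Theorem \ref{th:IQPVK-theory}, pass to $C_0(\R^d)_\tau$ via the Fourier transform, and invoke real Bott periodicity to shift the degree by $d$. The only difference is that you sketch how the periodicity isomorphism is actually implemented (the $KK$-equivalence $C_0(\R)_\tau \sim_{KK} Cl_{-1}$ via Dirac and Bott elements), whereas the paper simply cites the periodicity theorem of Real $K$-theory; this is a correct and somewhat more self-contained account of the same argument.
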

\begin{proof}
Similarly to the lattice case, we can use the Fourier transformation to get that 
\[
C^*(\R^{d}) \otimes \C \cong C_0(\R^{d},\C)
\]
is given by continuous functions on the momentum space $\R^d$ vanishing at infinity.
Here similarly to what happened for discrete symmetries we have to give the right side the real structure $\tau \colon k \mapsto -k$.
We thus compute
    \[
    \SPT_d^A = \SPT^{C^*(\R^d) \otimes A} = \SPT^{C_0(\R^d)_\tau \otimes A} = K_2(C_0(\R^d)_\tau \otimes A)
    \]
    It follows by the periodicity theorem of Real $K$-theory~\cite{MR206940, schroder1993k} that this group is $K_{2-d}(A^{\op})$.
\end{proof}

The group of lattice SPT phases protected by $A$ is instead
\[
\bigoplus_{i = 0}^d K_{2-i}(A^{op})^{\oplus \genfrac(){0pt}{}{d}{i}},
\]
also see \cite[Theorem 11.8]{freedmoore} and \cite{weak}.

    Let $D$ be one of the ten real $\Z_2$-graded division algebras.
Generalizing Conclusion \ref{con:zerodimfree}, we see 

\begin{conclusion}
    The group of $d$-dimensional continuum SPT phases protected by $D$ is given by $K^{ABS}_{2-d}(D)$.
This recovers the usual tenfold way table in the condensed matter literature in all dimensions and symmetry classes, also see Table \ref{tab:tenfold}.
\end{conclusion}


\begin{remark}
It would be wonderful to have a direct comparison between SPT phases on a lattice and in the continuum by some continuum limit procedure.
It is very tempting to try to define the continuum limit of a free lattice SPT phase as its image under a map
\[
K_{2}(A^{\op} \otimes C^*(\Z^{d})) \to K_{2}(A^{\op} \otimes C^*(\R^{d}))
\]
induced by the inclusion $\Z^{d} \subseteq \R^{d}$. 
Then we would also be able to define a lattice SPT phase to be \emph{weak} if it is in the kernel of this map, otherwise it would be called \emph{strong}.
However, a group homomorphism $f \colon G \to H$ only gives a $*$-homomorphism $C^*(G) \to M(C^*(H))$.
If $G$ and $H$ are abelian we can use Gelfand duality to see that this homomorphism lands in $C^*(H)$ exactly when $\hat{f} \colon \hat{H} \to \hat{G}$ is proper \cite{15093}.
Since the quotient map $\R^d \to \R^d/\Z^d$ is not proper, this approach does not seem to work. 
It might still be the case that there is a suitable Kasparov $(\Z^{d}, \R^{d})$-bimodule which will induce this map.

In practice, one usually compares weak and strong topological phases by pulling back along an appropriate collapse map $BZ(d) \to S^{d}$ on the level of momentum space.
Namely one can quotient out the faces of the hypercube $BZ(d)$ to get a sphere.
Here we think of $S^{d}$ as the one-point compactification of the continuum momentum space, which we give the real structure induced by $\R^d$.
This induces a map
\[
 K_2(A^{op} \otimes C(S^d)) \to K_{2}(A^{op} \otimes C(BZ(d))).
\]
by pullback of vector bundles and $C(S^d)$.
Now we can use that $K_2(A^{op} \otimes C(S^d)) \cong K_2(\R) \oplus K_2(A^{op} \otimes C^*(\R^d))$ to relate to continuum SPT phases since $C(S^d)$ is the unitization of $C_0(\R^d)$.
Moreover, using the fact that the torus is stably equivalent to a wedge of spheres: 
\[
BZ(d) \sim S^{d} \wedge S^{d-1} \wedge \dots 
\]
we get a noncanonical splitting of this map, see \cite{weak} for more on this approach.
\end{remark}

\begin{remark}
A different approach to getting rid of the weak invariants is to replace the group $C^*$-algebra of $\Z^n$ by the (non-uniform) Roe algebra of $\Z^n$~\cite{ewert2019coarse}.
The Roe algebra of a metric space is a coarse invariant and so the Roe algebra of $\Z^n$ is isomorphic to the Roe algebra of $\R^n$, which has the same $K$-theory as the group $C^*$-algebra of $\R^n$.
\end{remark}

\begin{remark}
\label{rem:crystalline}
Definition \ref{def:posdimSPTs} generalizes straightforwardly to crystalline topological phases.
Indeed, we can simply replace $\Z^{d}$ by the space group $S$ of a crystal.
In that case $\SPT^{C^*_f(S)}$ can still be expressed as twisted equivariant $K$-theory of the Brillouin zone, see \cite[Sections 9 and 10]{thiangk-theoretic} for further discussion.
\end{remark}

\begin{definition}
\label{def:freeSPTgroup}
If $G$ is a fermionic group, we define continuum/lattice SPT phases protected by $G$ to be continuum/lattice SPT phases protected by $C^*_f(G)$.
\end{definition}

\begin{remark}
\label{rem:computeSPT}
If $G$ is a compact fermionic group, Theorem \ref{th: twisted peterweyl} allows us to compute the group of free (say continuum) SPT phases
\[
\SPT_n^{C^*_f(G)} \cong K_{2-n}(C^*_f(G)^{op})
\]
in arbitrary dimension $n$ once we understand the irreducible representations of $G$, see Remarks \ref{rem:compute Kth} and \ref{rem:graded compute Kth}. 
\end{remark}

\begin{remark}
Let $D$ be one of the ten $\Z_2$-graded division algebras.
Note that unless $D = \R,Cl_{+1}$ or $Cl_{-1}$ the fermionic group algebra $C^*_f(D)$ is infinite-dimensional, compare Example \ref{ex:groupvsalgebratenfold}.
In those cases it follows from compactness of $S(D)$ and Remark \ref{rem:computeSPT} that $\SPT^{C^*_f(S(D))}_d$ is usually infinitely generated.
In particular, $\SPT^{C^*_f(S(D))}_d$ does not agree with the usual tenfold way classification $\SPT^{D}_d$.
We show how to cut the $K$-theory of $C^*_f(S(D))$ down to the usual classification in Sections \ref{sec: charge 1} and \ref{sec: spin1/2}, see Example \ref{ex:revisitclassA} for a detailed discussion for class A topological insulators.
\end{remark}

\begin{remark}
\label{rem:Green-Julg}
    Using an appropriate twisted Green-Julg theorem\cite[Section 11.7]{blackadar}, we expect that the group of lattice SPT phases protected by $C^*_f(G)$ is related to the twisted $G_b$-equivariant $KR$-theory of the Brillouin zone torus.
    Here $G$ acts trivially on the Brillouin zone but the real structure on the Brillouin zone is given by $k \mapsto -k$.
    There is a similar statement for continuum SPT phases.
\end{remark}

\section{Phases with charge conservation}
\label{sec:charged}

In this section, we will consider the case where our fermionic symmetry group $(G,(-1)^F,\theta)$ contains a designated $U(1)$-subgroup that we would like to call charge.
Such groups are equivalent to the symmetry groups introduced by Freed and Moore \cite[Definition 3.7]{freedmoore}.
We outline how their definition relates to the setting of this paper.
In Theorem \ref{th: main theorem}, we show that our Definition \ref{def:SPT phase} of free SPT phases to Definition \ref{def:unit charge SPTs} agrees with \cite{freedmoore}. For our theorem, we will have to restrict our representations to be of unit charge, see Section \ref{sec: charge 1}.
Just as for the non-charge conserving case, we show how to compute the group of unit charge SPT phases using a Peter--Weyl style result.

\subsection{Freed--Moore symmetry groups}
\label{sec: Freed--Moore symmetry}

The main algebraic input data for the twisted equivariant $K$-theory groups in \cite{freedmoore} is the notion of an extended QM symmetry class \cite[Definition 3.7(i)]{freedmoore}.
We reformulate their definition as follows:

\begin{definition}
A \emph{Freed--Moore group} is a second countable Hausdorff topological group $G^\tau$ together with 
\begin{itemize}
    \item a continuous homomorphism $\theta\colon G^\tau \to \Z_2$;
    \item a continuous homomorphism $\phi\colon G^\tau \to \Z_2$;
    \item a designated $U(1)$-subgroup $U(1)_Q \subseteq G$;
\end{itemize} 
with the property that 
\[
e^{i\alpha Q} g = g e^{i \phi(g)\alpha Q} \quad \forall g \in G^\tau,
\]
where we wrote elements of $U(1)_Q$ as $e^{i\alpha Q}$ where $\alpha \in [0,2\pi)$.
\end{definition}

Here we included the symbol $Q$ to remind us that we will require this $U(1)_Q$ through the charge operator as in Section \ref{sec: nambu}.
We denote the sum of the two homomorphisms to $\Z_2$ by $c := \theta + \phi$.

\begin{remark}
    The definition of Freed and Moore uses $c = \phi + \theta$ instead of our $\theta$ as a datum in their definition.
    They use the symbol $t$ to denote what we call $\theta$.
\end{remark}

We call the designated $U(1)$-subgroup $U(1)_Q$ the \emph{charge symmetry}.

\begin{remark}
    Since complex conjugation is the only continuous automorphism of $U(1)$, we could have equivalently defined a Freed--Moore group as a group $G^\tau$ together with a homomorphism $\theta\colon G^\tau \to \Z_2$ and a designated normal $U(1)$-subgroup $U(1)_Q \subseteq G^\tau$.
    By extension theory, there then exists a unique homomorphism $\phi$ from the quotient $H := G^\tau/U(1)$ to $\Aut_{cts}(U(1)) = \Z_2$ satisfying the desired properties.
\end{remark}

To provide our physical interpretation of a Freed--Moore group, we put the algebraic information into the context of Section \ref{sec: physical motivation}.
We do not claim our implementation is the unique correct one, but we will argue it to be a reasonable one.

As before, $\theta$ determines whether symmetries are time-preserving or time-reversing, i.e. whether they are required to act anti-unitarily or unitarily on Fock space.
On the other hand, $\phi$ determines whether symmetries act anti-unitary or unitary \emph{before} second quantizing.
We have seen in Section \ref{sec: nambu} that if particle-hole symmetries are present, then $\phi \neq \theta$.

We make a Freed--Moore group into a fermionic group by letting $(-1)^F$ be the nontrivial element $-1 = e^{i\pi Q}\in U(1)_Q$.
The physical reason we require this, is that we would like the spin-charge relation to hold, which is related as follows, also see Example \ref{example: class A}.

Consider a finite-dimensional representation $R$ of $G^\tau$ on $M$ such that $R((-1)^F) = -1$, which is the same as a finite-dimensional $C^*_f(G^\tau)$-module.
As in Section \ref{sec: nambu}, we denote the complexification of $M$ by $W$ and its real structure by $\gamma$.
We can then decompose the $U(1)_Q$-representation $W$ into weight spaces
\[
W = \bigoplus_{n \in \Z} W_n,
\]
so that $W_n$ is the subspace of \emph{charge $n$ fields} consisting of $w_n \in W$ such that $R(e^{i \alpha Q}) w_n = e^{i n \alpha} w_n$.
Therefore the condition that $R((-1)^F) = -1$ means that $W_{n} = 0$ if $n$ is even.
In other words, the spin-charge relation we impose on our fermion systems requires all particles to have odd charge.

Also note that since $R$ is real, $\gamma$ restricts to a complex anti-linear bijection $W_n \to W_{-n}$:
\[
R(e^{i \alpha Q}) \gamma w_n = \gamma R(e^{i \alpha Q}) w_n = \gamma e^{in \alpha} w_n =  e^{-in \alpha} \gamma w_n,
\]
so the particle-hole conjugation exchanges charge $n$ and charge $-n$ fields.
Moreover, if $g \in G^\tau$ is an element of a Freed--Moore group, then 
\begin{align*}
R(e^{i \alpha Q}) R(g) w_n &= R(e^{i \alpha Q} g) w_n = R(g e^{i (-1)^{\phi(g)} \alpha Q}) w_n =R(g) R( e^{(-1)^{\phi(g)} i \alpha Q}) w_n 
\\
&=  R(g) e^{ (-1)^{\phi(g)} i n \alpha} w_n = e^{(-1)^{\theta(g) + \phi(g)} i n \alpha} R(g)  w_n
\end{align*}
so $R(g) w_n \in W_{c(g)}$ for $c(g) = \theta(g) + \phi(g)$.
We will therefore interpret symmetries with $c(g) = 1$ as charge-reversing, i.e. they are the particle-hole symmetries.

It follows also from the above analysis that for every $n>0$ odd, the charge $(n,-n)$ sector $W_{-n} \oplus W_n$ is still a representation of $G^\tau$.
We conclude that a representation of $G^\tau$ splits up into sectors in which fields have the same or opposite charge.
Fixing one such sector, we can interpret $V := W_n$ as a one particle Hilbert space with complex structure $I := e^{i\pi Q/2} = i Q$, i.e. the charge operator gives a canonical polarization.
There are isomorphisms $W_{-n} \cong \overline{W_n} \cong \overline{V}$ given by $\gamma$. 
We get back to the original picture of Nambu space 
\[
W_n \oplus W_{-n} \cong V \oplus \ol{V}
\]
of Section \ref{sec: nambu} with particles of charge $n$ and antiparticles of charge $-n$.

To compare with results in the literature, we often need to restrict to the unit charge $Q = \pm 1$ subsector, as we discuss in detail in Section \ref{sec: charge 1}.
The main idea is

\begin{definition}
Let $G^\tau$ be a Freed--Moore group and $(W, \gamma, R)$ a representation of $G^\tau$ such that $R((-1)^F) = -1$.
We say $R$ has \emph{unit charge} if 
\[
R(e^{aiQ}) = \cos a + R(iQ) \sin a
\]
for all $e^{iaQ} \in U(1)_Q$.
\end{definition}

\begin{remark}
Restricting to the unit charge sector might kill many interesting representations of $G^\tau$ and representations of charge $n \neq 1$ can behave very differently.
For example, take $H:= G^\tau/U(1) = \Z_3 \times \Z_3$.
The possible extensions $G^\tau$ of $H$ are classified by $H^2(H,U(1))$ which can be shown to be isomorphic to $H^3(\Z_3 \times \Z_3, \Z) \cong \Z_3$ by the K\"unneth isomorphism.
For one of the two generators of the cohomology, the resulting $G^\tau$ is generated by $g_1,g_2, e^{iaQ}$ satisfying $g_1^3 =1, g_2^3 =1$ and $g_1 g_2 = e^{2\pi iQ/3} g_2 g_1$.
A representation $R$ of charge 3 satisfies $R(g_1)^3 =1, R(g_2)^3 =1$ with $R(g_1)R( g_2) = R(g_2) R(g_1)$, but a representation of charge 1 satisfies $R(g_1)R( g_2) = R(e^{2 \pi i/3}) R(g_2)R( g_1)$ instead.
Note that the algebra of charge 3 is commutative and hence there are $3 \cdot 3 = 9$ irreducible (necessarily one-dimensional) representations, while the algebra corresponding to charge 1 is noncommutative and so must have a irreducible representation of dimension at least two.
\end{remark}

\begin{remark}
    Note that BdG Hamiltonians that are symmetric under a Freed--Moore symmetry group are always charge-conserving and so have no off-diagonal terms $\Delta$.
Hence purely within the setting of this paper, Freed--Moore groups are at first sight inadequate for describing topological superconductors in the context of BdG theory directly.
However, including particle-hole symmetries to make the Freed--Moore group $U(1)_Q \rtimes \Z_2$ will give a Morita-equivalent symmetry algebra to the uncharged theory with symmetry group $\Z_2^F$, compare Example \ref{ex:class D}.
Therefore for understanding noninteracting topological phases this approach suffices.
For interacting phases more care is needed~\cite{stehouwer2022interacting}.
\end{remark}

\subsection{Symmetry algebras with charge}

For proving our main result, it is mathematically convenient to first connect with our formalism using symmetry algebras, as opposed to groups.
This will also avoid the unit charge problem, which we will tackle in the next section.

We abstract the situation of Freed--Moore groups as follows. 
Let $A$ be a real unital $C^*$-algebra with a $\Z_2$-grading $A = A_0 \oplus A_1$ denoted $\theta$.
Suppose we are given a designated $\Z_2$-graded $C^*$-subalgebra $\pi \colon \C \hookrightarrow A$ where we give $\C$ the trivial grading.
Instead of assuming that $A$ becomes a complex $C^*$-algebra, we will assume $A = A_u \oplus A_{au}$ comes with yet another $\Z_2$-grading $\phi$ such that 
\begin{equation}
\label{eq:anti-unitary}
    \pi(i) a = (-1)^{\phi(a)} a \pi(i) \in A
\end{equation} 
for homogeneous $a \in A$\footnote{The subscripts $u$ and $au$ stand for `unitary' and `anti-unitary'.}.
Since $\pi(i)$ commutes with the grading operator associated to $\theta$, the two grading operators also commute and we obtain a direct sum of vector spaces
\[
A = A_{u, 0} \oplus A_{au, 0} \oplus A_{u, 1} \oplus A_{au, 1}.
\]
Let $c$ denote the induced diagonal grading $\theta + \phi$.

\begin{lemma}
\label{lem:polsvsgradings}
    Let $J \colon M \to M$ be an $A$-symmetric polarization with respect to $\theta$.
    Then $\epsilon := J \pi(i)$ is a $\Z_2$-grading on $M$ with respect to the grading $c$.
\end{lemma}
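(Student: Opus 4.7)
The plan is a direct sign calculation, based on the observation that $J$ implements the $\theta$-grading via skew-linearity and $\pi(i)$ implements the $\phi$-grading via equation \eqref{eq:anti-unitary}, so the composite $J\pi(i)$ should implement the sum grading $c = \theta + \phi$ in the sense of Definition of $\Grad_A(M)$.

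First I would check that $\epsilon^2 = \id_M$. Since $\C$ is trivially graded and $\pi$ is a graded inclusion, $\pi(i)$ lies in $A_0$ with respect to $\theta$, so the $\theta$-skew-linearity $Ja = (-1)^{\theta(a)}aJ$ specializes to $J\pi(i) = \pi(i)J$. Combined with $J^2 = -\id_M$ (definition of polarization) and $\pi(i)^2 = \pi(-1) = -\id_M$ (using that $\pi$ is a unital $*$-homomorphism), this yields
\[
\epsilon^2 \;=\; J\pi(i)J\pi(i) \;=\; J^2 \pi(i)^2 \;=\; \id_M.
\]

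Next I would verify the $c$-skew-linearity $\epsilon a = (-1)^{c(a)} a \epsilon$ on elements $a \in A$ that are homogeneous for both $\theta$ and $\phi$ (the general case then follows by decomposing $A$ into the four-term direct sum $A_{u,0}\oplus A_{au,0}\oplus A_{u,1}\oplus A_{au,1}$ mentioned just before the lemma). One first commutes $\pi(i)$ past $a$, picking up a sign $(-1)^{\phi(a)}$ from \eqref{eq:anti-unitary}, and then commutes $J$ past $a$, picking up a sign $(-1)^{\theta(a)}$ from the polarization condition:
\[
\epsilon a \;=\; J\pi(i)a \;=\; (-1)^{\phi(a)} J a \pi(i) \;=\; (-1)^{\phi(a)+\theta(a)} a J\pi(i) \;=\; (-1)^{c(a)} a\epsilon.
\]

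I do not anticipate any real obstacle here: the lemma is essentially signs-and-squares bookkeeping. The only thing worth being careful about is that $\pi(i)$ commutes (rather than skew-commutes) with $J$, which is precisely why $\pi$ was declared to be a graded map out of trivially graded $\C$; without that assumption the $\theta$- and $\phi$-gradings on $A$ would not commute as operators and $\epsilon$ would fail to square to the identity.
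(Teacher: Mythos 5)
Your proof is correct and follows essentially the same route as the paper's: both arguments reduce to the observation that $J$ and $\pi(i)$ commute (so $\epsilon^2 = J^2\pi(i)^2 = \id_M$) together with the sign bookkeeping $\epsilon a = (-1)^{\phi(a)+\theta(a)}a\epsilon$ on elements homogeneous for both gradings. The only difference is that you spell out \emph{why} $J$ and $\pi(i)$ commute (namely that $\pi(i)$ is $\theta$-even), which the paper leaves implicit.
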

\begin{proof}
    Since $J$ and $\pi(i)$ commute, $\epsilon^2 = 1$.
    Let $a \in A$, which we can assume is homogeneous for both gradings without loss of generality.
    By assumption, $a J = (-1)^{\theta(a)} Ja$ and $a\pi(i) = (-1)^{\phi(a)} \pi(i) a$. We see that $\epsilon a = (-1)^{c(g)} a \epsilon$, which means $\epsilon$ is a grading operator with respect to $c$.
\end{proof}

We will denote the Karoubi $K$-theory of $A$ by $K^{Kar, c}_0(A)$ if we want to emphasize $A$ is graded by $c$.

\begin{corollary}
\label{cor:FMvsSPTalgebrasunital}
    There is an isomorphism of groups $\SPT^A \cong K^{Kar, c}_0(A)$.
\end{corollary}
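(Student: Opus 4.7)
My strategy is to upgrade the pointwise bijection of Lemma \ref{lem:polsvsgradings} to an isomorphism between the full monoid structures underlying $\SPT^A$ and $K^{Kar,c}_0(A)$. The map in one direction is $J \mapsto J\pi(i)$; the candidate inverse is $\epsilon \mapsto -\epsilon\pi(i)$. To see this lands in $\Pol^\theta_A(M)$, I would use that $\pi(i) \in \pi(\C) \subseteq A$ is $c$-even (being trivial in both $\theta$ and $\phi$), so $\epsilon$ commutes with $\pi(i)$, giving $(-\epsilon\pi(i))^2 = \epsilon^2\pi(i)^2 = -\id$. For $\theta$-skew-linearity on a homogeneous $a\in A$, I would compute, using \eqref{eq:anti-unitary} and $\epsilon a = (-1)^{c(a)}a\epsilon$,
\[
(-\epsilon\pi(i))a = -(-1)^{\phi(a)}\epsilon a\pi(i) = -(-1)^{\phi(a)+c(a)}a\epsilon\pi(i) = (-1)^{\theta(a)}a(-\epsilon\pi(i)),
\]
where the final equality uses $c = \theta+\phi$. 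The two assignments are mutually inverse because $\pi(i)^2=-1$, and each is a homeomorphism on the relevant spaces of operators since it is given by composition with the fixed continuous (adjointable) operator $\pi(i)$.

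Next I would verify compatibility with every structural operation used in Definition \ref{def:SPT phase} and its Karoubi analogue. Direct sums go to direct sums because $\pi(i)$ acts diagonally on $M\oplus M'$. An $A$-module isomorphism $f\colon M\to M'$ intertwines $J_i$ with $J_i'$ if and only if it intertwines $J_i\pi(i)$ with $J_i'\pi(i)$, since $f$ commutes with the action of the algebra element $\pi(i)\in A$. A continuous path in $\Pol^\theta_A(M)$ from $J_1$ to $J_2$ translates via post-composition with $\pi(i)$ into a continuous path in $\Grad^c_A(M)$ from $J_1\pi(i)$ to $J_2\pi(i)$, and conversely, so elementary formal differences correspond exactly to elementary Karoubi triples. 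Passing to the stable equivalence relation in Definition \ref{def:SPT phase} and its Karoubi analogue yields the desired group isomorphism $\SPT^A \cong K^{Kar,c}_0(A)$.

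The only delicate point is the sign computation showing that multiplication by $\pi(i)$ exchanges $\theta$-skewness with $c$-skewness; this is precisely what dictates that an $A$-symmetric polarization framework (graded by $\theta$) must be paired with the Karoubi $K$-theory of $A$ with respect to the \emph{composite} grading $c = \theta+\phi$, rather than $\theta$ or $\phi$ alone. Everything else in the argument is formal once the bijection $\Pol^\theta_A(M) \leftrightarrow \Grad^c_A(M)$ is in place.
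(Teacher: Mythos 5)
Your proposal is correct and follows essentially the same route as the paper: both use the bijection $J \mapsto J\pi(i)$ from Lemma \ref{lem:polsvsgradings}, check compatibility with direct sums, isomorphisms, homotopies, and the topology, and pass to the stable quotient. You merely spell out the explicit inverse $\epsilon \mapsto -\epsilon\pi(i)$ and the sign computations that the paper leaves implicit.
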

\begin{proof}
    The mapping $\Pol_A \to \Grad_A$ of Lemma \ref{lem:polsvsgradings} is clearly a bijection which is compatible with direct sums.
    By checking on free modules, we see it is a homeomorphism.
    Therefore there is an isomorphism $\SPT^A \cong K^{Kar, c}(A)$ given by 
    \[
    [M,J_1,J_2] \mapsto [M,J_1\pi(i), J_2 \pi(i)].
    \]
\end{proof}

We also need a similar result for $A$ non-unital.

\begin{proposition}
\label{prop:FMvsSPTalgebrasunital}
    Let $A$ be a $\Z_2$-graded real $C^*$-algebra equipped with another grading $A = A_u \oplus A_{au}$ and a $*$-subalgebra $\pi \colon \C \hookrightarrow M(A)$ which induces this grading in the sense of Equation \eqref{eq:anti-unitary}.
    Then $\SPT^A \cong K^{Kar, c}_0(A)$, where $c$ is the diagonal grading.
\end{proposition}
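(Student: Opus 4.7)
The plan is to reduce to the unital Corollary~\ref{cor:FMvsSPTalgebrasunital} via a unitization that preserves the $\pi\colon \C \hookrightarrow M(A)$ structure, and then extract the non-unital invariants as kernels of split augmentations.

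First I would form the $C^*$-subalgebra $A_\pi := A + \pi(\C) \subseteq M(A)$, which is a unital real $C^*$-algebra with unit $\pi(1) = 1_{M(A)}$ and with $A$ as a closed essential ideal. Because $A$ is non-unital, $A \cap \pi(\C) = 0$, so projection onto the $\pi(\C)$-summand gives a split augmentation $\epsilon \colon A_\pi \twoheadrightarrow \C$ with splitting $\pi$. I would extend the gradings $\theta$ and $\phi$ from $A$ to $A_\pi$ by declaring $\pi(\C)$ to be even for both; the relation $\pi(i) a = (-1)^{\phi(a)} a \pi(i)$ of~\eqref{eq:anti-unitary} then holds on all of $A_\pi$ because $\pi(i)$ commutes with $\pi(\C)$. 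Corollary~\ref{cor:FMvsSPTalgebrasunital} then applies to the unital algebra $A_\pi$ and yields an isomorphism
\[
\Phi_{A_\pi}\colon \SPT^{A_\pi} \xrightarrow{\ \cong\ } K_0^{Kar, c}(A_\pi), \qquad [M, J_1, J_2] \mapsto [M, J_1\pi(i), J_2\pi(i)],
\]
which is natural in $\Z_2$-graded $*$-homomorphisms compatible with the $\C$-subalgebra.

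To descend to $A$, I would observe that $\epsilon$ is a morphism of precisely this type, so $\Phi_{A_\pi}$ and the analogous (trivial, unital) isomorphism for $\C$ fit in a square commuting with the augmentation maps $\epsilon_*$; both $\epsilon_*$ are split surjections with splitting $\pi_*$, hence $\Phi_{A_\pi}$ restricts to an isomorphism
\[
\ker\bigl(\epsilon_* \colon \SPT^{A_\pi} \to \SPT^\C\bigr) \;\cong\; \ker\bigl(\epsilon_* \colon K_0^{Kar,c}(A_\pi) \to K_0^{Kar,c}(\C)\bigr).
\]
The last step is to identify these two kernels with $\SPT^A$ and $K_0^{Kar, c}(A)$ respectively. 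Using the canonical inclusion of unital real $C^*$-algebras $A_+ \hookrightarrow A_\pi$ (covering $\R \hookrightarrow \C$), one obtains a morphism of split short exact sequences in both $\SPT$ and $K_0^{Kar,c}$, and a short diagram chase shows that the Dorroh-style kernel used in Remark~\ref{rem:nonunital} coincides with the $\pi$-adapted kernel above.

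The main obstacle will be this last identification: strictly speaking, Remark~\ref{rem:nonunital} defines the non-unital invariants via the Dorroh extension $A_+$, which does \emph{not} contain $\pi(\C)$, whereas the unital corollary must be applied to an algebra that does. The comparison between these two unitizations is straightforward because both sit inside $A_\pi$ in a compatible way, but it needs to be carried out explicitly, including checking that the $c$-grading on $A_+$ agrees with the one pulled back from $A_\pi$ and that the definition of $\SPT$ on non-unital algebras is invariant under the choice of split unitization.
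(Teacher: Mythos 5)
Your proposal is correct and follows essentially the same route as the paper: unitize, apply the unital Corollary \ref{cor:FMvsSPTalgebrasunital}, and pass to kernels of the split augmentation. The only real difference is that you unitize via $A_\pi = A + \pi(\C) \subseteq M(A)$ rather than applying the corollary directly to the Dorroh extension $A_+$ as the paper does --- a choice that is arguably cleaner, since $A_\pi$ manifestly contains $\pi(\C)$ as a subalgebra as the unital corollary requires, and the comparison between the two unitizations that you flag as remaining work is precisely the step the paper also leaves implicit in its brief appeal to compatibility with the augmentation $A_+ \to \C$.
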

\begin{proof}
    First note that it still makes sense that the $\C$-subalgebra of the multiplier algebra induces the grading.
Indeed, because $A$ is an ideal in $M(A)$, we can make sense of the equation $\pi(i) a = (-1)^{\phi(a)} a \pi(i)$ in $A$.
Now note how $A_+$ admits the same direct sum decomposition into four pieces where $A_{u,0}$ becomes the unitization of $A_{u,0}$.
We therefore get the isomorphism $\SPT^{A_+} \cong K^{Kar, c}_0(A_+)$ from Corollary \ref{cor:FMvsSPTalgebrasunital}.
Since this isomorphism is compatible with the maps induced by $A_+ \to \C$ and so gives an isomorphism $\SPT^{A} \cong K^{Kar, c}_0(A)$ for nonunital $A$.

\end{proof}

\subsection{Charge 1 topological phases}
\label{sec: charge 1}

In this section, we will approach the unit charge requirement from a $C^*$-algebraic perspective in a similar spirit to Section \ref{sec:infinitegroups}.
Reflecting upon our $K$-theoretic Definition \ref{def:SPT phase} of free fermion SPTs, we could ask if $G^\tau$ is a Freed--Moore group, whether Definition \ref{def:freeSPTgroup} agrees with the twisted equivariant $K$-theory defined by Freed--Moore.
The answer to the question as stated is unfortunately no.
In this section, we will discuss how to modify the fermionic group algebra for a Freed--Moore group so that Definition \ref{def:SPT phase} recovers the Freed--Moore approach.

The idea is to appropriately incorporate the unit charge condition.
We have seen in Section \ref{sec:posdim} that if $D$ is a $\Z_2$-graded division algebra with $D_{ev} = \R$, then free SPT phases with symmetry $S(D)$ in spatial dimension $d$ are classified by $K_{2-d}(D^{op})$ in agreement with the tenfold way table.
However, if $D_{ev} = \C$, then we face the problem that $C^*_f(S(D))$ is huge and has many more representations than $D$.
For the quaternionic case $D_{ev} = \mathbb{H}$ the situation is similar and we restrict to the spin $1/2$ sector, see Section \ref{sec: spin1/2}.
We now treat the case $D = \C$ in detail to illustrate the situation, compare \cite[Section 2.4]{kapustinfreeinteracting}:

\begin{example}
\label{ex:revisitclassA}
We return to Example \ref{example: class A} but now interpret the symmetry group $U(1)$ as a Freed--Moore group.
Note that for the $\Z$-worth of complex irreducible representations of $U(1)$, charge $n$ and $-n$ for $n>0$ pair to a single irreducible representation over the real numbers of complex type.
We eliminated the even charge representations by imposing the spin-charge relation and so
\[
C^*_f(U(1)) \cong \bigoplus_{n>0 \text{ odd}} \C.
\]
So even though we cut down the size of $C^*(U(1))$, it still has an infinite amount of irreducible representations.
Since $K$-theory is additive in infinite direct sums of $C^*$-algebras, we get
\[
\SPT^{C^*_f(U(1))}_d \cong K_{2-d}(C^*_f(U(1))) = 
\begin{cases}
\bigoplus_{n >0 \text{ odd}} \Z & d \text{ even,}
\\
0 & d \text{ odd}.
\end{cases}
\]
This $K$-theory group can distinguish phases in which particles have different (odd) charge.
This is unlike the conventional class A classification, which is instead 
\[
K_{2-d}(\C) = 
\begin{cases}
\Z & d \text{ even,}
\\
0 & d \text{ odd},
\end{cases}
\]
the special case of $d=2$ being the quantum Hall effect.
In contrast to the representations of $U(1)$, there is just a single simple module of $\C$ and this module corresponds to the unit charge representations $\pm 1$.
So we will get back to a more conventional classification by enforcing representations to have charge $\pm 1$.
\end{example}

For a general Freed--Moore group $G^\tau$, we now apply a similar approach to how we defined the fermionic group algebra; 
consider $\C$ as a real $C^*$-algebra and the action $\alpha$ of $H:= G/U(1)_Q$ on $\C$ given by $z \mapsto \overline{z}$ if $\phi(h) = - 1$ and the identity otherwise (note that since $\phi\colon G^\tau \to \Z_2$ is continuous it factors through $H$).
As before, we assume a Borel measurable section $s\colon H \to G^\tau$ from which we can get a measurable $2$-cocycle $\tau\colon H \times H \to U(1)_Q$ where now $U(1)_Q$ is a nontrivial $H$-module under $\alpha$.
We get a twisted $C^*$-dynamical system $(\C, H, \alpha, \tau)$ and define

\begin{definition}
\label{def:unitcharge}
The \emph{unit charge group algebra} $C^*_u(G^\tau)$ of a Freed--Moore group $G^\tau$ is the associated twisted group $C^*$-algebra
\[
\C \rtimes_{\tau, \alpha} H
\]
associated to the extension
\[
1 \to U(1) \to G^\tau \to H \to 1.
\]
\end{definition}

We consider $C^*_u(G^\tau)$ as a real $C^*$-algebra, graded by $c$ or $\theta$ depending on context.
Note that if $\phi$ is nontrivial, then $C^*_u(G^\tau)$ does not admit a natural complex $C^*$-algebra structure because the canonical sub $C^*$-algebra $\C \subseteq M(C^*_u(G^\tau))$ given by multiples of the identity is not central.

The following theorem is proven at the end of Section \ref{sec:twistedpeterweyl}.

\begin{theorem}
\label{th:unit charge fd}
Let $(G^\tau, \theta, \phi, U(1)_Q)$ be a compact second-countable Freed--Moore group.
There is an isomorphism of real $C^*$-algebras
\[
C^*_{uc} (G^\tau) \cong
\bigoplus_{\substack{\text{f.d. irreps }\rho \\ \text{of unit charge}}} M_{\dim_{D(\rho))} (\rho)} (\End_{D(\rho)} \rho).
\]
The isomorphism respects the $\Z_2$-grading $\theta$.
\end{theorem}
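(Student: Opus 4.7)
The approach is to reduce Theorem \ref{th:unit charge fd} to the untwisted Peter--Weyl Theorem \ref{th:PeterWeyl} applied to the compact group $G^\tau$, by identifying $C^*_{uc}(G^\tau)$ as the direct summand of $C^*(G^\tau)$ corresponding to unit charge irreducible representations.

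The first step is to establish that nondegenerate real $*$-representations of $C^*_{uc}(G^\tau) = \C \rtimes_{\tau,\alpha} H$ on a real Hilbert space $M$ are in natural bijection with unit charge representations $R$ of $G^\tau$ on $M$ satisfying $R((-1)^F) = -\id$. Unwinding the universal property of the twisted crossed product (Appendix \ref{sec:twistedsemidirect}), such a representation is a covariant pair: a real $*$-representation $\pi \colon \C \to B(M)$ (equivalently, an orthogonal complex structure $I := \pi(i)$ on $M$) together with a Borel-measurable $\tau$-twisted lift $U \colon H \to O(M)$ satisfying $U(h) I U(h)^{-1} = (-1)^{\phi(h)} I$, matching the action $\alpha$ by complex conjugation. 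Setting
\[
R(e^{i\alpha Q} s(h)) := (\cos\alpha \cdot \id + \sin\alpha \cdot I)\, U(h)
\]
produces a representation of $G^\tau$ of unit charge by construction; the cocycle identity for $U$ compensates exactly for the failure of the measurable section $s$ to be a homomorphism, so $R$ is well defined, and the Freed--Moore relation $e^{i\alpha Q} g = g e^{i\phi(g)\alpha Q}$ is encoded in the covariance identity. The reverse direction is immediate.

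Next, Theorem \ref{th:PeterWeyl} applied to the compact second-countable group $G^\tau$ gives
\[
C^*(G^\tau) \cong \bigoplus_{\rho \text{ f.d.\ irrep}} M_{\dim_{D(\rho)}(\rho)}\bigl(\End_{D(\rho)} \rho\bigr).
\]
The correspondence of the previous paragraph packages into a surjective $*$-homomorphism $q \colon C^*(G^\tau) \twoheadrightarrow C^*_{uc}(G^\tau)$ given by the universal property: $q$ enforces the identification of the abstract $U(1)_Q$-action with the multiplier $\pi(i)$ on the generator $iQ$. A finite-dimensional irrep $\rho$ descends through $q$ precisely when $\rho$ has unit charge. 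Since the Peter--Weyl isomorphism is a decomposition into blocks indexed by irreducibles, $\ker q$ is the direct sum of the blocks corresponding to non-unit-charge irreducibles, and the image isomorphism is the claimed decomposition.

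For the grading statement, note that $\theta \colon G^\tau \to \Z_2$ is continuous and $U(1)_Q$ is connected, so $\theta$ vanishes on $U(1)_Q$ and factors through $H$. The grading on each block $M_{\dim_{D(\rho)}(\rho)}(\End_{D(\rho)} \rho)$ induced by the Peter--Weyl map from the $\theta$-grading on $C^*_{uc}(G^\tau)$ is then precisely the one given by conjugation by $\rho(g)$ for $\theta(g) = 1$, matching the grading in the statement. The main obstacle I anticipate is the careful setup of the universal property of the twisted crossed product in the real $C^*$-algebraic framework, particularly keeping track of the measurable section $s \colon H \to G^\tau$ when the central extension is topologically nontrivial; compactness of $G^\tau$ (hence amenability, so full and reduced twisted crossed products coincide) and second countability ensure that the standard theory of twisted crossed products \cite{echterhoff2008k} applies without analytic subtlety.
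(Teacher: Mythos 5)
Your overall strategy---reduce to the untwisted Peter--Weyl Theorem \ref{th:PeterWeyl} for $C^*(G^\tau)$ and transport the block decomposition through the correspondence between $\tau$-twisted representations of $H$ and genuine representations of $G^\tau$ restricting to the standard character on $U(1)_Q$---is the same reduction the paper performs; your first paragraph is essentially the paper's Lemma \ref{lemma: bijection} specialized to $N = U(1)_Q$. The difference lies in how the reduction is closed. The paper never constructs your map $q$: it instead exhibits an explicit covariant homomorphism of $(\C, H, \alpha, \tau)$ into the multiplier algebra of the candidate algebra (realized as the corner of $C^*(G^\tau)$ acting on $\mathcal{H}' = \bigoplus_{\text{uc}} V$) and verifies the universal property of Theorem \ref{th:univprop} directly, the crux being that an \emph{arbitrary} covariant pair $(\pi',v')$ into $M(C)$ annihilates the non-unit-charge blocks, proved via the finitely many projections $P_j$ onto charge $j \bmod p$ and a Vandermonde-type linear system.

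The genuine gap in your version is the assertion that $q$ is a surjection onto $C^*_{uc}(G^\tau)$. The universal property of $C^*(G^\tau)$ only hands you a nondegenerate $*$-homomorphism $q \colon C^*(G^\tau) \to M(C^*_{uc}(G^\tau))$ into the \emph{multiplier} algebra, and when $H$ is infinite $C^*_{uc}(G^\tau)$ is non-unital, so $M(C^*_{uc}(G^\tau))$ is strictly larger (the full product of blocks rather than the $c_0$-sum). Neither $\operatorname{im}(q) \subseteq C^*_{uc}(G^\tau)$ nor surjectivity is formal: nondegeneracy only gives that $q(C^*(G^\tau))\cdot C^*_{uc}(G^\tau)$ is dense, and identifying which $C^*$-subalgebra of the multiplier algebra the image is amounts to the statement being proved, so the step as written is circular. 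To close it you would need either an averaging-over-the-fiber argument at the level of $L^1$-sections (lift $f \in C(H,\C)$ to $F(z\,s(h)) = \bar z f(h)$ on $G^\tau$ and check that $q(F)$ recovers the generator of the crossed product attached to $f$), or the Packer--Raeburn decomposition of $C^*(G^\tau)$ as an iterated twisted crossed product $(\R \rtimes U(1)_Q) \rtimes H$. By contrast, your computation of $\ker q$ is sound once the image is under control: non-unit-charge blocks die in every representation pulled back from $C^*_{uc}(G^\tau)$ by your first paragraph, which would neatly replace the paper's Vandermonde argument in that direction.
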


\begin{definition}
\label{def:unit charge SPTs}
The group of \emph{continuum/lattice SPT phases of unit charge} with Freed--Moore symmetry $(G^\tau,\theta, \phi, U(1)_Q)$ in spatial dimension $d$ is the group of continuum/lattice SPT phases protected by $C^*_u(G^\tau)$.
\end{definition}

We see that continuum SPT phases are given by\footnote{Note that here we are taking the opposite with respect to the grading $\theta$.}
\[
\SPT_d^{C^*_u(G^\tau)} \cong K_{2-d}(C^*_{uc}(G^\tau)^{op}).
\]

\begin{example}
If $D$ is a real $\Z_2$-graded division algebra over $\R$ such that $D_{ev} = \C$, then by construction the unit charge group algebra of the resulting Freed--Moore group $S(D)$ is $C^*_{uc}(S(D)) = D$.
Therefore, SPT phases protected by the Freed--Moore group $U(1)$ recover the usual classification of class A phases.
\end{example}

\begin{remark}
An alternative and simpler approach to the unit charge problem that is mathematically viable, is to replace the fermionic group $U(1)_Q$ with the fermionic group $\Z_4^Q$ everywhere.
However, we choose to work with the symmetry group $U(1)$ because it more closely resembles the physics of charged particles.
For example, in continuum quantum electrodynamics, gauge fields are really $U(1)$-connections, while $\Z_4$-bundles have no nontrivial connections.
\end{remark}

\begin{remark}
Confusingly, is not true for Freed--Moore groups that $C^*_{uc}(G)^{op} \cong C^*_{uc}(G^{op})$ if $\phi$ is nontrivial.
Indeed, we can compute in $C^*_{uc}(G)^{op}$ that
\begin{align*}
    j(g)^{op} j(h)^{op} &= (-1)^{\theta(g) \theta(h)} (j(h) j(g))^{op} = (-1)^{\theta(g) \theta(h)} (\tau(h,g) j(hg))^{op} 
    \\
    &= (-1)^{\theta(g) \theta(h)} j(hg)^{op} \tau(h,g)
    = (-1)^{\theta(g) \theta(h) + \phi(hg)} \tau(h,g) j(hg)^{op},
\end{align*}
where $j$ denotes the embedding of $H = G^\tau/U(1)$ into $M(C^*_{uc}(G^\tau))$ as explained in Appendix \ref{sec:twistedsemidirect}.
This leads to the expectation that $C^*_{uc}(G)^{op}$ is a unit charge twisted group algebra of $G$ twisted by $\tau'(g,h) = (-1)^{\theta(g) \theta(h) + \phi(g) + \phi(h)} \tau(h,g)$.
In other words, the opposite of a Freed--Moore group should be defined differently from the opposite of its underlying fermionic group.
\end{remark}

\subsection{A comparison with Freed--Moore K-theory}
\label{sec: Freed--Moore K-theory}

In the work of Freed and Moore \cite{freedmoore}, an extensive rigorous framework is laid out to deal with free fermion SPT phases protected by a Freed--Moore symmetry group.
Instead of polarizations on Nambu space, flattened Hamiltonians on one particle Hilbert space give elements of Freed--Moore $K$-theory.
Mathematically, when we grade the $C^*$-algebra $C^*_{uc}(G^\tau)$ with $c = \theta + \phi$ instead of $\theta$, its graded representations are in one-to-one correspondence with $(\phi,\tau,c)$-twisted representations of $G^\tau$ as defined in \cite[Definition 3.7(ii)]{freedmoore}.
Up to the subtlety that finite-dimensional representations of $C^*_{uc}(G^\tau)$ are not the same as finitely generated Hilbert modules, (pairs of) such representations naturally live in the Karoubi $K$-theory group $K_0^c(C^*_u(G^\tau)^{op})$.\footnote{The opposite comes from a left vs right convention issue, see Remark \ref{rem:FMKthdetails}} 
Here we used the superscript $c$ to emphasize that we are using $c$ as our $\Z_2$-grading.
Note that one particle Hamiltonians naturally flatten to gradings as opposed to polarizations.
Moreover, the pseudosymmetries which anticommute with the Hamiltonian are modeled by elements with nontrivial $c(g)$ and so Hamiltonians are $G^\tau$-skewlinear with respect to the grading $c$.
We have thus motivated the following definition:

\begin{definition}
\label{def: Freed--Moore K-theory}
Let $(G^\tau,U(1)_Q,\theta,\phi)$ be a Freed--Moore group.
The \emph{Freed--Moore $K$-theory} of $G^\tau$ is defined as
\[
K_0^c(C^*_u(G^\tau)^{\op}).
\]
\end{definition}

We start with a few remarks for readers who are interested in a detailed comparison between this definition of Freed--Moore K-theory and the definitions in \cite{freedmoore}.

\begin{remark}\textcolor{white}{.} 
\label{rem:FMKthdetails}
\begin{enumerate}
    \item The opposite in Definition \ref{def: Freed--Moore K-theory} is related to our conventions for Karoubi $K$-theory, see Remarks \ref{rem:ABSconvention} and \ref{rem:Kasparovconvention}.

    \item Our Definition \ref{def: Freed--Moore K-theory} does not quite agree with \cite[Definition 7.1]{freedmoore} unless $c$ is trivial or $H$ is finite, because the finite-dimensional formulation of Freed--Moore diverges from the usual Karoubi formulation.
    This discrepancy is covered extensively in \cite[Section 4.3]{gomi2017freed}, see in particular the example at the end which can be reproduced in the current setting by taking $H = \Z \times \Z_2, c$ projection on the second coordinate and $\phi,\tau$ trivial.
    There is another minor pitfall in their definition of trivial $\Z_2$-graded modules, in which one should include the requirement that the odd automorphism squares to $-1$.\footnote{Requiring it to square to $1$ would result in the $K$-theory of the opposite. One can check that squaring to $-1$ gives the correct convention by comparing for example class AI and AII which are each other's opposite, also see item 1 in this remark.}
    \item In Definition 7.33 of Freed and Moore, $K$-theory is defined more generally for a space $X$ with an action of a group $G$ and twists $\tau$ not just depending on $G$ but also on $X$. 
    Our Definition \ref{def: Freed--Moore K-theory} could be compared to their Definition 7.1, which concerns the special case when $X$ is a point.
    However, for applications in condensed matter physics this is not a big loss: $X$ is typically the Brillouin zone torus (or sphere), which can be incorporated in the Freed--Moore group using the crystalline equivalence principle, see Section \ref{sec:posdim}.
    In particular, the reason for twisting the equivariant $K$-theory group of the Brillouin zone is then two-fold.
    The first is to incorporate Freed--Moore groups with a nontrivial extension by $U(1)$ (i.e. it deals with the twist $\tau$), the second is to accommodate for nonsymmorphic space groups, compare Remarks \ref{rem:Green-Julg} and \ref{rem:crystalline}.
\end{enumerate}
\end{remark}

We now provide a comparison between our Definition \ref{def:SPT phase} of SPT phases and Freed--Moore $K$-theory.

\begin{theorem}
    \label{th: main theorem}
    Let $G^\tau$ be a Freed--Moore group.
    The Freed--Moore $K$-theory of $G^\tau$ is isomorphic to the group of unit charge SPT phases protected by $G^\tau$.
\end{theorem}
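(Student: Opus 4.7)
The strategy is to chain together Proposition \ref{prop:FMvsSPTalgebrasunital}, which converts polarizations into $c$-gradings via multiplication by $\pi(i)$, with Theorem \ref{th:IQPVK-theory}, which identifies $\SPT^A$ with $K_2(A^{\op})$. Since $\phi + c \equiv \theta \pmod 2$, the right-hand side $K_2^{\phi+c}(C^*_{uc}(G^\tau)^{\op})$ of the displayed equation is by Theorem \ref{th:IQPVK-theory} just the group $\SPT^{C^*_{uc}(G^\tau)}$ of unit-charge SPT phases with the $\theta$-grading, so the task reduces to identifying this with the Freed--Moore $K$-theory $K_0^c(C^*_{uc}(G^\tau)^{\op})$ of Definition \ref{def: Freed--Moore K-theory}.

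First I would verify the hypotheses of Proposition \ref{prop:FMvsSPTalgebrasunital}. Setting $A := C^*_{uc}(G^\tau)$ with its $\theta$-grading, the additional $\phi$-grading descends from the homomorphism $\phi \colon H \to \Z_2$, and the required central $*$-subalgebra $\pi \colon \C \hookrightarrow M(A)$ is the canonical $\C$-factor coming from the twisted semidirect product $\C \rtimes_{\tau, \alpha} H$. The anticommutation relation $\pi(i)\,a = (-1)^{\phi(a)} a\,\pi(i)$ of equation \eqref{eq:anti-unitary} holds in $M(A)$ because the action $\alpha(h)$ on $\C$ is complex conjugation precisely when $\phi(h)=1$; this propagates to the generators $j(h)$ from the structure of the twisted $C^*$-dynamical system. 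A parallel verification shows that $B := C^*_{uc}(G^\tau)^{\op}$ inherits a $\phi$-grading and an embedding $\pi^{\op} \colon \C \hookrightarrow M(B)$ satisfying the same anticommutation, along the lines of the sign computation at the end of Remark 5.22. Applying Proposition \ref{prop:FMvsSPTalgebrasunital} to $B$ then yields
\[
\SPT^{B}_{\theta} \;\cong\; K_0^c(B) \;=\; K_0^c(C^*_{uc}(G^\tau)^{\op}),
\]
while Theorem \ref{th:IQPVK-theory} applied to $B$ gives $\SPT^{B}_{\theta} \cong K_2^\theta(B^{\op}) = K_2^\theta(A)$.

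The main obstacle is the remaining identification $K_2^\theta(C^*_{uc}(G^\tau)) \cong K_2^\theta(C^*_{uc}(G^\tau)^{\op})$. I would exhibit a $\Z_2$-graded $*$-isomorphism $C^*_{uc}(G^\tau) \cong C^*_{uc}(G^\tau)^{\op}$ induced by the group inversion $h \mapsto h^{-1}$ on $H$, rescaled by $\tau(h, h^{-1})$ to absorb the opposite cocycle $\tau^{\op}(g,h) = \tau(h,g)$; the fact that $\tau$ and $\tau^{\op}$ are cohomologous in Moore cohomology guarantees this adjustment exists. Since $\theta$ is a group homomorphism it is preserved by inversion, and the rescaling takes values in the central $\C$ (hence is $\theta$-even), so the isomorphism respects the $\theta$-grading and induces an isomorphism on $K_2^\theta$. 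Chaining $K_0^c(C^*_{uc}(G^\tau)^{\op}) \cong \SPT^{B}_\theta \cong K_2^\theta(A) \cong K_2^\theta(A^{\op})$ produces the stated isomorphism.
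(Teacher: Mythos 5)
Your first two steps are exactly the paper's proof: the paper sets $A := C^*_{uc}(G^\tau)^{\op}$ with its $\theta$-grading, observes that the canonical copy of $\C$ in $M(A)$ induces the $\phi$-grading, and invokes Proposition \ref{prop:FMvsSPTalgebrasunital} to obtain $\SPT^{A} \cong K_0^c(A)$ --- and there it stops. The additional bridging step you introduce is where the proposal breaks. There is no $\Z_2$-graded $*$-isomorphism $C^*_{uc}(G^\tau) \cong C^*_{uc}(G^\tau)^{\op}$ in general, and even the weaker claim $K_2^\theta(C^*_{uc}(G^\tau)) \cong K_2^\theta(C^*_{uc}(G^\tau)^{\op})$ fails. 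The opposite algebra is the unit charge twisted group algebra for the cocycle $\tau'(g,h) = (-1)^{\theta(g)\theta(h)+\phi(g)+\phi(h)}\tau(h,g)$ (final remark of Section \ref{sec: charge 1}), and this is \emph{not} cohomologous to $\tau$: the extra sign changes the square of every time-reversing lift. Concretely, the class AI and class AII $CT$-groups are each other's opposites (footnote to Remark \ref{rem:FMKthdetails}, Table \ref{tab:tenfold}), with unit charge algebras $Cl_{+2} \cong M_2(\R)$ and $Cl_{-2} \cong \mathbb{H}$; these are neither isomorphic nor Morita equivalent and have $K_0 \cong \Z_2$ versus $K_0 = 0$. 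Likewise $\op$ exchanges BDI and DIII, and $K_2(Cl_{+1}) \cong KO_3(\pt) = 0$ while $K_2(Cl_{-1}) \cong KO_1(\pt) \cong \Z_2$. So no rescaling of the inversion map $h \mapsto h^{-1}$ can absorb the opposite cocycle.

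The tension you were trying to resolve is nevertheless real, and worth being explicit about. What Proposition \ref{prop:FMvsSPTalgebrasunital} applied to $B := C^*_{uc}(G^\tau)^{\op}$ delivers is $K_0^c(B) \cong \SPT^{B}$, which by Theorem \ref{th:IQPVK-theory} equals $K_2^\theta(C^*_{uc}(G^\tau))$, whereas Definition \ref{def:unit charge SPTs} read literally gives $\SPT^{C^*_{uc}(G^\tau)} \cong K_2^\theta(C^*_{uc}(G^\tau)^{\op})$; by the BDI/DIII example these can differ. The paper does not bridge this by an algebra isomorphism --- the $\op$ in Definition \ref{def: Freed--Moore K-theory} is a left-versus-right module bookkeeping device (Remark \ref{rem:FMKthdetails}), and the invariant content of the theorem is the isomorphism its one-line proof actually produces, namely $K_0^c(C^*_{uc}(G^\tau)^{\op}) \cong \SPT^{C^*_{uc}(G^\tau)^{\op}}$. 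If you insist on the statement for $\SPT^{C^*_{uc}(G^\tau)}$ itself, the fix is to track which of the two mutually opposite algebras Freed--Moore's twisted representations are modules over, not to identify an algebra with its opposite.
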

\begin{proof}
Note that $A := C^*_u(G^\tau){op}$ is a(n in general non-unital) $\Z_2$-graded real $C^*$-algebra with another grading $\phi$ induced by the grading $\phi$ of $G^\tau$.
It comes equipped with a subalgebra $\C \subseteq M(C^*_u(G^\tau)^{op})$ which induces the $\phi$ grading.
The result follows from Proposition \ref{prop:FMvsSPTalgebrasunital}.
\end{proof}

\begin{remark}
    We expect that there is a close relationship between the above theorem and \cite[Theorem 3.12]{gomi2017freed}.
\end{remark}

\subsection{Spin 1/2 topological phases}
\label{sec: spin1/2}

We have seen in the last section that if the symmetry group $G^\tau$ contains a $U(1)$ normal subgroup $U(1)_Q \subseteq G^\tau$, we can require representations to have unit charge by looking at the twisted group algebra of $G^\tau / U(1)_Q$.
We encounter a similar situation if a fermionic group $G^h$ contains a distinguished normal subgroup $SU(2) \subseteq G^h$.
Namely, $SU(2)$ has representations of arbitrary spin $s \in \frac{1}{2}\Z_{\geq 0}$, but we might want to restrict to the case of spin $1/2$ corresponding to the canonical representation of $SU(2) \cong Sp(1)$ on $\mathbb{H}$.

This situation is relevant for several descriptions of topological phases that incorporate the $\Spin(3) = SU(2)$-rotation invariance of electrons directly into the symmetry group.\footnote{In the lab, $(2+1)$-dimensional systems are achieved by confining a spinful particle in our $(3+1)$-dimensional spacetime to a slab, which does not change the representation type of the $(3+1)$-dimensional spinors.}
For example, the original Altland--Zirnbauer paper argues that for example class C has an $SU(2)$-symmetry, which only gets broken to the class D $\Z_2^F$-symmetry when relevant impurities create spin-orbit interactions \cite{altlandzirnbauer}.
In that case where the $SU(2)$ symmetry is not internal, we might need to be more careful coupling this symmetry to $\R^3$ appropriately via $SO(3)$.

To achieve this, we generalize the real and complex case to the quaternions, omitting the details.
The only significant difference is that while the automorphism groups of the real $C^*$-algebras $\R$ and $\C$ are discrete, the automorphism group of $\mathbb{H}$ is $\Aut SU(2) \cong SO(3)$, which has interesting topology.
Indeed, all automorphisms are inner and the center of $\mathbb{H}$ is $\R$ so that $\{\pm 1\} \subseteq SU(2)$ acts trivially, resulting in 
\[
\Aut SU(2) \cong \frac{SU(2)}{\{\pm 1\}} \cong SO(3).
\]
So assume the existence of a Borel measurable section $s$ of
\[
1 \to SU(2) \to G^h \to H \to 1.
\]
For example, we might require this exact sequence to be a fiber bundle.
We then get a measurable action $\phi\colon \mathbb{H} \to \Aut SU(2) \cong SO(3)$ and a measurable nonabelian cocycle $\tau\colon \mathbb{H} \times \mathbb{H} \to SU(2)$.
We can then define the spin $1/2$ group algebra to be the twisted group algebra 
\[
C^*_{s = 1/2}(G) = \mathbb{H} \rtimes_{\phi, \tau} H.
\]

\begin{definition}
\label{def:spin 1/2 SPTs}
The group of \emph{SPT phases of spin 1/2} with symmetry $(G^h,\theta)$ and distinguished subgroup $SU(2) \subseteq G^h$ in spatial dimension $d$ is $\SPT^{C^*_{s = 1/2}(G)}_d$.
\end{definition}

By Proposition \ref{prop:posdim}, the group of SPT phases of spin 1/2 is isomorphic to $K_{2-d}(C^*_{s = 1/2}(G)^{op})$.

\begin{example}[class CI]
Consider the internal symmetry group
\[
 \Pin^+_3 \cong \frac{SU(2) \times \Z^{TF}_4}{\Z^F_2}
\]
associated to the $\Z_2$-graded division algebra $Cl_{+3}$.
The fermionic group algebra of $\Pin^+_3$ is infinite-dimensional, but
\[
C^*_{s = 1/2}(\Pin^+_3) \cong Cl_{+3}.
\]
Therefore SPT phases of spin 1/2 protected by the symmetry group $\Pin^+_3$ agree with SPT phases protected by the symmetry algebra $Cl_{+3}$.
We have seen in Section \ref{sec: ten division algebras} it is given in spatial dimension $d$ by
\[
K_{2-d}(C^*_{s = 1/2}(G)^{op}) = K_{2-d}(Cl_{-3}) \cong K_{-d-1}(\R),
\]
in agreement with the usual Bott-periodic classification of class CI topological phases.
\end{example}

\begin{remark}
The group $SU(2)$ has many $U(1)$-subgroups.
Fixing one of these, it makes sense to ask whether a representation of $G^h$ has unit charge.
In fact, from the representation theory of $SU(2)$ it follows that a representation has spin $1/2$ if and only if it has unit charge.
However, we cannot apply the theory of the Section \ref{sec: charge 1} to describe spin $1/2$ representation theory, since $SU(2)$ has no normal $U(1)$-subgroups.
\end{remark}

\begin{remark}
\label{rem:class AI}
It can happen that a symmetry group contains multiple subgroups $U(1),SU(2)$ for which we want to apply the unit charge and/or spin 1/2 relation.
We will not make the construction of the resulting $\Z_2$-graded $C^*$-algebra precise, but instead give a relevant example.
In the approach to class AI of \cite{altlandzirnbauer}, the symmetry algebra is
\begin{align*}
    \mathbb{H} \otimes_\R \frac{\R[iQ, T]}{((iQ)^2 = -1, T^2 = -1, iQT = -TiQ)} \cong \mathbb{H} \otimes Cl_{-2}
\end{align*}
where $T$ is time-reversing and hence odd.
If we want to describe this instead using a group such as $(SU(2) \times \Pin^-(2))/\Z_2$ (compare \cite[Table 1]{alldridgemaxzirnbauer}), we need to apply both unit charge and spin 1/2 relations.
See Remark \ref{rem:soc} how this example relates to the description of class AI by a time-reversal with square $1$.
\end{remark}

\section{Tenfold ways compared}
\label{sec:tenfold}

Recall from Sections \ref{sec: ten division algebras} and \ref{sec: ten fermionic groups} that the classification of $\Z_2$-graded division algebras gives a natural collection of ten fermionic groups.
Three of these (corresponding to class D,DIII and BDI) do not have a $U(1)$-subgroup and so are not Freed--Moore groups.
To incorporate these classes in the charged formalism, the idea is to instead introduce an extra particle hole symmetry to `cancel' the charge.
A succinct way to describe the symmetry classes of the tenfold way in this fashion was first introduced in \cite{schnyder2008classification} and made mathematically rigorous in \cite{freedmoore}.
The possibility to introduce two symmetries in addition to charge\footnote{The fact that in this formalism the symmetry group contains charge by default is often not explicitly mentioned.} is provided, a particle-hole symmetry $C$ and time-reversal symmetry $T$.
Here $C$ is complex linear on Fock space and anticommutes with charge but complex anti-linear on one particle space, also see the discussion in Section \ref{sec: nambu}.
Note that since our particles are by default charged, the symmetry class `without symmetries' here is class A in contrast to the approach described in Section \ref{sec:neutralfermion} where it is instead class D.
The goal of this section is to understand how the internal symmetry algebras and groups the CT mechanism provides are related to the symmetry groups introduced in Sections \ref{sec: ten division algebras} and \ref{sec: ten fermionic groups} respectively.

We start by recalling the mathematical description of such CT-groups following \cite[Section 6]{freedmoore}.
We then proceed to relate them to the ten Morita classes of simple $\Z_2$-graded algebras over $\R$ as described in Section \ref{sec: ten division algebras}.
Let $\mathcal{C} := \{1,T\} \times \{1,C\}$ be the group $\Z_2 \times \Z_2$ written multiplicatively, so $T^2 = C^2 = 1$.
We define $\theta\colon \mathcal{C} \to \Z_2$ as the homomorphism determined by $\theta(T) = 1$ and $\theta(C) = 0$ and $\phi\colon \mathcal{C} \to \Z_2$ as $\phi(T) = 1$ and $\phi(C) = 1$.
So $c = \phi + \theta\colon \mathcal{C} \to \Z_2$ satisfies $c(T) = 0$ and $c(C) = 1$.
A $CT$-group is now a choice of a subgroup of $\mathcal{C}$, together with a Freed--Moore group extending it:

\begin{definition}
A \emph{$CT$-group} $(G^\tau, \theta, \phi)$ is a Freed--Moore group such that
\[
G^\tau/U(1)_Q \cong A
\]
for some subgroup $A \subseteq \mathcal{C}$.
The homomorphisms $\theta,\phi\colon G^\tau \to \Z_2$ are required to be compatible with the fixed homomorphisms $\theta,\phi\colon \mathcal{C} \to \Z_2$.
\end{definition}

For the four $\Z_2$-graded division algebras $D$ with $D_{ev} = \C$, the corresponding internal symmetry group $S(D)$ is a Freed--Moore group.
In fact, they are all CT-groups. 

\begin{proposition}[{\cite[Proposition 6.4]{freedmoore}}]
\label{prop:ten CT-groups}
There are ten isomorphism classes of $CT$-groups.
More precisely, there are four isomorphism classes\footnote{An homomorphism of Freed-Moore groups is a homomorphism of extensions preserving both gradings.} of $CT$-groups with $A = \mathcal{C}$.
The four cases are distinguished by whether it is possible to find lifts $\hat{T}, \hat{C} \in G$ of $T,C \in A$ such that $\hat{T}^2 = \pm 1$ and $\hat{C}^2 = \pm 1$.
Similarly in the cases $A = \{1,C\}$ and $A = \{1,T\}$ there are two possible $G$ depending on the existence of a lift with square $-1$.
For $A = \{ 1,CT\}$, there is just one $CT$-group.
\end{proposition}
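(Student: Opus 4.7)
The plan is to classify CT-groups by enumerating the subgroups $A \subseteq \mathcal{C}$ and, for each such $A$, classifying the extensions
\[
1 \to U(1)_Q \to G^\tau \to A \to 1
\]
in which the conjugation action of $A$ on $U(1)_Q$ is the one prescribed by $\phi$: complex conjugation for elements with $\phi = 1$ and the identity otherwise. The five subgroups of $\mathcal{C} \cong \Z_2 \times \Z_2$ are $\{1\}$, $\{1,T\}$, $\{1,C\}$, $\{1,CT\}$ and $\mathcal{C}$ itself; I claim they yield $1+2+2+1+4=10$ extensions respectively. Equivalently, one is computing $H^2(A, U(1)_Q)$ with the twisted coefficients, but the argument is cleaner done by hand in terms of lifts.

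The key observation is that if $g \in A$ is of order two with $\phi(g)=1$, then a lift $\hat{g} \in G^\tau$ satisfies $\hat{g}^2 \in U(1)_Q$, and this element must be fixed under conjugation by $\hat{g}$; since $\hat{g}$ inverts $U(1)_Q$, this forces $\hat{g}^2 \in \{\pm 1\}$. Rescaling the lift as $\hat{g} \mapsto e^{i\alpha Q}\hat{g}$ yields $(e^{i\alpha Q}\hat{g})^2 = e^{i\alpha Q} e^{-i\alpha Q} \hat{g}^2 = \hat{g}^2$, so the sign is a genuine invariant of the extension, and both values are realized by $\Pin^{\pm}$-type constructions. On the other hand, if $\phi(g)=0$ then $\hat{g}^2 \in U(1)_Q$ is a priori unconstrained, but rescaling transforms it as $\hat{g}^2 \mapsto e^{2i\alpha Q}\hat{g}^2$, so we can always normalize $\hat{g}^2 = 1$ since every element of $U(1)_Q$ is a square. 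This immediately counts the extensions for $A = \{1\}, \{1,T\}, \{1,C\}, \{1,CT\}$ as $1, 2, 2, 1$ respectively, using that $\phi(T) = \phi(C) = 1$ while $\phi(CT)=0$.

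The main obstacle is the case $A = \mathcal{C}$, where one must rule out additional invariants hidden in the commutation relation between $\hat{T}$ and $\hat{C}$. Writing $\hat{T}\hat{C}\hat{T}^{-1} = z\hat{C}$ with $z \in U(1)_Q$, squaring both sides gives $\hat{T}\hat{C}^2\hat{T}^{-1} = z^2 \hat{C}^2$; since $\hat{C}^2 = \pm 1$ is fixed by conjugation, this forces $z^2 = 1$, so $z \in \{\pm 1\}$. The rescalings $\hat{T} \mapsto e^{i\alpha Q}\hat{T}$ and $\hat{C} \mapsto e^{i\beta Q}\hat{C}$ preserve both squares $\hat{T}^2, \hat{C}^2$ (by the computation above) but transform $z \mapsto e^{2i(\alpha-\beta)Q} z$; choosing $\alpha - \beta = \pi/2$ flips the sign, so we may always normalize $z = 1$ without affecting $\hat{T}^2, \hat{C}^2$. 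Consequently, the four cases $(\hat{T}^2, \hat{C}^2) \in \{\pm 1\}^2$ exhaust the isomorphism classes over $A = \mathcal{C}$, and summing over all five subgroups gives $10$ as claimed.
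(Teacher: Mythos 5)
Your argument is correct and reaches the same count, but by a genuinely different route from the paper. The paper reduces the classification to group cohomology: an isomorphism class of $CT$-groups is a subgroup $A \subseteq \mathcal{C}$ together with a class in $H^2_m(A, U(1)_\phi) \cong H^3(BA, \Z_\phi)$, which is then computed from the known (twisted and untwisted) cohomology of $B\Z_2$ and a K\"unneth formula for $A = \mathcal{C}$. You instead classify the extensions by hand via lifts and rescalings; this is more elementary and makes the invariants $(\hat{T}^2, \hat{C}^2)$ physically visible, at the cost of having to argue separately that no further invariants hide in the commutation relation and that all sign combinations are realized (which you assert via $\Pin^\pm$-type constructions for the cyclic cases but leave implicit for $A = \mathcal{C}$). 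The cohomological route packages realizability automatically, since every cohomology class corresponds to an extension.

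One step in your $A = \mathcal{C}$ analysis is wrong, though harmlessly so. From $\hat{T}\hat{C}\hat{T}^{-1} = z\hat{C}$ you square to get $\hat{T}\hat{C}^2\hat{T}^{-1} = z^2\hat{C}^2$ and conclude $z^2 = 1$. But $(z\hat{C})^2 = z\hat{C}z\hat{C} = z\,z^{-1}\hat{C}^2 = \hat{C}^2$, because $\hat{C}$ inverts $U(1)_Q$ (you used $\phi(C) = 1$ everywhere else); so squaring yields only the tautology $\hat{C}^2 = \hat{C}^2$ and places no constraint on $z$. Fortunately this step is unnecessary: your rescaling formula $z \mapsto e^{2i(\alpha - \beta)Q}z$ shows that $z$ can be normalized to $1$ for \emph{any} $z \in U(1)_Q$, not merely $z = \pm 1$, since $e^{2i(\alpha-\beta)}$ ranges over all of $U(1)$. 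With that step deleted, the argument goes through and the four classes over $A = \mathcal{C}$ are exactly the pairs $(\hat{T}^2, \hat{C}^2) \in \{\pm 1\}^2$, matching the paper's $H^3(B\mathcal{C}, \Z_\phi) \cong \Z_2 \times \Z_2$.
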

\begin{proof}
This can be reduced to a basic problem in group cohomology as follows.
An isomorphism class of $CT$-groups is specified by a subgroup $A \subseteq \mathcal{C}$ and an element of $H^2_m(A,U(1)_\phi) \cong H^3(BA,\Z_\phi)$.
Here $U(1)_\phi$ is the topological $A$-module $a \cdot z := z^{(-1)^{\phi(a)}}$ and $\Z_\phi$ the $A$-module $a \cdot n = (-1)^{\phi(a)} n$.
For $A$ one of the $\Z_2$ subgroups, we need the group cohomology of $\Z_2$ with and without twisted coefficients
\begin{align*}
    H^n(B\Z_2, \Z) &=
    \begin{cases}
    \Z & n=0,
    \\
    \Z_2 & n>0 \text{ even},
    \\
    0 & n \text{ odd}.
    \end{cases}
    \\
    H^n(B\Z_2, \Z_-) &=
    \begin{cases}
    \Z_2 & n>0 \text{ odd},
    \\
    0 & n \text{ even}.
    \end{cases}
\end{align*}
For $A = \mathcal{C}$, we can apply a K\"unneth formula.
The Tor terms vanish and the result is $H^3(\mathcal{C},\Z_\phi) = \Z_2 \times \Z_2$.
\end{proof}

For $H := G^\tau/U(1)$ finite, the unit charge group algebra 
\[
C^*_{uc}(G^\tau)
\]
of $(G^\tau,\theta,\phi)$ is isomorphic to the finite-dimensional real algebra generated by elements $i$ and $j(g)$ for $g \in H$ together with the relations 
\[
i^2 = -1, \quad j(g) j(h) = \tau(g,h) j(gh), \quad i j(g) = (-1)^{\phi(g)} j(g) i
\]
with the understanding that $i$ is even and $j(g)$ is odd or even depending on $\theta(g)$.

\begin{example}[class D]
\label{ex:class D}
    Take $A \subseteq \mathcal{C}$ to be the subgroup generated by $C$ and let $G^\tau$ be the trivial extension of $A$ by $U(1)$.
    In other words, $G^\tau = \Z_2 \rtimes U(1)$ with $\phi$ nontrivial and $\theta$ trivial.
    Then the unit charge group algebra is
    \[
    C^*_{uc}(G^\tau) \cong \frac{\R[i,C]}{(C^2 = 1, i^2 = -1, iC =- Ci)}.
    \]
    As an algebra graded by $\theta$ (which is trivial), it is therefore isomorphic to $M_2(\R)$.
    We see that $C^*_{uc}(G^\tau)$ is not isomorphic to the fermionic group algebra $C^*_f(\Z_2^F) \cong \R$ of the uncharged symmetry group $\Z_2^F$ of class D.
    However, they are Morita equivalent and so have isomorphic $K$-theory, which suffices for applications in weakly interacting phases, compare \cite{stehouwer2022interacting}.
    Note that under the $c$-grading used in \cite{freedmoore}, we have that $C^*_{uc}(G^\tau) \cong Cl_{+2}$ instead.
    This is an instance of the shift by two implicit in Theorem \ref{th: main theorem}.
\end{example}

\begin{proposition}
The ten real $\Z_2$-graded algebras obtained by forming the unit charge twisted group algebras $C^*_{uc}(A^\tau)$ corresponding to the ten $CT$-groups $(A,\tau)$ with $A \subseteq \mathcal{C}$ and $\tau \in H^2(A,U(1)_\phi)$ form a full set of representatives for the ten Morita classes of simple real $\Z_2$-graded algebras. 
\end{proposition}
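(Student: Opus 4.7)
The plan is a direct case-by-case verification against the two lists. By Proposition \ref{prop:ten CT-groups} there are exactly ten $CT$-groups, and by Wall's classification together with graded Morita invariance there are exactly ten Morita classes of simple real $\Z_2$-graded algebras, one for each $\Z_2$-graded division algebra listed in Section \ref{sec: ten division algebras}. It therefore suffices to compute $C^*_{uc}(A^\tau)$ for each $CT$-group $(A,\tau)$, verify graded-simplicity, and check that the ten resulting algebras realize ten distinct Morita classes; then the bijection falls out by counting.

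I would use the presentation of $C^*_{uc}(A^\tau)$ recorded just before Example \ref{ex:class D}: generators $i$ (even, $i^2=-1$) and $j(g)$ of $\theta$-degree $\theta(g)$ for $g\in A$, satisfying $j(g)j(h)=\tau(g,h)j(gh)$ and $ij(g)=(-1)^{\phi(g)}j(g)i$. For $A=\{1\}$ one immediately reads off $C^*_{uc}=\C$. For $A=\{1,T\}$ with $\hat T^2=\varepsilon\in\{\pm 1\}$, a short computation shows that $\hat T$ and $i\hat T$ are graded-anticommuting odd elements both squaring to $\varepsilon$, producing $Cl_{\pm 2}$. For $A=\{1,C\}$, Example \ref{ex:class D} covers $\hat C^2=1$ (giving $M_2(\R)$, Morita equivalent to $\R$); the sign flip $\hat C^2=-1$ yields the quaternions $\mathbb{H}$, both with trivial $\theta$-grading. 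For $A=\{1,CT\}$, the vanishing $H^2(\Z_2,U(1)_{\text{triv}})=0$ leaves a unique class, and the generator $\widehat{CT}$ (odd, commuting with $i$ since $\phi(CT)=0$) produces $\C l_1$. Finally, the four cases with $A=\mathcal{C}$ are distinguished by the sign pair $(\varepsilon_T,\varepsilon_C)$; combining $i$, $\hat T$, $\hat C$ and their products into graded-anticommuting odd generators and computing the resulting signature yields graded algebras Morita equivalent to the four remaining division algebras $Cl_{\pm 1}$ and $Cl_{\pm 3}$.

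The main obstacle is the sign bookkeeping in the $A=\mathcal{C}$ cases: one must also track the commutation sign $\eta$ appearing in $\hat T\hat C=\eta\hat C\hat T$, which is fixed (once lifts are chosen) by the cocycle $\tau$, and correctly assemble the Clifford generators of the right squares from $\hat T$, $i\hat T$, $\hat C\hat T$, $i\hat C\hat T$. Graded-simplicity in each case follows either from the explicit Clifford (or matrix-over-Clifford) presentation, or abstractly because twisted group algebras of finite groups over $\R$ are semisimple by Maschke's theorem, together with the inspection that only one graded-simple summand appears. Non-Morita-equivalence of the ten resulting algebras can be confirmed independently by computing $K_2$ of each opposite via Theorem \ref{th: main theorem} and matching the values against the ten distinct entries of the real and complex Bott songs in Section \ref{sec: ten division algebras}, thereby completing the bijection.
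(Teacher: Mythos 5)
Your proposal is correct and follows essentially the same route as the paper: a brute-force computation of $C^*_{uc}(A^\tau)$ in all ten cases from the generators-and-relations presentation, identifying each result with one of the ten $\Z_2$-graded division algebras up to graded Morita equivalence (the paper only writes out the $A=\mathcal{C}$, trivial-twist case explicitly and notes the other six are subalgebras of the four $A=\mathcal{C}$ cases). One caveat on your final ``independent confirmation'': the ten values of $K_2$ (equivalently the Bott-song entries in a single fixed degree) are \emph{not} pairwise distinct --- three are $\Z$, two are $\Z_2$, five vanish --- so a single $K$-group cannot separate the ten Morita classes; this does not damage your argument, since pairwise non-equivalence of the ten division algebras already follows from Wall's classification, which you invoke at the outset.
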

\begin{proof}
The proposition follows by brute force computation of all ten cases.
The example below provides the case $A = \mathcal{C}$ and trivial twist $\tau$.
It easily adapts to the other three twists; we only have to recompute the square of $iCT$.
The other six cases are subalgebras of these; the results are displayed in Table \ref{tab:tenfold}.
\end{proof}

\begin{example}[class BDI]
The fermionic internal symmetry group describing class BDI as outlined in Section \ref{sec: ten fermionic groups} is given by $S(Cl_{+1}) = \Z_2^F \times \Z_2^T$.
A one-dimensional example of a nontrivial topological superconductor of this symmetry type is given by the time-reversal symmetric Kitaev chain~\cite{fidkowski2010effects}, see \cite{debraygunningham} for a mathematical treatment.

Consider the CT-group in which $A = \mathcal{C}$ and the cocycle is trivial, so that lifts of $C$ and $T$ square to $1$.
This results in a Freed--Moore symmetry group $G^\tau$ which is not isomorphic to $S(Cl_{+1})$.
We abuse notation and write the lifts in $A^\tau$ again as $C$ and $T$.
The unit charge twisted group algebra is
\begin{align*}
C^*_{uc}(G^\tau) =    \frac{\R[i,C,T]}{(i^2 = -1, C^2 = 1, T^2 = 1, CT = TC, iT = -Ti, iC = -Ci)}.
\end{align*}
Consider the generating set $\{T, iT, iCT\}$ of odd elements. A short computation verifies that they mutually anticommute and
\[
T^2 = (iT)^2 = - (iCT)^2 = 1.
\]
We see that the twisted group algebra is isomorphic to a Clifford algebra with two generators that square to one and one generator that squares to $-1$, so
\begin{align*}
    C^*_{uc}(G^\tau) \cong Cl_{+2} \otimes Cl_{-1} \cong M_{1|1}(Cl_{+1})
\end{align*}
is Morita equivalent as a $\Z_2$-graded algebra to $Cl_{+1}$.
We see that even though $C^*_{uc}(G^\tau)$ is not isomorphic to the fermionic group algebra $C^*_f(\Pin^+_1) = Cl_{+1}$ of class BDI, their $K$-theory groups agree in all degrees by Morita invariance.
Hence for describing free fermion phases, the Freed--Moore description via $G^\tau$ is equivalent to the description using the fermionic group $G = \Pin^+_1$.
\end{example}

\begin{corollary}
    As the symmetry group goes through the ten possible CT-groups, the group of SPT phases of unit charge in spatial dimension $d$ goes through the ten groups
    \[
    K^{d+k}(\pt)  \quad KO^{d+s}(\pt)
    \]
    for $k = 0,1$ and $s = 0, 1, \dots, 7$.
    The precise choices of $s$ given a CT-group are given in Table \ref{tab:tenfold}.
\end{corollary}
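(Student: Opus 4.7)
The plan is to combine the preceding proposition with Theorem \ref{th: main theorem} and the Bott song computation from Section \ref{sec: ten division algebras}. By Definition \ref{def:unit charge SPTs} and Proposition \ref{prop:posdim} (suitably modified to the setting with an extra $U(1)$-subgroup), the group of unit charge SPT phases in spatial dimension $d$ with Freed--Moore symmetry $G^\tau$ is isomorphic to $K_{2-d}(C^*_{uc}(G^\tau)^{op})$. The preceding proposition asserts that as $(A,\tau)$ ranges over the ten $CT$-groups, the algebras $C^*_{uc}(G^\tau)$ exhaust the ten Morita classes of simple $\Z_2$-graded real algebras, whose representatives are the ten $\Z_2$-graded division algebras listed in Section \ref{sec: ten division algebras}.

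First I would invoke Morita invariance of Karoubi $K$-theory (Remark \ref{rem:morita} together with the fact that the opposite of a matrix algebra over $D$ is Morita equivalent to $D^{op}$) to reduce the computation of $K_{2-d}(C^*_{uc}(G^\tau)^{op})$ to $K_{2-d}(D^{op})$ for the appropriate $\Z_2$-graded division algebra $D$. Then, using the identification $K_{2-d}(D^{op}) \cong K^{ABS}_{2-d}(D)$ from Remark \ref{rem:ABSKth} and Definition \ref{def:higherkarkth}, together with the computation of the Bott song in Section \ref{sec: ten division algebras}, each such group is identified with either $K^{d+k}(\pt) = K_{-d-k}(\C)$ for $k \in \{0,1\}$ (the two complex classes $\C, \C l_1$) or with $KO^{d+s}(\pt) = K_{-d-s}(\R)$ for $s \in \{0,1,\dots,7\}$ (the eight real Morita classes corresponding to $\R, Cl_{\pm 1}, \dots, \mathbb{H}, \dots$).

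To pin down the explicit value of $s$ for each $CT$-group, I would perform the Clifford algebra bookkeeping case-by-case, following the template of the class BDI example just given: in each case, list a set of anticommuting odd generators of $C^*_{uc}(G^\tau)$, count their squares with sign, and identify the resulting $\Z_2$-graded Clifford algebra $Cl_{p,q}$ up to the relation $Cl_{1,1} \otimes Cl_{p,q} \cong Cl_{p+1,q+1}$. The signature $p-q \pmod 8$ then determines $s$, and matching with Table \ref{tab:tenfold} is straightforward. In particular, the class D case from Example \ref{ex:class D} gives $Cl_{+2}$ hence $s=2$ (so $KO^{d+2}$ matches $K_{-d}(\R)$ shifted appropriately), and the class BDI case gives $Cl_{+1}$ up to Morita equivalence.

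The main obstacle is not conceptual but bookkeeping: one must consistently track the $\Z_2$-grading $\theta$ (as opposed to $c = \theta + \phi$) through the identification with Clifford algebras, since Theorem \ref{th: main theorem} involves the shift by two implicit in passing from $c$-grading to $\theta$-grading (see Example \ref{ex:class D}). Getting the signs right for the lifts $\hat{C}^2, \hat{T}^2, (\hat{C}\hat{T})^2 \in \{\pm 1\}$ across the four cohomology classes in $H^2_m(\mathcal{C}, U(1)_\phi)$ determines the Clifford signature and hence $s$; once this is done consistently in all ten cases, the table of $K$-theory groups follows immediately from Bott periodicity.
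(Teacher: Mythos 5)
Your proposal is correct and matches the paper's (largely implicit) argument: the paper derives this corollary directly from the preceding proposition identifying the ten unit charge $CT$-group algebras with the ten Morita classes of simple real $\Z_2$-graded algebras, combined with $\SPT_d^{A} \cong K_{2-d}(A^{op})$, Morita invariance, and the Bott song, with the case-by-case Clifford signature bookkeeping you describe being exactly the content of that proposition's brute-force proof.
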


The connection with the ten fermionic groups associated to the ten real $\Z_2$-graded division algebras $D$ in Section \ref{sec: ten fermionic groups} is similar, but we have to take care of unit charge and spin $1/2$ restrictions so that $C^*_f(S(D)) \cong D$.
More precisely, we require unit charge if $D_{ev} = \C$ (Definition \ref{def:unit charge SPTs}) and spin 1/2 if $D_{ev} = \mathbb{H}$ (Definition \ref{def:spin 1/2 SPTs}).
After this, the $K$-theory of the twisted group $C^*$-algebra of $S(D)$ in a fixed spatial dimension will run through all ten cases in the periodic table, see Table \ref{tab:tenfold}.

We finally comment on the differences between the approach using $\Z_2$-graded division algebras and the Altland--Zirnbauer approach, see f.e. \cite[Table 1]{alldridgemaxzirnbauer}.
The only differences are in the classes AI and BDI; the former for example corresponds to the division algebra $Cl_{+2}$ which physically means charge and a time-reversal symmetry with square $1$.
On the other hand, Zirnbauer models class AI with charge, a time-reversal symmetry with square $-1$, and an $SU(2)$-symmetry.
See the following remark for further details.

\begin{remark}
\label{rem:soc}
We comment on spin-orbit coupling in a system protected by a fermionic group $G$.
In the original Altland--Zirnbauer article \cite{altlandzirnbauer}, it is argued that spin-orbit coupling breaks the $SU(2)$ internal spin degrees of freedom of the electron.
We interpret this in the setting of this article as using the symmetry group
\[
\begin{cases}
    G &\text{spin-orbit coupling present}
    \\
    \frac{G \times SU(2)}{\Z_2} &\text{spin-orbit coupling not present,}
\end{cases}
\]
where in the second line we quotiented by the diagonal $\Z_2$-subgroup generated by $((-1)^F, -1)$.
For example, let us assume a system of charged particles with a time-reversal symmetry $T^2 = (-1)^F$ (class AII).
We obtain the symmetry algebra
\[
\frac{\R[iQ, T]}{((iQ)^2 = -1, T^2 = -1, iQT = -TiQ)} \cong Cl_{-2}
\]
when spin-orbit coupling is present.
This is the same class AII description as we got before.
When there is no spin-orbit coupling, we impose both the spin 1/2 and unit charge conditions and obtain 
\begin{align}
\label{eq:ZirnbauerclassAI}
    \mathbb{H} \otimes_\R \frac{\R[iQ, T]}{((iQ)^2 = -1, T^2 = -1, iQT = -TiQ)} \cong \mathbb{H} \otimes Cl_{-2},
\end{align}
compare Remark \ref{rem:class AI}.
Some sources argue that introducing spin-orbit interactions effectively changes the square of time-reversal symmetry from $T^2 = 1$ to $T^2 = (-1)^F$, i.e. the symmetry algebra is 
\[
\frac{\R[iQ, T]}{((iQ)^2 = -1, T^2 = 1, iQT = -TiQ)} \cong Cl_{+2}.
\]
Note that this algebra is not isomorphic to the algebra displayed in \eqref{eq:ZirnbauerclassAI} because it has a different dimension.
However, by Bott periodicity of the Clifford algebras and $\mathbb{H}$ being Morita equivalent to $Cl_{-4}$, these two algebras are Morita equivalent and so have the same $K$-theory.
We thus conclude that the heuristic `spin-orbit coupling means replacing $T^2 = (-1)^F$ by $T^2 = 1$' is true up to Morita equivalence, and therefore might be a phenomenon that only holds in the weakly interacting setting~\cite{stehouwer2022interacting}.
\end{remark}

\begin{table}[h!] 
\centering
{
\renewcommand{\arraystretch}{1.5}
\resizebox{\columnwidth}{!}{
\begin{tabular}{l|l|l||l|l|l|l|l|l|l|l}
Symmetry class & A & AIII & D & BDI & AI & CI &  C & CII & AII & DIII 
\\
\hline
$\Z_2$-graded division algebra $D$ & $\C$ & $\C l_1$ & $\R$ & $Cl_{+1}$ & $Cl_{+2}$& $Cl_{+3}$ & $\mathbb{H}$ & $Cl_{-3}$& $Cl_{-2}$& $Cl_{-1}$
\\
\hline 
Internal symmetry $S(D)$ & $Spin_1^c$ & $Pin_1^c$ & $Spin_1$ & $Pin_1^+$  & $Pin_2^+$ & $Pin_3^+$ & $Spin_3$ & $Pin_3^-$ & $Pin_2^-$ & $Pin_1^-$ 
\\
\hline
$K$-theory group & $K^{d}$ & $K^{d+1}$ & $KO^{d-2}$ & $KO^{d-1}$ & $KO^{d}$ & $KO^{d+1}$ & $KO^{d+2}$ & $KO^{d+3}$ & $KO^{d+4}$ & $KO^{d+5}$ 
\\
\hline
$CT$-group & $00$ & diagonal & $+0$ & $++$ & $0 +$ & $- + $ & $-0$ & $--$ & $0-$ & $+-$ 
\\
\hline
unit charge $CT$-group algebra & $\C$  & $\C l_1$ & $M_{1|1}(\R)$ & $M_{1|1}(Cl_{+1})$ & $Cl_{+2}$  & $Cl_{+3}$   & $\mathbb{H}$   & $Cl_{-3}$   & $Cl_{-2}$  & $M_{1|1}(Cl_{-1})$   
\end{tabular}
}
}
\caption{A translation between different versions of the tenfold way. 
The $K$-theory group classifies the free fermion SPT phases of spatial dimension $d$ in the corresponding symmetry class (ignoring weak invariants).
It is given by the group of SPT phases protected by the symmetry algebra $D$ (Definition \ref{def:SPT phase}). 
Alternatively, it is given by the unit charge $K$-theory of the $CT$-group or equivalently by $K$-theory of the group $S(D)$, where we have to require unit charge if $D_{ev} = \C$ (Definition \ref{def:unit charge SPTs}) and spin 1/2 if $D_{ev} = \mathbb{H}$ (Definition \ref{def:spin 1/2 SPTs}).
The internal symmetry group $S(D)$ is defined in Section \ref{sec: ten fermionic groups}.
The $CT$-group row denotes the squares of the symmetries $C$ and $T$ and a zero means that the symmetry is absent, also see Proposition \ref{prop:ten CT-groups}. 
}\label{tab:tenfold}
\end{table}

\section{Relationships with interacting SPT phases}
\label{sec: interacting}

Even though recent years have seen a surge of interest \cite{ogata,bourneogata, ogata2021valued, Ogata_2022, kapustin2020thermal, bachmann2018quantization}, a comprehensive mathematical classification of strongly interacting SPT phases directly on the lattice remains incomplete. 
In spatial dimension one, SPT phases are well-understood in terms of matrix product states~\cite{bultinck2017fermionic, williamson2016fermionic, shiozaki2017matrix, kapustin2018spin, turzillo2019fermionic}.
On the other hand, bordism groups have been explored as an alternative approach \cite{kapustinfermionicSPT, freedhopkins} to amend proposals in terms of group (super-)cohomology.
In \cite[Section 9.2]{freedhopkins} two relationships between free fermions in continuum QFT and invertible topological quantum field theories (TQFTs)\footnote{Unitary invertible topological field theories are closely related to bordism groups because their partition functions are bordism invariants.} are sketched:
\begin{enumerate}
    \item given a $d$-dimensional massless free fermion living in some spin representation, its anomaly can be described by a $(d+1)$-dimensional invertible TQFT;
    \item a mass term for such a QFT trivializes the anomaly and makes the theory gapped. 
    Its low energy effective field theory is a $d$-dimensional invertible TQFT.
\end{enumerate}
Even though physically distinct, the map from $d$-dimensional massless free fermions to $(d+1)$-dimensional invertible TQFTs and the map from $d$-dimensional massive free fermions to $d$-dimensional TQFTs have strikingly similar mathematical definitions.
For the tenfold way symmetries, \cite{freedhopkins} give a description in the language of stable homotopy theory using the Atiyah-Bott-Shapiro orientation
\[
MSpin \to KO
\]
or analytically using Atiyah-Patodi-Singer eta invariants.
In joint work, we are developing a generalization of this description to arbitrary symmetry groups using twisted equivariant generalizations of the Atiyah-Bott-Shapiro orientation~\cite{hebestreit2020twisted, berwick2024power}, see \cite{bottspiral} for some special cases.
Alternatively, we expect it to be possible to define interacting phases protected by a symmetry $C^*$-algebra $A$ using bundles of $\Z_2$-graded bimodules over our bordisms in the spirit of \cite[Definition 2.3.1]{whatisanellipticobject}.

\begin{remark}  
The close relationship between our formulation of $K$-theory using polarizations and the formulation using mass terms~\cite{masstermgomiyamashita} allows one to write down a canonical free fermion Lagrangian from a $K$-theory class induced by a BdG Hamiltonian.
It is expected that the low-energy effective theory of this continuum quantum field theory gives the desired invertible TQFT.
\end{remark}

The goal of the rest of this section is to connect the second construction to the setting in this paper, in which we consider massive fermions in a condensed matter set-up.
We will not attempt to make a connection with lattice-based approaches such as \cite{bourneogata}.

\subsection*{The invertible topological field theory of a \texorpdfstring{$0+1d$}{0+1d} massive fermion}

We sketch our intuitive understanding of the low energy effective field theory of a massive particle in zero spatial dimensions.
Our hope is that this will set the stage for how to tackle higher dimensions, which we will work out in $1$d in joint work \cite{camluuk}.

Let $V$ be the finite-dimensional complex Hilbert space we think of as the one particle space of a charged particle.
Suppose first for simplicity that a finite symmetry group $G$ acts by unitary and charge conserving symmetries.
Let $h \colon V \to V$ be a free, charge-conserving, $G$-symmetric, gapped, self-adjoint Hamiltonian.

We want to describe the low-energy limit of this state of affairs as a one-dimensional invertible functorial quantum field theory.
It is reasonable to assign the lowest energy state on the many body Fock space to a point $+$.
This is the complex line given by the Fermi sea
\begin{equation}
\label{chargedgroundstate}
\mathcal{H}(+) := \bigwedge^{top} V_{-},
\end{equation}
where $V_{-}$ is the space of valence electrons for $h$.
Since $G$ commutes with $h$, $\mathcal{H}(+)$ is a unitary $G$-representation.
More generally, we could have allowed $G$ to have time-reversing symmetries; when $G$ acts anti-unitarily on $V$, it will also act anti-unitarily on $\mathcal{H}(+)$.
We get an invertible TQFT using the fact that that unitary TQFTs in one spacetime dimension are classified by (anti-)unitary representations of the symmetry group on $\mathcal{H}(+)$, see \cite[Proposition 4.10]{muller2023reflection}.
Here we take the target category of the TQFT to be $\Z_2$-graded Hilbert spaces, where we grade $\mathcal{H}(+)$ using $(-1)^F$.
Explicitly, the closed interval bordism from the point to itself for which the canonical flat $G$-connection on the trivial $G$-bundle has holonomy $g$ gets assigned the action of $g$ on $\mathcal{H}(+)$.

Note that the above approach does not obviously generalize to particle-hole symmetries; such symmetries will not map $\mathcal{H}(+)$ to $\mathcal{H}(+)$.
It it also not clear how to generalize to BdG Hamiltonians.
We now argue how our approach to $K$-theory using pairs of polarizations solves this by fitting naturally into the picture of functorial field theory, also see \cite{ludewiglangrangian}.

Namely, we claim it is more natural to consider the construction of a Fock module associated to a single polarization $J$ on a Majorana space $M$ as a boundary theory of a two-dimensional bulk TQFT.
This bulk TQFT is an extended theory with target $\mathbf{sAlg}_\C$ the $2$-category of complex $\Z_2$-graded algebras and given by assigning the finite-dimensional complex $\Z_2$-graded algebra $Cl(W) \in \mathbf{sAlg}_\C$ to a point.
For any polarization to exist, we need $W$ to be even-dimensional, which means that $Cl(W)$ is Morita-trivializable.

\begin{remark}
If $W$ is odd-dimensional, the nontrivial $2$-dimensional bulk TQFT measures the anomaly of an odd number of Majorana fermions, also see the discussion at the end of Section \ref{sec: intermezzo}.
We plan to work out the details of this anomaly with arbitrary symmetry group in future work \cite{freefermlukas}.
In particular, the $H^*$-algebra structure on $Cl(W)$ makes the anomaly TQFT unitary in the extended sense.
\end{remark}

Suppose now $W$ is even-dimensional.
Since there are two isomorphism classes of self-Morita equivalences of the (trivially graded) algebra $\C$, there are two Morita trivializations of $Cl(W)$.
One therefore needs to pick one of the two trivializations of the bulk theory to make sense of the boundary theory as a nontwisted field theory.
This choice is equivalent to an equivalence class of polarizations of $W$, see \cite[Remark 2.2.6]{whatisanellipticobject}.

Such a choice of polarization $J$ defines a Fock module $N_J$ over $Cl(W)$.
Now consider two polarizations $J_1,J_2$, which we imagine to be flattenings of gapped BdG Hamiltonians.
The idea of \cite{camluuk} is to consider the composition between the two induced Fock modules, giving an invertible $(\C,\C)$-bimodule 
\begin{equation}
\label{eq:groundstate}
N_{J_1}^T \otimes_{C l (W)} N_{J_2}.
\end{equation}
Here we made a left $Cl(W)$-module $N$ into a right $Cl(W)$-module $N^T$ using the adjoint operation on the Clifford algebra.
We claim that \eqref{eq:groundstate} defines the one-dimensional state space of the low-energy effective invertible field theory corresponding to the pair of polarizations $J_1$ and $J_2$.
It can be shown that this recovers \ref{chargedgroundstate} in the special case where our Hamiltonian is charge conserving and we take one of the polarizations to be the charge symmetry $iQ$.

The above considerations generalize to the case where there is an internal fermionic group symmetry $G$ to make \eqref{eq:groundstate} into a $G$-representation.
We will present proofs and further details in future work \cite{camluuk}.

\begin{remark}
\label{rem:continuousG}
    If $G$ is not finite, the resulting low energy effective theory will not be topological, also see \cite[Section 5.4]{freedhopkins}.
    For example, take $G = U(1)_Q$ to be the class A symmetry and consider the irreducible real representation $R$ consisting of an oddly charged $Q = n$ particle and its antiparticle of charge $-n$.
In other words we have
\[
R(e^{iaQ}) = e^{ina}
\]
on $V$ and $e^{-ina}$ on $V^*$.
Let us assume the Hamiltonian is zero for simplicity, as the general case splits up as a direct sum into one-dimensional representations.
Then a charge $n$ particle gives a 1d field theory with background $U(1)$ gauge field which assigns to a circle with holonomy $a \in \R/\Z$ the number $e^{ina} \in U(1)$.

Note that this is not a standard Atiyah-Segal style TQFT, but a functorial field theory with genuine geometry as it depends nontrivially on a $U(1)$-connection.
However, it is a smooth topological field theory in the sense of Stolz-Teichner \cite{stolz2011supersymmetric} (also see \cite{gradypavlov}).
When forgetting all the geometric information, our functorial field theory does induce a symmetric monoidal functor of $(\infty, 1)$-categories $\mathbf{Bord}^{U(1)}_{1,0} \to \mathbf{Line}_\C^{cts}$, where the target symmetric monoidal $(\infty,1)$-category corresponds to the topological category of complex lines.
In other words, it is the higher group $B^2 \Z$.
This process of taking the underlying continuous invertible field theory of a smooth invertible field theory corresponds to taking the underlying $\Z$-valued cohomology class of a class in differential cohomology, see \cite{differentialcohomology, davighi2024differential, yamashita2023differential} for discussions of differential cohomology in this context.

It is known that in spacetime dimension $1$, such continuous invertible TQFTs in class A are classified by the group $\Z$.
We can thus conclude that without imposing a unit charge condition, the free-to-interacting map should be given by 
\[
K_0(C^*_f(U(1))) \cong \bigoplus_{\mathbb{N}} \Z \to \Z \quad (n_1, n_2, n_3 \dots) \mapsto n_1 + 2n_2 + 3n_3 + \dots
\]
\end{remark}

\appendix

\section{Real \texorpdfstring{$\Z/2$}{Z2}-graded \texorpdfstring{$C^*$}{C*}-algebras}
\label{sec: C*}

We review the basics on $\Z_2$-graded real $C^*$-algebras, see \cite[Section 14]{blackadar} for a textbook account.

An algebra $A$ over $\R$ is called a $*$-algebra if it comes equipped with an anti-involution $a \mapsto a^*$, i.e. satisfying $a^{**} = a$ and $(ab)^* = b^* a^*$.
The main example is the algebra of continuous linear endomorphisms $B(\mathcal{H})$ of a real Hilbert space $\mathcal{H}$ in which $*$ is defined to be the adjoint.
If $A,B$ are $*$-algebras a $*$-homomorphism from $A$ to $B$ is an algebra homomorphism preserving $*$.
Real $C^*$-algebras $A$ are equivalent to complex $C^*$-algebras $B$ equipped with a \emph{real structure} (a complex-anti-linear involutive $*$-homomorphism $r \colon B \to B$).
The correspondence is implemented using $B := A \otimes_\R \C$ and taking the fixed points of $r$ in the other direction.

In case $A = A_{ev} \oplus A_{odd}$ is additionally a $\Z_2$-graded algebra, we call $A$ a $\Z_2$-graded $*$-algebra if the involution respects the grading.
We call $A$ a $\Z_2$-graded $C^*$-algebra if it admits an injective $*$-homomorphism into $B(\mathcal{H})$ for some real Hilbert space $\mathcal{H}$.
In that case $A$ admits a canonical norm for which it becomes a real Banach algebra.

If $A$ is a $\Z_2$-graded real $C^*$-algebra, then its complexification $A \otimes_\R \C$ is a $\Z_2$-graded complex $C^*$-algebra under $(a \otimes z)^* = a^* \otimes \ol{z}$.
It comes equipped with a canonical real structure $\ol{(.)}\colon A \otimes_\R \C \to A \otimes_\R \C$, i.e. a complex anti-linear $*$-homomorphism which is an involution.
Conversely, given a $\Z_2$-graded complex $C^*$-algebra $B$ with real structure, its real elements (those satisfying $\ol{b} = b$) form a real $\Z_2$-graded $C^*$-algebra.

Given two $\Z_2$-graded algebras $A_1,A_2$, the graded tensor product $A_1 \otimes A_2$ becomes a $\Z_2$-graded algebra with the multiplication given by the Koszul sign rule
\[
(a_1 \otimes a_2) (b_1 \otimes b_2) = (-1)^{|a_2||b_1|}  a_1 b_1 \otimes a_2 b_2.
\]
If $A_1$ and $A_2$ are $*$-algebras, then $A_1 \otimes A_2$ becomes a $*$-algebra by
\begin{equation}
\label{eq:starontensor}
(a_1 \otimes a_2)^* := (-1)^{|a_1||a_2|} a_1^* \otimes a_2^*.
\end{equation}
If $A_1$ and $A_2$ are $C^*$-algebras, we take $A_1 \otimes A_2$ to mean the minimal tensor product of $A_1$ and $A_2$, see \cite[Section 14.4]{blackadar} for details.
If one of $A_1$ and $A_2$ is nuclear, then all norm completions on the tensor product agree, see \cite{bruckler1999tensor} for a review. Since group $C^*$-algebras over locally compact amenable groups are nuclear, this is a situation we will always be in.

To do analysis and topology on modules, we need more structure on modules over $C^*$-algebras:  

\begin{definition}
    A \emph{Hilbert $A$-module} is a right $A$-module $M$ together with an $A$-valued pairing
    \[
    \langle .,. \rangle\colon M \times M \to A
    \]
    such that
    \begin{align*}
    \langle a m_1, m_2  \rangle &=  a \langle m_1, m_2 \rangle 
    \\
    \langle m_1, m_2 \rangle &=  \langle m_2, m_1 \rangle^*
    \\
    \langle m, m \rangle& = 0 \implies m = 0
    \\
    \langle m, m \rangle& \geq 0
    \end{align*}
    where in the last row we mean that $\langle m, m \rangle$ is a positive element of the $C^*$-algebra $A$.
    It can be shown that under the above conditions
    \[
    \| x \| := \| \langle x, x \rangle \|^{\frac{1}{2}}
    \]
    defines a norm and we require $M$ to be complete in this norm.
    If $A$ is a $\Z_2$-graded $C^*$-algebra and $M$ a $\Z_2$-graded module, we say the Hilbert pairing is \emph{graded} if
    \[
    |\langle m_1, m_2 \rangle| = |m_1| + |m_2|.
    \]
\end{definition}

If $A$ is not purely even, the odd and even parts of $M$ are in general not orthogonal, i.e. there may exist $m_1, m_2 \in M$ such that $|m_1| \neq |m_2|$ but $\langle m_1, m_2 \rangle \neq 0$.

\begin{remark}
It is more common to work with right $A$-modules in the $C^*$-algebra literature and inner products are assumed to be $A$-linear in the right entry.
We choose to work with left modules because humans think of operators as acting on the left.
\end{remark}

\begin{remark}
\label{rem:repsvsmods}
A representation of a $C^*$-algebra is defined to be a $*$-homomorphism $A \to B(\mathcal{H})$ for some Hilbert space $\mathcal{H}$.
The difference between Hilbert modules and representations is therefore only that representations have their inner product valued in the ground field $\mathbb{K} = \R,\C$.

Let $\tau\colon A \to \mathbb{K}$ be a faithful tracial state, i.e. a linear map such that $\tau(ab) = \tau(ba), \tau(a^* a) \geq 0$ and $\tau(a^* a) =0$ implies $a = 0$.
Then we can build $\mathbb{K}$-valued inner products from $A$-valued inner products.
In particular, for a Hilbert $A$-module $M$, we can define a corresponding representation of $A$ as the completion of the corresponding positive definite inner product.
    
    For example, a locally compact amenable group always admits a trace~\cite{ng2015strictly}.
    As another example, let $N$ be a compact oriented manifold and let $A = C(N; \mathbb{K})$ be continuous functions on $N$.
    A choice of volume form defines a faithful tracial state $\int_N\colon A \to \mathbb{K}$.
    By the Serre-Swan theorem, a finitely generated projective module over $A$ corresponds to a vector bundle $E \to M$ over $\mathbb{K}$.
    The structure of a Hilbert module on the $A$-module $\Gamma(E)$ of continuous sections corresponds to a Hermitian metric on $E$.
    The corresponding representation is the Hilbert space of $L^2$-functions on $E$.
\end{remark}

\begin{remark}
\label{rem:gradedvsClvalued}
    The author is not aware how the correspondence between $\Z_2$-graded $A$-modules and ungraded $Cl_{+1} \otimes A$-modules extends to Hilbert module structures. 
    Namely, if $(M, \langle .,. \rangle)$ is an ungraded Hilbert $A \otimes Cl_{+1}$-module and $e \in Cl_{+1}$ the preferred generator, then its eigenspace grading $M = M_0 \oplus M_1$ does not obviously make $M$ into a $\Z_2$-graded Hilbert $A$-module.
    
    Taking ideas from Remark \ref{rem:repsvsmods}, we take a linear map $\tau\colon A \otimes Cl_{+1} \to A$ and define $(m_1, m_2) := \tau(\langle m_1, m_2 \rangle)$.
    If $M$ is not purely even, this will not define a $\Z_2$-graded $A$-valued inner product.
    Indeed, we will never be able to achieve the desired $(em_1,em_2) = (-1)^{(m_1,m_2)} (m_1,m_2)$ since we have that
    \[
    \langle e m_1, e m_2 \rangle = \langle m_1, e^* e m_2 \rangle = \langle m_1, m_2 \rangle.
    \]
\end{remark}

\begin{remark}
    The equation $(ab)^* = b^* a^*$ violates the Koszul sign rule for $a,b$ odd.
    The fact that this sign is `wrong' from a categorical perspective follows from the assumption that we need to give the monoidal category of $\Z_2$-graded vector spaces an interesting symmetric braiding.
    However, defining $\Z_2$-graded real $C^*$-algebras to satisfy 
    \begin{equation}
    \label{eq:correctsign}
        (ab)^\dagger =(-1)^{|a||b|} b^\dagger a^\dagger 
    \end{equation} leads to all sorts of complications. For example, Clifford algebras will not admit any $*$-algebra structure.

    On the other hand, for complex $\Z_2$-graded algebras $A$, there is a one-to-one correspondence between $*$-algebras and involutions $\dagger$ satisfying \eqref{eq:correctsign}.
    Explicitly, one choice of correspondence is given by 
    \[
    a^\dagger :=
    \begin{cases}
        a^* & a \text{ even,}
        \\
        i a^* & a \text{ odd.}
    \end{cases}
    \]
    However, the positivity conditions on a graded complex $C^*$-algebra get convoluted. 
    For example, $a^\dagger a$ has its spectrum lying in $i\R_{\geq 0}$ if $a$ is odd.
    Working with $\dagger$ can be helpful to figure out the signs in equations such as \eqref{eq:starontensor} or the tensor product of $\Z_2$-graded Hilbert $A$-modules~\cite[14.4.4.]{blackadar}.

    If $A$ comes equipped with a real structure $\ol{(.)}\colon A \to A$, it will satisfy the Koszul-sign rule violating equation $\ol{a^\dagger} = (-1)^{|a|} \ol{a}^\dagger$.
    If we would try to define another complex antilinear operation by
    \[
    \tilde{a} :=
    \begin{cases}
        \ol{a} & a \text{ even,}
        \\
        i \ol{a} & a \text{ odd.}
    \end{cases}
    \]
    then we still have $\tilde{\tilde{a}} = a$, but $\widetilde{a^\dagger}$.
    However, we also have $\widetilde{ab} = (-1)^{|a||b|} \tilde{a} \tilde{b}$, so a concession must me made somehow.
\end{remark}

The multiplier algebra of $A$ is the largest unital $C^*$-algebra which has $A$ as its essential ideal~\cite[Proposition 2.2.14]{weggeolsen}. 
A useful equivalent definition of $M(A)$ uses a suitable embedding as bounded operators on a Hilbert space \cite[Definition 2.2.2]{weggeolsen}:

\begin{definition}
\label{def:multiplieralg}
    A $*$-isometric embedding of $A$ as bounded operators in a Hilbert space $\mathcal{H}$ is called \emph{nondegenerate} if for every $v \in \mathcal{H}$ there exists an $a \in A$ such that $av \neq 0$.
    In that case the \emph{multiplier algebra} $M(A)$ of $A$ is the collection of operators $T$ in $B(\mathcal{H})$ such that $TA \subseteq A$ and $A T \subseteq A$. 
\end{definition}

Note that if $A \subseteq B(\mathcal{H})$ we can always ensure it is nondegenerate by throwing out the subspace of $v \in \mathcal{H}$ such that $av = 0$ for all $a \in A$, so the multiplier algebra makes sense by the Gelfand-Naimark theorem.
The multiplier algebra is a unital $C^*$-algebra $M(A)$ together with a canonical subalgebra $A \subseteq M(A)$.
 This $C^*$-algebra is independent of the choice of $\mathcal{H}$ up to $*$-isomorphism.
Let $UM(A)$ denote the unitary elements in the multiplier algebra.

Note that taking the multiplier algebra preserves finite direct sums.
Also note that if $A$ is $\Z_2$-graded, $M(A)$ inherits a $\Z_2$-grading.

Sometimes it is necessary to put more conditions on  morphisms between non-unital $C^*$-algebras.
The following definition generalizes the notion of a nondegenerate representation.

\begin{definition}
A $*$-homomorphism is called \emph{nondegenerate} if ideal generated by the image is dense in the codomain. 
\end{definition}

Even though the multiplier algebra construction is not functorial for general $*$-homomorphisms, there is a functor from $C^*$-algebras with nondegenerate morphisms to $C^*$-algebras with unital morphisms.
Similarly, a nondegenerate homomorphism $A \to B$ induces a unital homomorphism $A_+ \to B_+$.

    \sloppy Gelfand duality gives another motivation for considering nondegenerate $*$-homomorphisms\footnote{There are also formulations of non-unital Gelfand duality which don't assume nondegeneracy~\cite{henry2014constructive}.}: 
    the category of commutative $C^*$-algebras and non-degenerate $*$-homomorphisms is equivalent to the category of locally compact Hausdorff spaces and proper continuous maps.
Separately, the category of commutative $C^*$-algebras $A,B$ with non-degenerate $*$-homomorphisms $A \to M(B)$ is equivalent to the category of locally compact Hausdorff spaces and all continuous maps~\cite[Section 1]{MR2681756}.
Since $M(C^0(X)) \cong C_b(X) \cong C(\beta X)$ is the $C^*$-algebra of continuous functions on the Stone-\v Cech compactification of the locally compact Hausdorff space $X$~\cite[Proposition 2.55]{raeburn1998morita}, and $\beta$ is left adjoint to the inclusion of compact Hausdorff spaces into locally compact Hausdorff space, this begs the question of whether $M$ is right adjoint to the functor from unital $C^*$-algebras to $C^*$-algebras with nondegenerate homomorphisms.
The reader is referred to \cite{MR4566000} for an excellent exposition of various Gelfand dualities.

We will often need the notion of nondegenerate morphism $A \to M(B)$ for twisted group $C^*$-algebras, see Appendix \ref{sec:twistedsemidirect}.
A homomorphism $A \to B$ is nondegenerate if and only if for every approximate unit of $A$, its image is an approximate unit of $B$. 

\section{Twisted semidirect products}
\label{sec:twistedsemidirect}

For defining twisted semidirect products, we first need to define projective (or twisted) actions of groups on $C^*$-algebras.
We provide a definition reminiscent of a (unital) nonabelian $2$-cocycle on a group $G$ valued in a $C^*$-algebra $A$:

\begin{definition}
    Let $G$ be a locally compact group and $A$ be a $C^*$-algebra.
    A \emph{twisted $C^*$-dynamical system} of $G$ on $A$ is a pair $(u,\alpha)$ of a twist $u$ and a $u$-twisted action of $G$ on $A$ in the following sense.
    Let $\alpha\colon G \to \Aut A$ be a map into the $*$-automorphisms of $A$ such that the map $G \to A$ given by $\alpha_\bullet (a)$ is Borel-measurable for all $a \in A$.
    Let $u\colon G \times G \to UM(A)$ be a Borel-measurable map into the unitary elements of the multiplier algebra such that
    \begin{align*}
    \alpha_{g_1} \alpha_{g_2} &= u(g_1,g_2) \alpha_{g_1 g_2} u(g_1, g_2)^{-1} 
    \\
    \alpha_{g_1} (u(g_2,g_3)) u(g_1, g_2 g_3) &= u(g_1,g_2) u(g_1 g_2, g_3)
    \\
    \alpha_1 = \id_A \quad u(g,1) &= u(1,g) = 1
    \end{align*}
    for all $g_1, g_2, g_3 \in G$.
    If $C$ is another $C^*$-algebra, a \emph{covariant homomorphism} of $(G,A,\alpha,u)$ into $M(C)$ consists of a nondegenerate homomorphism $\pi\colon A \to M(C)$ and a Borel-measurable map $v\colon G \to UM(A)$ such that
    \begin{align*}
        v_{g_1} v_{g_2} = \pi u (g_1, g_2) v_{g_1 g_2} \quad \pi \alpha_g(a) = v_g \pi(a) v_g^{-1}
    \end{align*}
\end{definition}

\begin{remark}
    In the above we chose to work with Borel-measurable instead of continuous cocycles on $G$.
    This is for technical reasons related to the fact that group cohomology of topological groups is poorly behaved.
    There are possibly better behaved alternatives to this choice using an appropriate abstract notion of cohomology, see Remark \ref{rem:segalcoh} for further discussion.
    However, we prefer to work with the above definition in order to use results of \cite{packerraeburnI} and \cite{packerraeburnII}.
\end{remark}

The \emph{twisted semidirect $C^*$-algebra} $A \rtimes_{\alpha,u} G$ is a $C^*$-algebra satisfying a universal property that maps out of $M(A \rtimes_{\alpha,u} G)$ are exactly covariant homomorphisms.
More precisely, there is a canonical covariant homomorphism of $(G,A,\alpha,u)$ into $M(A \rtimes_{\alpha,u} G)$ written $i_G\colon G \to UM(A \rtimes_{\alpha,u} G)$ and $i_A\colon A \to M(A \rtimes_{\alpha,u} G)$ such that the following theorem holds.
We will not go into the explicit construction of the twisted semidirect product, see \cite[Chapter II]{renault2006groupoid} for a concrete textbook account.
Instead we prefer to only use the universal property with the idea that our theorems generalize to setups of a similar categorical flavour.

\begin{theorem} \cite[Theorem 1.2(a)]{packerraeburnII}
\label{th:univprop}
    Let $(\pi,v)$ be a covariant homomorphism of $(G,A,\alpha,u)$ into $M(C)$.
    \begin{enumerate}
        \item There exists a unique nondegenerate homomorphism $A \rtimes_{\alpha,u} G \to M(C)$ with the property that it induces $\pi$ when restricting to $A$ under $i_A$ and induces $v$ when restricting to $G$ under $i_G$.
        \item If $D$ is a $C^*$-algebra equipped with a covariant homomorphism $(j_G,j_A)$ with the property that for every covariant homomorphism $(\pi,u)$ into another $C^*$-algebra $M(C)$ there exists a unique nondegenerate homomorphism $D \to M(C)$ which restricts to $\pi$ under $j_A$ and to $v$ under $j_G$, then there is an isomorphism $D \cong A \rtimes_{\alpha,u} G $ intertwining $i_G$ with $j_G$ and $i_A$ with $j_A$.
    \end{enumerate}
\end{theorem}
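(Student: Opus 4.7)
The plan is to prove both parts by the standard construction of the twisted crossed product as a universal $C^*$-algebra on a $*$-algebra of integrable $A$-valued functions on $G$. First I would recall that $A \rtimes_{\alpha,u} G$ is the enveloping $C^*$-algebra of a completion of $L^1(G,A)$ equipped with twisted convolution and involution
\[
(f_1 * f_2)(s) = \int_G f_1(t)\, \alpha_t(f_2(t^{-1}s))\, u(t, t^{-1}s)\, dt, \qquad f^*(s) = \Delta(s)^{-1} u(s, s^{-1})^* \alpha_s(f(s^{-1})^*),
\]
where the two-cocycle and action identities on $(u,\alpha)$ are exactly those needed to make this multiplication associative and the involution an anti-homomorphism. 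The inclusions $i_A$ and $i_G$ are realized inside $M(A \rtimes_{\alpha,u} G)$ by the natural left multiplier actions $i_A(a) f(s) = a f(s)$ and $i_G(g) f(s) = \alpha_g(f(g^{-1}s))\, u(g, g^{-1}s)$, and $(i_A, i_G)$ is then tautologically a covariant homomorphism.

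For part 1, given a covariant pair $(\pi, v)$ into $M(C)$, I would define the integrated form
\[
(\pi \times v)(f) := \int_G \pi(f(s))\, v_s \, ds
\]
on $L^1(G,A)$ and verify directly from the relations $v_g v_h = \pi(u(g,h))\, v_{gh}$ and $v_g \pi(a) v_g^{-1} = \pi(\alpha_g(a))$ that $\pi \times v$ is a $*$-homomorphism into $M(C)$ of norm at most $\|f\|_1$. Nondegeneracy follows from nondegeneracy of $\pi$ together with the existence of an approximate unit in $L^1(G,A)$ of the form $\phi_\lambda \otimes e_\mu$, with $e_\mu$ an approximate unit for $A$ and $\phi_\lambda$ a bump family shrinking to $1 \in G$. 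The extension to a nondegenerate homomorphism $A \rtimes_{\alpha,u} G \to M(C)$ then exists by the universal property of the enveloping $C^*$-algebra. Uniqueness follows because any nondegenerate extension restricting to $\pi$ and $v$ agrees on elements of the form $i_A(a) i_G(g)$, and such products span a set whose closed linear hull strictly generates $A \rtimes_{\alpha,u} G$.

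Part 2 is then a formal consequence. Applying part 1 to the pair $(j_A, j_G)$ yields a nondegenerate homomorphism $\Phi\colon A \rtimes_{\alpha,u} G \to M(D)$, while the postulated universal property of $D$ applied to $(i_A, i_G)$ yields a nondegenerate $\Psi\colon D \to M(A \rtimes_{\alpha,u} G)$. Because each pair of canonical inclusions actually lands in the underlying $C^*$-algebra rather than only its multiplier algebra, one checks that $\Phi(A \rtimes_{\alpha,u} G) \subseteq D$ and $\Psi(D) \subseteq A \rtimes_{\alpha,u} G$, so the compositions $\Psi\Phi$ and $\Phi\Psi$ are endomorphisms matching $(i_A, i_G)$ and $(j_A, j_G)$ respectively; the uniqueness clauses of the two universal properties then force both compositions to be identities.

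The main obstacle is the analytic step of verifying that $\pi \times v$ is a well-defined $*$-homomorphism when $u$ and $\alpha$ are only Borel measurable rather than continuous, so that Fubini-style manipulations with multiplier-valued integrands need to be justified and the norm estimates combined with the approximate-unit argument require care. This is precisely the technical content that Packer--Raeburn establish in \cite{packerraeburnII}, and I would invoke their construction rather than redo it from scratch; the remainder of the proof as sketched above is then a clean application of universal-property bookkeeping.
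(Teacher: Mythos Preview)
The paper does not prove this theorem at all: it is stated with the citation \cite[Theorem 1.2(a)]{packerraeburnII} and used as a black box. So there is no in-paper argument to compare against. Your sketch is the standard integrated-form construction and is essentially how Packer--Raeburn prove it, and you correctly flag that the Borel-measurable (as opposed to continuous) setting is where the analytic work lies and defer to their treatment for that; this matches exactly what the paper does by citing them.
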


\[
\begin{tikzcd}[column sep = 34]
    A \rtimes G \ar[r,"{\forall (\pi,u)}"] \ar[d,"{(j_G,j_A)}"] & M(C) \\
    M(D) & D \ar[l, hookrightarrow] \ar[u,"\exists !", dashed]
\end{tikzcd} 
\implies (j_G,j_A) \text{ induces } A \rtimes G \cong D
\]

\begin{example}
Let $G$ be a locally compact group, take $A = \R$, $\alpha$ to be identically equal to $\id_A$ and $u$ to be identically equal to $1 \in UM(A)$. 
    The \emph{group $C^*$-algebra} is defined to be $C^*(G) := \R \rtimes_{\alpha, u} G$.
\end{example}

We need a slight generalization of the above to the $\Z_2$-graded setting.
Let $\theta\colon G \to \Z_2$ be a group homomorphism.
We define a $\Z_2$-grading on $A \rtimes_{\alpha,u} G$ by giving the involution which is $-1$ on the odd part.
Applying the universal property of the semidirect product, the grading automorphism of $A \rtimes_{\alpha,u} G$ is defined uniquely by the map $G \to UM(A \rtimes_{\alpha,u} G)$ given by $[g] \mapsto (-1)^{\theta(g)} j_G(g)$ where $j_G\colon G \to UM(A \rtimes_{\alpha,u} G)$ is the canonical inclusion inducing the identity on $A \rtimes_{\alpha,u} G$.
This is well-defined because $\theta((-1)^F) = 1$ and the automorphism squares to one by the universal property of the identity on $\R \rtimes_\nu G_b$.
Since we want $A$ to be even, the map $A \to UM(A \rtimes_{\alpha,u} G)$ is just given by $j_A$.
Because $\theta(g_1 g_2) = \theta(g_1) + \theta(g_2)$ this is a covariant homomorphism of $(G,A,\alpha,u)$ into $A \rtimes_{\alpha,u} G$.
The square of this operator is the identity by the universal property of the twisted semidirect $C^*$-algebra.

\begin{definition}
     Let $\theta\colon G \to \Z_2$ be a Borel group homomorphism, $(G,A,\alpha,u)$ a $C^*$-dynamical system, $C$ a $\Z_2$-graded $C^*$-algebra and $(\pi,v)$ a covariant homomorphism of $(G,A,\alpha,u)$ into $M(C)$.
     Then $(\pi,v)$ is called \emph{graded} if $\pi$ lands in the even part of $C$ and $v(g)$ is even or odd depending on $\theta(g)$.
\end{definition}

The analogous universal property in the $\Z_2$-graded case if we assume covariant homomorphisms are graded:

\begin{corollary}
    Let $\theta\colon G \to \Z_2$ be a continuous group homomorphism, $(G,A,\alpha,u)$ a $C^*$-dynamical system, $C$ a $\Z_2$-graded $C^*$-algebra and $(\pi,v)$ a graded covariant homomorphism of $(G,A,\alpha,u)$ into $M(C)$.
    Then there exists a unique even nondegenerate homomorphism $A \rtimes_{\alpha,u} G \to M(C)$ with the property that it induces $\pi$ when restricting to $A$ under $i_A$ and induces $v$ when restricting to $G$ under $i_G$.
\end{corollary}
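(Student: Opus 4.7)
The plan is to derive this as a direct consequence of the ungraded universal property in Theorem \ref{th:univprop}, by observing that gradedness of a homomorphism is a uniqueness statement in disguise. First I would apply Theorem \ref{th:univprop} to $(\pi, v)$ (forgetting gradings), obtaining a unique nondegenerate $*$-homomorphism $\Phi\colon A \rtimes_{\alpha,u} G \to M(C)$ which restricts to $\pi$ under $i_A$ and to $v$ under $i_G$. It remains only to show that this $\Phi$ is automatically even, and then to note that uniqueness among \emph{even} nondegenerate homomorphisms restricting correctly is immediate from the ungraded uniqueness.

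To prove evenness, let $\sigma$ denote the grading automorphism of $A \rtimes_{\alpha,u} G$ (constructed in the paragraphs just before the corollary via the universal property applied to the covariant homomorphism $i_A$ and $g \mapsto (-1)^{\theta(g)} i_G(g)$), and let $\sigma_C$ denote the grading automorphism of $M(C)$. Both $\Phi \circ \sigma$ and $\sigma_C \circ \Phi$ are nondegenerate $*$-homomorphisms $A \rtimes_{\alpha,u} G \to M(C)$. Restricting along $i_A$, the first equals $\Phi \circ i_A = \pi$ because $A$ is even inside the semidirect product and hence fixed by $\sigma$, while the second equals $\sigma_C \circ \pi = \pi$ because $\pi$ lands in the even part of $M(C)$ by assumption. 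Restricting along $i_G$, both maps send $g$ to $(-1)^{\theta(g)} v(g)$: the first by the defining formula for $\sigma$, the second because $v(g)$ is homogeneous of degree $\theta(g)$ and $\sigma_C$ acts as $(-1)^{\theta(g)}$ on degree $\theta(g)$ elements.

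Since both $\Phi \circ \sigma$ and $\sigma_C \circ \Phi$ arise from the same covariant homomorphism $(\pi, g \mapsto (-1)^{\theta(g)} v(g))$ via the universal property, the uniqueness clause of Theorem \ref{th:univprop} forces $\Phi \circ \sigma = \sigma_C \circ \Phi$, which is exactly the statement that $\Phi$ is even. The main (and only) obstacle would have been if $\sigma_C \circ \Phi$ failed to be nondegenerate or if the restriction identities had a subtle sign error, but both are routine to check given that $\sigma_C$ is a $*$-automorphism of $M(C)$ restricting to one on $C$ and that the degree of $v(g)$ equals $\theta(g)$ by hypothesis. No additional input beyond Theorem \ref{th:univprop} and the definition of the grading on $A \rtimes_{\alpha,u} G$ is required.
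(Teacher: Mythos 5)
Your proof is correct and follows essentially the same route as the paper: apply the ungraded universal property of Theorem \ref{th:univprop} to obtain the unique nondegenerate homomorphism, then observe that it is automatically even precisely because the covariant homomorphism is graded. The paper states this last equivalence without proof, while you supply the detail by intertwining the two grading automorphisms through the uniqueness clause; this is a valid and natural way to fill in that step.
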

\begin{proof}
    Using the universal property of the twisted semidirect product a nondegenerate homomorphism $A \rtimes_{\alpha,u} G \to M(C)$ is equivalent to a covariant homomorphism $(\pi,v)$ of $(G,A,\alpha,u)$ into $M(C)$.
    Note that this homomorphism preserves the grading if and only if the covariant homomorphism is graded.
\end{proof}

\subsection{Twisted Peter--Weyl theorem}
\label{sec:twistedpeterweyl}

To prove the Peter--Weyl style theorem for the fermionic group algebra, we need the following lemma for the special case where $N = \Z_2^F$ and the extension is central.

\begin{lemma}
\label{lemma: bijection}
Let $A$ be a unital $C^*$-algebra, let
\[
1 \to N \to G \xrightarrow{[.]} G/N \to 1
\]
be a group extension of topological groups, $\mu\colon N \to A$ a map such that $\mu(n_1 n_2) = \mu(n_1) \mu(n_2)$ for all $n_1, n_2 \in N$ and $s\colon G/N \to G$ a Borel measurable section such that $s([1]) = 1$ with corresponding measurable two-cocycle $\nu$.
The map $f(u)([h]) := u(s([h]))$ defines a bijection 
\begin{align*}
\{u\colon G \to A: & u(g_1 g_2)= u(g_1) u(g_2), u|_N = \mu\}
\\
&\downarrow f
\\
\{v\colon G/N \to A: & v(h_1 h_2)= \mu \nu(h_1,h_2) v(h_1) v(h_2), 
\\
&\mu\left(s(g) n s(g)^{-1}\right) v(h) = v(h) \mu(n), h_1,h_2,h \in G/N, n \in N\}
\end{align*}
preserving Borel measurability.
\end{lemma}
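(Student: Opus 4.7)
My proof plan is to construct an explicit inverse to $f$ and then check all compatibility conditions, essentially a bookkeeping computation in the theory of group extensions.

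Define a candidate inverse $f^{-1}$ as follows. Since $s$ is a set-theoretic section, every $g \in G$ has a unique decomposition $g = n \cdot s([g])$ with $n = g \cdot s([g])^{-1} \in N$. Given $v$ in the second set, let $u(g) := \mu\bigl(g \cdot s([g])^{-1}\bigr) \, v([g])$; this is manifestly Borel measurable in $g$ whenever $v$ is, since $[\cdot]$ and $s$ are Borel. Conversely, $f(u)([h]) = u(s([h]))$ is Borel measurable whenever $u$ is, as it is a composition of Borel maps.

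The first task is to verify $f$ lands in the claimed codomain. Given a homomorphism $u$ extending $\mu$, apply $u$ to the cocycle identity $s(h_1) s(h_2) = \nu(h_1,h_2) s(h_1 h_2)$ (viewed in $G$, where $\nu$ takes values in $N$): using $u|_N = \mu$ and multiplicativity, we obtain $v(h_1) v(h_2) = \mu(\nu(h_1,h_2)) v(h_1 h_2)$, which (up to rearrangement) is the stated twisted multiplicativity. Similarly, writing $s(h) \cdot n = \bigl(s(h) n s(h)^{-1}\bigr) \cdot s(h)$ and applying $u$ gives $v(h) \mu(n) = \mu\bigl(s(h) n s(h)^{-1}\bigr) v(h)$, which is the required intertwining relation.

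The main computation is that $f^{-1}$ lands in the first set, i.e. that $u(g) := \mu(g \, s([g])^{-1}) v([g])$ is multiplicative. Write $g_i = n_i s(h_i)$, so that
\[
g_1 g_2 = n_1 \cdot \bigl(s(h_1) n_2 s(h_1)^{-1}\bigr) \cdot \nu(h_1,h_2) \cdot s(h_1 h_2),
\]
which lies in $N \cdot s(h_1 h_2)$ with the $N$-component equal to the first three factors. Thus
\[
u(g_1 g_2) = \mu(n_1)\,\mu\bigl(s(h_1) n_2 s(h_1)^{-1}\bigr)\,\mu(\nu(h_1,h_2))\, v(h_1 h_2).
\]
On the other hand $u(g_1) u(g_2) = \mu(n_1)\, v(h_1)\, \mu(n_2)\, v(h_2)$. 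The intertwining condition on $v$ turns $v(h_1) \mu(n_2)$ into $\mu(s(h_1) n_2 s(h_1)^{-1}) v(h_1)$, and then the twisted multiplicativity of $v$ recombines $v(h_1) v(h_2)$ into $\mu(\nu(h_1,h_2)) v(h_1 h_2)$, matching the two expressions. The condition $u|_N = \mu$ then follows from $s([1]) = 1$ once one observes that $v([1]) = 1$, which comes from evaluating the twisted multiplicativity at $h_1 = h_2 = [1]$ (where $\nu([1],[1]) = 1$ because $s$ is normalized) together with $u(1) = 1$ to fix the idempotent.

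Finally, the two composites are essentially tautological: $(f \circ f^{-1})(v)([h]) = u(s([h])) = \mu(1) v([h]) = v([h])$, and $(f^{-1} \circ f)(u)(g) = \mu(g s([g])^{-1}) u(s([g])) = u(g s([g])^{-1}) u(s([g])) = u(g)$. The main subtlety is the single intertwining/cocycle calculation above; everything else is routine. The only place I expect to need real care is tracking the convention for $\nu$ (whether $s(h_1 h_2) = \nu s(h_1) s(h_2)$ or $s(h_1) s(h_2) = \nu s(h_1 h_2)$), which determines the sign/inverse placement in the twisted multiplicativity identity.
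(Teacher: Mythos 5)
Your proposal is correct and follows essentially the same route as the paper's proof: the explicit inverse $v\mapsto\bigl(g\mapsto\mu(g\,s([g])^{-1})\,v([g])\bigr)$, the decomposition $g_1g_2=n_1\bigl(s(h_1)n_2s(h_1)^{-1}\bigr)\nu(h_1,h_2)\,s(h_1h_2)$ for the multiplicativity check, and the same well-definedness computations all appear in the paper, which merely phrases the invertibility as separate injectivity and surjectivity arguments rather than as the two tautological composites. Your flagged concern about the $\nu$-convention and the idempotent $v([1])$ is fair, but the paper's own proof glosses over the same points.
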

\begin{proof}
The map is well-defined because firstly
\begin{align*}
    f(u)([h_1 h_2]) &= u(s([h_1 h_2])) = u(\nu([h_1],[h_2]) s([h_1]) s([h_2])) 
    \\
    &= \mu\nu([h_1],[h_2]) u(s([h_1])) u(s([h_2])) 
    \\
    &= \mu\nu([h_1],[h_2]) f(u)([h_1]) f(u)([h_2])
\end{align*}
and secondly 
\begin{align*}
    u(s([h])) \mu(n) &= u(s([h]) n) = u(s([h]) n s([h])^{-1} s([h]) ) 
    \\
    &= \mu(s([h]) n s([h])^{-1}) u(s([h])).
\end{align*}
For injectivity, suppose that $f(u_1)([h]) = f(u_2)([h])$ for all $h \in G$.
Pick $n \in N$ such that $n s([h]) = h$.
Then
\[
u_1(h) = u_1(n s([h])) = u_1(n) u_1(s([h])) = \mu(n) u_2(s([h])) = u_2(h)
\]
and so $f$ is injective.

For surjectivity, let $v\colon G/N \to A$ satisfy the two relevant equations.
Define a map $u\colon G \to A$ as follows.
Given $g \in G$ note that $n:= g s([g])^{-1} \in N$ is the unique element such that $g = n s([g])$.
We set $u(g) := \mu(n) v([g])$.
Note that $f(v) = u$ and $u|_N = \mu$, so we are done when we can show that $u(g_1 g_2) = u(g_1) u(g_2)$.
We get $n_1 = g_1 s([g_1])^{-1}, n_2 = g_2 s([g_2])^{-1} \in N$ and compute
\begin{align*}
    g_1 g_2 &= n_1 s([g_1]) n_2 s([g_2]) 
    \\
    &= n_1 s([g_1]) n_2 s ([g_1])^{-1} s ([g_1]) s ([g_2])
    \\
    &= n_1 s([g_1]) n_2 s ([g_1])^{-1} \nu([g_1],[g_2]) s([g_1 g_2]).
\end{align*}
Since $n_1 s([g_1]) n_2 s ([g_1])^{-1} \nu([g_1],[g_2]) \in N$ this results in 
\begin{align*}
    u(g_1 g_2) &= \mu(n_1 s([g_1]) n_2 s([g_1])^{-1} \nu([g_1],[g_2]) ) v([g_1 g_2])
    \\
    &= \mu(n_1) \mu( s([g_1]) n_2 s([g_1])^{-1}) \mu \nu([g_1],[g_2])  v([g_1 g_2])
    \\
    &=\mu(n_1) \mu( s([g_1]) n_2 s([g_1])^{-1}) v([g_1]) v([g_2]) 
    \\
    &= \mu(n_1)  v([g_1]) \mu(n_2) v([g_2]) 
    \\
    &= u(g_1) u(g_2).
\end{align*}
Because $s$ is Borel measurable, $f$ preserves Borel measurability.
\end{proof}

\begin{corollary}
Let $A$ be a unital $C^*$-algebra.
There is a bijection 
\begin{align*}
\{u\colon G \to A: u(g_1 g_2) &= u(g_1) u(g_2), u((-1)^F) = -1\}
\\
&\downarrow f
\\
\{v\colon G_b \to A: v(h_1 h_2) &= \nu(h_1,h_2) v(h_1) v(h_2)\}
\end{align*}
preserving Borel measurability.
\end{corollary}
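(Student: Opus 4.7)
The plan is to obtain this corollary as a direct specialization of Lemma \ref{lemma: bijection}. Take $N = \Z_2^F$ and consider the central extension
\[
1 \to \Z_2^F \to G \to G_b \to 1
\]
from Section \ref{sec:infinitegroups}. Define $\mu \colon \Z_2^F \to A$ by $\mu(1) = 1$ and $\mu((-1)^F) = -1$; this is a homomorphism into the unit group of $A$ since $(-1)^2 = 1$. The Borel section $s \colon G_b \to G$ and the resulting measurable cocycle $\nu$ are precisely those fixed before Definition \ref{def:twisted gp alg}.

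Next I would verify that the two conditions in the codomain of the bijection simplify as claimed. First, since the extension is central, $s(g) n s(g)^{-1} = n$ for every $n \in \Z_2^F$ and $g \in G_b$, so $\mu(s(g) n s(g)^{-1}) = \mu(n)$. Because $\mu(n) \in \{\pm 1\}$ lies in the center of $A$, the intertwining relation $\mu(s(g) n s(g)^{-1}) v(h) = v(h) \mu(n)$ is automatic. Second, under the embedding $\Z_2^F \hookrightarrow A$ via $\mu$, the expression $\mu\nu(h_1,h_2)$ is simply the scalar $\pm 1 \in A$, so the twisted homomorphism condition becomes exactly $v(h_1 h_2) = \nu(h_1,h_2) v(h_1) v(h_2)$ as in the statement. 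On the domain side, the requirement $u|_N = \mu$ unpacks to $u(1) = 1$ and $u((-1)^F) = -1$; the first is automatic from the multiplicativity of $u$, leaving only $u((-1)^F) = -1$.

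With these identifications in place, the bijection $f(u)([h]) = u(s([h]))$ of Lemma \ref{lemma: bijection} restricts to the stated bijection, and preservation of Borel measurability is inherited directly from the lemma. The only genuine content is the observation that centrality of $(-1)^F$ collapses the intertwining clause; the remaining bookkeeping is purely formal. I expect no substantial obstacle, since the lemma has already been proved in greater generality precisely to make this kind of specialization immediate.
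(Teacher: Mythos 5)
Your proposal is correct and matches the paper's own proof, which likewise specializes Lemma \ref{lemma: bijection} to $N = \Z_2^F$ with $\mu((-1)^F) = -1$ and uses centrality of $\Z_2^F$ in $G$ (together with $\pm 1$ being central in $A$) to discard the intertwining condition. Your write-up just spells out slightly more of the bookkeeping than the paper does.
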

\begin{proof}
Apply the lemma.
Take $\mu\colon \Z_2^F \to A$ to be $(-1)^F \mapsto -1$.
Since $\Z_2^F \subseteq G$ is central, the second equation in the codomain of $f$ is redundant.
\end{proof}

The following shows in particular that for a compact fermionic group, the fermionic group algebra is independent of the choice of representative $\nu$.

\begin{proposition}
\label{prop: fermionic group alg}
For a compact fermionic group $(G,(-1)^F,\theta)$ the fermionic group $C^*$-algebra is isomorphic as a real $C^*$-algebra to
\[
\bigoplus_{\substack{(V,\rho) \text{ f.d. irrep of } G \\ \text{s.t. } \rho((-1)^F) = -1}} \End V.
\]
It becomes $\Z_2$-graded by requiring $\rho(g)$ to be even/odd depending on whether $\theta(g) = 0$ or $\theta(g) = 1$.
\end{proposition}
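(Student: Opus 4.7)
The plan is to show that both $C^*_f(G) = \R \rtimes_\nu G_b$ and the target direct sum $B$ represent the same functor of continuous unitary representations of $G$ sending $(-1)^F$ to $-1$, and then extract an explicit $*$-isomorphism from the real Peter--Weyl theorem.

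First I would unpack the universal property of $C^*_f(G)$ from Theorem~\ref{th:univprop}: a nondegenerate $*$-homomorphism $C^*_f(G) \to M(A)$ is the same data as a covariant homomorphism of $(G_b, \R, \mathrm{id}, \nu)$ into $M(A)$. Since $\alpha = \mathrm{id}_\R$, the $\R$-part is the canonical scalar inclusion, so the remaining data is a Borel measurable $\nu$-twisted unitary representation $v : G_b \to UM(A)$. Applying the Corollary of Lemma~\ref{lemma: bijection} with $N = \Z_2^F$ and $\mu((-1)^F) = -1$ rewrites this as a Borel measurable group homomorphism $u : G \to UM(A)$ with $u((-1)^F) = -1$; for $G$ compact Polish, automatic continuity promotes this to a continuous unitary representation. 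By the classical real Peter--Weyl theorem (Theorem~\ref{th:PeterWeyl}) decomposing $C^*(G) \cong \bigoplus_\sigma \End V_\sigma$, and Schur's lemma forcing $\sigma((-1)^F) = \pm 1$ on each irreducible summand, the subsum $B$ over those $\sigma$ with $\sigma((-1)^F) = -1$ is precisely the quotient of $C^*(G)$ by the closed ideal generated by $1 + (-1)^F$. Hence $B$ represents the same functor as $C^*_f(G)$.

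To make the resulting canonical isomorphism $\Phi : C^*_f(G) \xrightarrow{\sim} B$ explicit, I would apply the universal property of $C^*_f(G)$ to the tautological family $g \mapsto (\rho(g))_\rho \in UM(B) = \prod_\rho U(V_\rho)$ and verify that $\Phi$ is a $*$-isomorphism: surjectivity on each Peter--Weyl summand $\End V_\rho$ follows from Jacobson density applied to the simple real $\R[G]$-module $V_\rho$, and injectivity follows from the separating property of the family $\{\rho\}$. Compatibility with the $\Z_2$-grading is immediate since $\R \rtimes_\nu G_b$ is graded by $g \mapsto (-1)^{\theta(g)}$. The main obstacle I expect is the technical interplay between the Borel measurable covariant-homomorphism framework and continuous unitary representations of $G$: one must carefully check that the bijection in Lemma~\ref{lemma: bijection} respects the operator-norm topology on $UM(A)$ (not merely a set-theoretic bijection), and invoke automatic continuity results to bridge Borel measurability with the continuity needed for Peter--Weyl to apply.
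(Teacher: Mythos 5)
Your proposal follows essentially the same route as the paper's proof: both reduce to the universal property of the twisted crossed product, use Lemma \ref{lemma: bijection} to trade $\nu$-twisted Borel maps on $G_b$ for genuine homomorphisms $u\colon G \to UM(C)$ with $u((-1)^F)=-1$, and then invoke the Peter--Weyl decomposition of $C^*(G)$ to see that the subsum over irreducibles with $\rho((-1)^F)=-1$ represents the same functor. The one caveat is that the universal property quantifies over Borel measurable maps into $UM(C)$ for arbitrary $C^*$-algebras $C$ (where norm-automatic continuity is unavailable since $UM(C)$ need not be Polish in norm), so the appeal to automatic continuity is neither needed nor quite correct as stated --- the paper's argument works directly with the Borel data and the universal property of $C^*(G)$ throughout.
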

\begin{proof}
We use the universal property \ref{th:univprop} to construct the desired isomorphism.
We therefore first need a covariant homomorphism from the twisted $C^*$-dynamical system $(\R, G_b, \alpha = 1, \nu)$ to the multiplier algebra $M(C^*_f(G))$.
We first construct an explicit model of $M(C^*_f(G))$ as follows.
Let
\[
\mathcal{H} := \bigoplus_{\substack{(V,\rho) \text{ f.d. irrep of } G}} V \qquad \mathcal{H}' := \bigoplus_{\substack{(V,\rho) \text{ f.d. irrep of } G \\ \text{s.t. } \rho((-1)^F) = -1}} V.
\]
be the direct sum of real Hilbert spaces where we have given the irreducible representations compatible orthogonal complex or quaternionic structures depending on the representation type.
It follows by the Peter--Weyl theorem that $C^*(G) \subseteq B(\mathcal{H})$ and $C^*_f(G) \subseteq B(\mathcal{H}')$ nondegenerately.
Therefore the multiplier algebras can be identified with the collection of multipliers in $B(\mathcal{H})$ respectively $B(\mathcal{H}')$ giving
\begin{align*}
M(C^*(G)) = \{T \in B(\mathcal{H}) &\text{ structure-preserving}: T(V) \subseteq V \, \forall \text{ f.d. irreps}\}    
\\
M(C^*_f(G)) = \{T \in B(\mathcal{H}')  &\text{ structure-preserving}: 
\\
&T(V) \subseteq V \, \forall \text{ f.d. irreps with } \rho((-1)^F) = -1\},
\end{align*}
where by structure-preserving we mean that $T|_V$ is required to commute with the complex or quaternionic structures on $V$ if $V$ is of complex respectively quaternionic type. 

We take the covariant homomorphism $(\R, G_b, \alpha = 1, \nu) \to M(C^*_f(G))$ to be the nondegenerate $*$-algebra map $\R \to M(C^*_f(G))$ given by including the unit and the map $v\colon G_b \to M(C^*_f(G))$ is defined as the composition
\[
G_b \overset{s}{\to} G \overset{j}{\to} UM(C^*(G)) \overset{p}{\to} UM(C^*_f(G)).
\]
Here $j(g) := \oplus_{(V,\rho) \text{ f.d. irrep }} \rho(g) \in M(C^*(G))$ is Borel, lands in unitaries because all representations are unitary and $p(T) := T|_{\mathcal{H}'}$ is a $*$-algebra map.
By definition of the twists and the fact that $j$ is multiplicative in $G$, we have 
\[
j(s(gh)) = j(\nu(g,h)) j(s(g)) j(s(h)).
\]
Also $p j((-1)^F) = -1$ and so we are done showing that this defines a covariant homomorphism.
The induced map of $C^*$-algebras respects the $\Z_2$-gradings by construction.

We now show that this covariant homomorphism satisfies the universal property \ref{th:univprop}.
So let $(\pi',v')\colon (\R, G_b, \alpha = 1, \nu) \to C$ be another covariant homomorphism.
Since $\pi'\colon \R \to M(C)$ is a nondegenerate algebra map, it must be the inclusion of $1$.
On the other hand $v'\colon G_b \to M(C)$ is an arbitrary Borel map such that $v'(g) v'(h) = \nu(g,h) v'(gh)$, where we abused notation and denoted the $\{\pm 1\} \subseteq M(C)$-valued cocycle corresponding to the $\Z_2^F \subseteq G$-valued cocycle again by $\nu$.
Let $u\colon G \to M(C)$ be the associated Borel map with $u(g h) = u(g) u(h)$ and $u((-1)^F) = -1$ given by the bijection of Lemma \ref{lemma: bijection}.
Using the universal property of $C^*(G)$, this gives a nondegenerate map $\psi\colon C^*(G) \to M(C)$ which extends to $\psi\colon M(C^*(G)) \to M(C)$ with the property that it agrees with $u$ on $G \subseteq M(C^*(G))$.
In particular for $g = (-1)^F$ we get that the element of $M(C^*(G))$ which is $-1$ on $\mathcal{H}'$ and $1$ on the orthogonal complement is mapped to $-1 \in M(C)$ by $\psi$ (note that because $(-1)^F$ is central and of square one, we either have $\rho((-1)^F) = 1$ or $\rho((-1)^F)$ if $\rho$ is irreducible).
So $\psi$ maps the projection operator $P$ onto $(\mathcal{H}')^\perp$ to zero:
\[
\psi(P) = \frac{\psi((-1)^F) + \psi(1)}{2} =0.
\]
Now if $T \in C^*(G)$ is in the kernel of the projection to $C^*_f(G)$, then $T|_{\mathcal{H}'} = 0$ and so $\psi(T) = \psi(P) \psi(T) = 0$.
We conclude that $\psi$ factors through a map $\hat{\psi}\colon C^*_f(G) \to M(C)$.
We are done once we show that $\hat{\psi}$ agrees with the map $v'\colon G_b \to M(C)$ on the image of the map $v\colon G_b \to M(C^*_f(G))$ constructed in the first part of the proof.
But by construction of the map $f$ of Lemma \ref{lemma: bijection}, $u$ agrees with $v'$ on the image of $s\colon G_b \to G$.
Therefore because $\hat{\psi}$ agrees with $u$ on $G \subseteq M(C^*(G))$, the diagram
\[
\begin{tikzcd}
G_b \arrow[r, "v"] \arrow[dr, "v'"] & M(C^*_f(G)) \arrow[d,"\hat{\psi}"]
\\
& M(C)
\end{tikzcd}
\]
commutes.

We now have to show that $\hat{\psi}\colon UM(C^*_f(G)) \to M(C)$ is the unique map with this property. 
So let $\hat{\psi}'\colon UM(C^*_f(G)) \to M(C)$ be another map which agrees with $v'$ on the image of $v$.
Restricting gives a map $\psi'\colon UM(C^*(G)) \to M(C)$ and a map $u'\colon G \to M(C)$.
Notice that since $\hat{\psi}'$ exists, $\psi'(P) = 0$ and so 
\[
u'((-1)^F) = \psi' j((-1)^F) =  -1.
\]
Since it also satisfies $u'(g_1 g_2) = u'(g_1) u' (g_2)$ and $u'(s(g)) = v'(g)$, we get $u' = u$.
By the universal property of $C^*(G)$, we have that $\psi\colon UM(C^*(G)) \to M(C)$ is the unique extension of $u$ and so $\psi' = \psi$.
By surjectivity of $p$ also $\hat{\psi}' = \hat{\psi}$.

This ends the proof.
\end{proof}

To prove the next theorem, we need the following direct consequence of Lemma \ref{lemma: bijection}.

\begin{lemma}
Let $A$ be a unital real $C^*$-algebra and let $\pi\colon \C \hookrightarrow A$ be a subalgebra isomorphic to the real $C^*$-algebra of complex numbers. 
Let $\mu\colon U(1)_Q \to \C \overset{\pi}{\to} A$ be the obvious composition.
There is a bijection
\begin{align*}
    \{u\colon G^\tau \to A: u(g_1 g_2) &= u(g_1) u(g_2), u|_{U(1)_Q} = \mu\}
    \\
    &\downarrow f
    \\
    \{ v\colon H \to A: v(h_1) v(h_2) &= \mu \tau(h_1,h_2), z^{\phi(h)} v(h)  = v(h) z, h \in H, z \in U(1)\}
\end{align*}
\end{lemma}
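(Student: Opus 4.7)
The plan is to deduce this lemma as a direct corollary of Lemma \ref{lemma: bijection}, applied to the extension $1 \to U(1)_Q \to G^\tau \to H \to 1$ with $N = U(1)_Q$ and with the map $\mu\colon U(1)_Q \to A$ given by restriction of $\pi\colon \C \hookrightarrow A$. The first sanity check is that $\mu$ is indeed a group homomorphism, which is immediate since $\pi$ is a $*$-algebra map and $U(1)_Q$ sits inside $\C^\times$. The Borel section $s\colon H \to G^\tau$ and associated $\Z^2_m(H, U(1)_Q)$-cocycle are what we have been calling $\tau$ throughout Section \ref{sec:charged}.

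Lemma \ref{lemma: bijection} then immediately gives a bijection $f$ between multiplicative extensions $u\colon G^\tau \to A$ of $\mu$ and Borel maps $v\colon H \to A$ subject to two conditions: (i) the cocycle condition $v(h_1 h_2) = \mu\tau(h_1, h_2) v(h_1) v(h_2)$, and (ii) the twisted centrality condition $\mu(s(h) z s(h)^{-1}) v(h) = v(h) \mu(z)$ for all $z \in U(1)_Q$. Condition (i) is already the cocycle condition appearing in the codomain of $f$ in our lemma (with a harmless abbreviation of $v(h_1 h_2)$ matching the display).

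The only real content is translating condition (ii) into the stated commutation relation. Here one uses the defining property of a Freed--Moore group: for any $g \in G^\tau$ and $z = e^{i\alpha Q} \in U(1)_Q$ one has $g z = z^{(-1)^{\phi(g)}} g$, i.e.\ $s(h) z s(h)^{-1} = z^{(-1)^{\phi(h)}}$. Substituting this and using that $\mu$ factors through the subalgebra $\pi(\C) \subseteq A$ (so in particular $\mu(\bar z) = \overline{\mu(z)}$ acts as complex conjugation inside $A$), condition (ii) becomes $z^{(-1)^{\phi(h)}} v(h) = v(h) z$ after identifying $z$ with $\mu(z)$. With the convention that $z^{\phi(h)}$ denotes the action of $\phi(h) \in \Z_2 = \Aut U(1)$ on $z$, this is exactly the relation appearing in the lemma.

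No genuine obstacle arises; the argument is a translation of the abstract extension-theoretic bijection of Lemma \ref{lemma: bijection} into the concrete language of Freed--Moore groups. The only mild subtlety worth flagging in the writeup is the notational identification $z^{\phi(h)} \leftrightarrow z^{(-1)^{\phi(h)}}$ between the action of $\Z_2$ on $U(1)$ as inversion and the conjugation action $g\cdot z = gzg^{-1}$ in $G^\tau$; preservation of Borel measurability under $f$ is inherited from Lemma \ref{lemma: bijection}.
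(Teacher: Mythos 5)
Your proposal is correct and matches the paper exactly: the paper presents this lemma as a ``direct consequence of Lemma~\ref{lemma: bijection}'' applied to the extension $1 \to U(1)_Q \to G^\tau \to H \to 1$ with $\mu$ the restriction of $\pi$, and your translation of the twisted-centrality condition $\mu(s(h) z s(h)^{-1}) v(h) = v(h)\mu(z)$ into $z^{\phi(h)} v(h) = v(h) z$ via the Freed--Moore commutation relation is precisely the intended (unwritten) verification.
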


\begin{proof}[Proof of Theorem \ref{th:unit charge fd}] 
In similar spirit to Proposition \ref{prop: fermionic group alg}, the strategy is to apply Theorem \ref{th:univprop}.
So we have to construct a covariant homomorphism $(\C, H, \alpha, \tau) \to M(C^*_{uc}(G^\tau))$.
First note that if we set
\[
\mathcal{H} := \bigoplus_{(V,\rho) \text{ f.d. irrep of } G^\tau} V
\quad \mathcal{H}' := \bigoplus_{\substack{(V,\rho) \text{ f.d. irrep of } G^\tau \\ \text{of unit charge}}} V
\]
to be the orthogonal direct sum of real Hilbert spaces, then 
\begin{align*}
M(C^*(G)) = \{T \in B(\mathcal{H}) &\text{ structure-preserving}: T(V) \subseteq V \, \forall \text{ f.d. irreps}\}    
\\
M(C^*_{uc}(G)) = \{T \in B(\mathcal{H}')  &\text{ structure-preserving}: 
\\
&T(V) \subseteq V \, \forall \text{ f.d. irreps with unit charge}\}.
\end{align*}
Now let $\pi\colon \C \to M(C^*_{uc}(G^\tau))$ be the map of real $C^*$-algebras uniquely defined by $\pi(1) = 1$ and
\[
\pi(i) = \bigoplus_{\substack{\text{unit charge} \\ \text{f.d. irrep } \rho}} \rho(iQ)
\]
Because unit charge representations have odd charge, $\pi(i)^2 = -1$.
It is a $*$-homomorphism because all $\rho(iQ)$ are unitary.
The homomorphism is clearly nondegenerate.

Define a map $v\colon H \to M(C^*_{uc}(G^\tau))$ by the composition
\[
H \overset{s}{\to} G^\tau \overset{j}{\to} M(C^*(G^\tau)) \overset{\sigma}{\to} M(C^*_{uc}(G^\tau))
\]
with $j(g) = \oplus_{(V,\rho) \text{ irrep of }G^\tau} \rho(g)$ and $\sigma$ the projection map.
We have to show that $v(h_1) v(h_2) = \pi(\tau(h_1,h_2)) v(h_1 h_2)$.
Note that
\begin{align*}
    j(s(h_1) j(s(h_2)) = j(s(h_1) s(h_2)) = j(\tau(h_1,h_2) s(h_1 h_2) = j(\tau(h_1,h_2)) j( s(h_1 h_2))
\end{align*}
so it suffices to show that $\sigma j(e^{iaQ}) = \pi(e^{ia})$ for all $e^{iaQ} \in U(1)_Q$.
We compute
\begin{align*}
    \sigma j(e^{iaQ}) &= \sigma\left( \bigoplus_{(V,\rho) \text{ f.d. irrep of }G^\tau} \rho(iaQ) \right) 
    \\
    &= \sigma\left( \bigoplus_{n \geq 0} \bigoplus_{\substack{(V,\rho) \text{ f.d. irrep of }G^\tau \\ \text{ of charge } n}} \cos (na) + \sin (na) \rho(iQ) \right) 
    \\
    &= \bigoplus_{\substack{(V,\rho) \text{ f.d. irrep of }G^\tau \\ \text{ of charge } 1}} \cos (a) + \sin (a) \rho(iQ)
    \\
    &= \pi(\cos a + i \sin a)
\end{align*}
as desired.
The final part in showing that $(\pi,v)$ defines a covariant homomorphism is to compute the commutation relation
\begin{align*}
    v(h) \pi(e^{iaQ}) &= \sigma(j(s(h)) \cdot \bigoplus_{(V,\rho) \text{ f.d. irrep of }G^\tau} \rho(e^{iaQ})
    \\
    &= \sigma j(s(h) e^{iaQ}) = \sigma j(e^{\phi(h) iaQ} s(h)) = \pi(e^{\phi(h) ia}) v(h).
\end{align*}
The resulting $*$-homomorphism respects the $\Z_2$-grading.

To finish the proof of the theorem, we have to show the universality condition.
So let $(\pi',v')\colon (\C,H,\alpha,\tau) \to M(C)$ be any covariant homomorphism.
We have to show that there exists a unique $\hat{\psi}\colon C^*_{uc}(G^\tau) \to M(C)$ such that $\hat{\psi} \pi = \pi'$ and $\hat{\psi} v = v'$.

We first show $\hat{\psi}$ is unique.
Define $u$ and $\psi$ by the diagram
\[
\begin{tikzcd}
H \arrow[r,"s"] \arrow[dr, "v'"] & G^\tau \arrow[r,"j"] \arrow[d, "u"] & UM(C^*(G^\tau)) \arrow[r, "\sigma"] \arrow[dl, "\psi"] & UM(C^*_{uc}(G^\tau)) \arrow[dll, bend left, "\hat{\psi}"]
\\
& M(C) & &
\end{tikzcd}
\]
We show that $u|_{U(1)_Q} = \pi$.
Once this is done, $u$ is uniquely determined by applying Lemma \ref{lemma: bijection} in the case where $N = U(1), \mu = \pi$.
Then $\psi$ is uniquely determined by the universal property of $C^*(G^\tau)$ and hence $\hat{\psi}$ is uniquely determined by surjectivity of $\sigma$.
Therefore we compute 
\begin{align*}
    u(e^{iaQ}) &= \psi\left( \bigoplus_{\rho \text{ f.d. irrep}} \rho(e^{aiQ}) \right)
    = \hat{\psi}\left( \bigoplus_{\substack{\rho \text{ f.d. irrep} \\ \text{ of unit charge}}} \rho(e^{aiQ}) \right)
    \\
    &= \hat{\psi}\left( \bigoplus_{\substack{\rho \text{ f.d. irrep} \\ \text{ of unit charge}}} \cos a + \rho(iQ) \sin a \right)
    \\
    &= \cos a + \sin a \hat{\psi} \pi(i)
    \\
    &= \cos a + \sin a \pi'(i)
    \\
    &= \pi'(e^{ia}).
\end{align*}
We conclude $\hat{\psi}$ is unique.

By Lemma \ref{lemma: bijection} we know $v$ exists and by the universal property of $C^*(G^\tau)$ we know $\psi$ exists.
So to show existence of $\hat{\psi}$ we only have to show that $\psi$ factors through the projection $\sigma$, i.e. that it vanishes on irreducible representations that are not of unit charge.

Let $p$ be a prime. 
Let $P_j \in M(C^*(G))$ for $j \in \Z_p$ denote the projections onto
\[
\bigoplus_{\substack{(\rho,V) \text{ f.d. irrep} \\ \text{ of charge } j \mod p}} V.
\]
It suffices to show that $\psi(P_j) = 0$ if $j \neq 0$. Indeed, let $(V,\rho)$ be an irreducible representation of charge $n$ and $a \in \End V \cap M(C^*(G))$. 
For a prime $p$ not dividing $n$, we have $P_n a = a$ and $n \neq 0 \mod p$ so that $\psi(a) = \psi(P_n) \psi(a) = 0$.

To show that $\psi(P_j) = 0$ if $j \neq 0$, note that for $a = 2 \pi i j / p$
\begin{align*}
    \pi'(e^{iaQ}) &=
    \cos\left( 2 \pi i\frac{j}{p}\right) + \pi'(i) \sin\left( 2 \pi i \frac{j}{p}\right) 
    \\
    &=  
    \psi\left( \bigoplus_{n = 1}^p \bigoplus_{\substack{(\rho,V) \text{ f.d. irrep} \\ \text{ of charge } n \pmod p}} \cos\left( 2 \pi i\frac{ j n}{ p}\right) + \rho(iQ) \sin\left(2 \pi i\frac{ jn}{ p}\right) \right)
    \\
    &= \sum_{n=1}^p \left(\cos\left( 2 \pi i\frac{ j n}{ p}\right) + \pi'(i) \sin\left(2 \pi i\frac{ jn}{ p}\right) \right) \psi(P_n)
\end{align*}
By nondegeneracy $\psi$ is unital and so we can write $1 = \psi(P_1) + \dots + \psi(P_p)$.
We arrive at a system of $p$ linear equations of the form $Ax = 0$ in the variables $x_j := \psi(P_j)$ given by
\[
A_{jn} = 1 - e^{2 \pi i \frac{jn}{p}}.
\]
Since we are not interested in $x_1$ we get rid of the identically zero columns and rows $j = 1$ and $n = 1$ to arrive at a square matrix of size $(p-1)$ with coefficients in $\C \subseteq M(C)$.
This matrix is invertible over $\C$; if $x_2, \dots, x_p \in \C$ is a solution of $Ax = 0$, then the polynomial
\[
\sum_{n=2}^{p}(y^n-1) x_n
\]
in $y$ is of degree at most $p-1$ and has at least $p$ distinct roots.
Hence $x_2 = \dots = x_p = 0$ and we conclude that $\psi(P_j) = 0$ for all $j \in \Z_p \setminus 0$ as desired.
\end{proof}

\bibliography{biblio.bib}{}
\bibliographystyle{plain}

\end{document}